\documentclass{amsart}

\usepackage{amsmath,amsthm,amsfonts,eucal,eufrak}

\usepackage{latexsym}
\usepackage{amssymb}

\usepackage[active]{srcltx}
\usepackage[dvips]{epsfig}

\newcommand{\cA}{{\mathcal{A}}}

\def\bm{\big |}

\def\bn{\big \|}

\newcommand{\nn}{\nonumber}

\newcommand{\cB}{{\mathcal{B}}}

\newcommand{\cG}{{\mathcal{G}}}

\newcommand{\cM}{{\mathcal{M}}}

\newcommand{\cF}{{\mathcal{F}}}

\newcommand{\cX}{{\mathcal{X}}}

\newcommand{\IR}{{\mathbb{R}}}

\def\beeq{\begin{equation}}
\def\eneq{\end{equation}}

\newcommand{\be}{\begin{eqnarray}}
\newcommand{\ee}{\end{eqnarray}}

\newcommand{\mes}{{\rm mes}}

\newcommand{\Var}{{\mathrm{Var}}}

\newcommand{\pr}{\mathbb{P}}
\newcommand{\expc}{\mathbb{E}}

\newtheorem{theorem}{Theorem}[section]
\newtheorem{lemma}[theorem]{Lemma}
\newtheorem{cor}[theorem]{Corollary}
\newtheorem{prop}[theorem]{Proposition}

\theoremstyle{definition}
\newtheorem{defi}[theorem]{Definition}

\theoremstyle{remark}
\newtheorem{remark}[theorem]{Remark}

\newtheorem*{thma}{Theorem A}
\newtheorem*{thmb}{Theorem B}
\newtheorem*{thmc}{Theorem C}

\numberwithin{equation}{section}

\begin{document}
\title[Fluctuations and Localization Length for Random Band GOE Matrix.]{Fluctuations and Localization Length for Random Band GOE Matrix.}
\author{Michael Goldstein }

\address{Dept.\ of Mathematics, University of Toronto, Toronto, Ontario, Canada M5S 1A1}

\email{gold@math.toronto.edu}
\thanks{Supported in part by  NSERC}

\date{}

\begin{abstract} We prove that GOE random band matrix localization length is $\le C\left(\log W\right)^3 W^2$, where $W$ is the width of the band and $C$ is an absolute constant.
Our method consists of Green function edge-to-edge vector action approach to the Schenker method. That allows to split and decouple the action, so that
it becomes transparent that \emph{the magnitudes of two consecutive Schur complements vector actions can not be both larger than an absolute constant}. That is the central technological ingedient of the method. It comes from rather involved estimates $($ the main estimates of the metod $)$, in combination with an equation relating two magnitudes in question. We call the latter \emph{recurrence equation}. The method results in the \emph{lower bound of the variance of the $\log$--norm of the vector action at $\gtrsim NW^{-1}$}, where $N$ is the total number of GOE blocks, condition $N\lesssim W^D$ with an absolute constant $D\gg 1$ applies.
\end{abstract}

\maketitle
\tableofcontents

\section{Introduction}
We consider the following random band matrix
\begin{equation}
\label{eq:gdef1}
H_{[1,N]}=\begin{bmatrix}
V_1 & T_1 & 0 & \cdots & \cdots & 0\\
T_1^t & V_2 & T_2 & 0&\cdots & 0\\
0 & T^t_2 & V_3 & T_3&0\cdots & 0\\
\cdots&\cdots&\cdots&\cdots&\cdots&\cdots\\
\cdots&\cdots&\cdots&\cdots&\cdots&\cdots\\
0 & \cdots & \cdots &0 & T^t_{N-1} & V_N\end{bmatrix},
\end{equation}
Here $V_k,T_\ell$ are independent $W\times W$ random matrices. The matrices $V_k = (v_{k, p, q})_{1 \leq p, q \leq W}$ are real symmetric from the GOE, i.e.
\be
\nn d\pr\big(V_k\big) = \prod_{1 \leq p\leq W}\sqrt{\frac{W}{2\pi}}e^{-\frac{Wv_{k,p,p}^2}{4}}dv_{k,p, p} \prod_{1 \leq p< q \leq W}\sqrt{\frac{W}{2\pi}}e^{-\frac{Wv_{k,p, q}^2}{2}}dv_{k,p, q}:=\\
\label{eq:GOEmatrix}\\
\nn \phi(V_k)dV_k,\qquad\qquad\qquad\qquad\qquad\qquad\qquad
\ee
where
\[
dV=\prod_{1 \leq p\le q \leq W}dv_{p, q}
\]
stands for the Lebesgue measure in $\IR^{{W(W+1)\over 2}}$, as we identify every real symmetric matrix $V=(v_{p, q})_{1 \leq p,q \leq W}$ with the vector $(v_{p, q})_{1 \leq p\le q \leq W}\in\IR^{{W(W+1)\over 2}}$.
The matrices $T_k = (\xi_{k, p, q})_{1 \leq p, q \leq W}$ are $W\times W$ real random Gaussian non-symmetric, i.e.
\be
d\pr\big(T_k\big) =\prod_{1 \leq p,q \leq W}\sqrt{\frac{W}{2\pi}}e^{-\frac{W\xi_{k,p, q}^2}{2}}d\xi_{k,p, q}:=\rho(T_k)dT_k
\label{eq:Gaussianonsym}
\ee
where
\[
dT=\prod_{1 \leq p,q \leq W}d\xi_{p, q}
\]
stands for the Lebesgue measure in $\IR^{W^2}$, as we identify every real matrix $T=(\xi_{p, q})_{1 \leq p,q \leq W}$ with the vector $(\xi_{p, q})_{1 \leq p, q \leq W}\in\IR^{W^2}$,
\begin{equation}
\label{eq:f_indep}
d\mathbb{P}_{[1, N]}{\big(V_1,V_2, \ldots,T_1,T_2,\dots\big)} = \prod_{k=1}^{N}d\pr\big(V_k\big) \prod_{k=1}^{N-1}d\pr\big(T_k\big)
\end{equation}
We use also the equations
\[
d\pr\big(V\big)=\mu_W e^{-{W\over 4}Tr V^2}dV,
\]
\[
d\pr\big(T\big)=\nu_W e^{-{W\over 2}Tr T^tT}dT,
\]
where $\mu_W,\nu_W$ are the normalizing factors.

Fix $E\in\IR$. Denote $G_{[1,N]} = \big[H_{[1, N]}-E\big]^{-1}$ the Green function. Denote $\cM_W$ the space of all $W\times W$ matrices. For $1\le m,n\le N$ introduce also the following notations
\[G_{[1, N]}(m; n):= \left(G_{[1, N]}(p, q)\right)_{(m-1)W+1\leq p\leq mW; (n-1)W+1\leq q \leq nW}\in \cM_W
\]
We call $G_{[1, N]}(1; N)$ \emph{Green function edge--to--edge matrix}.
\begin{thma}\label{thm:A} There exists a large absolute constant $D_0\gg 1$ such that for $N\ge W^{D_0}$ the following statement holds. There exists a set $\cB_N$ such that
\[
\pr\big[\cB_N\big]\le \exp\left[-c_0{N\over W^{D_0}}\right]
\]
where $c_0>0$ is an absolute constant, and off this set set 
\be\label{eq:LexponentTHMA}
\bn G_{[1,N]}(1;N)\bn<e^{-{cN\over W\left(\log W\right)^3}},
\ee
where $c>0$ is an absolute constant.
\end{thma}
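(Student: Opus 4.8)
The strategy is to track the Green function edge-to-edge matrix $G_{[1,N]}(1;N)$ through a sequence of Schur-complement reductions, block by block. Writing $H_{[1,N]}$ in the block-tridiagonal form of \eqref{eq:gdef1}, one removes the blocks $V_1,\dots,V_{N-1}$ one at a time, each removal contributing a factor built from the Schur complement $S_k = V_k - E - T_{k-1}^t G_{[1,k-1]}(k-1;k-1) T_{k-1}$ (with the obvious convention at $k=1$), so that the edge-to-edge matrix factors as a telescoping product
\[
G_{[1,N]}(1;N) = \prod_{k=1}^{N-1} \Big( -\, A_k \Big) \cdot G_{[k,N]}(k;N)\big|_{\text{reduced}},
\]
where each $A_k \in \cM_W$ is of the form (something like) $G_{[1,k]}(1;k)\, T_k\, S_{k+1}^{-1}$ in the appropriate normalization. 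The point of the "edge-to-edge vector action" formulation is to fix a unit test vector $u$ at the right edge and follow the vector $w_k := A_k A_{k+1}\cdots A_{N-1} u$ back toward the left edge, controlling $\log\|w_k\|$ as a function of $k$; the quantity $\|G_{[1,N]}(1;N)\| $ is then bounded by $\sup_{\|u\|=1}\|w_1\|$ up to the norm of the final reduced block, which is itself controlled by resolvent bounds on a single GOE block.

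**Key steps, in order.** First I would set up the Schur-complement recursion rigorously, recording the law of each $S_k$ conditionally on the sigma-algebra generated by $V_1,\dots,V_{k-1},T_1,\dots,T_{k-1}$, and isolating the "magnitude" $\mu_k := \log\|w_k\| - \log\|w_{k+1}\|$ of the $k$-th vector action. Second, I would prove the \emph{main estimates of the method}: a quantitative statement that, conditionally on the past, the pair $(\mu_k,\mu_{k+1})$ is, with probability $1 - W^{-cD_0}$ or so, constrained to a region in which \emph{not both} $\mu_k$ and $\mu_{k+1}$ exceed an absolute constant $c_1$; the mechanism is that $S_{k+1}^{-1}$ acting on the relevant vector is small precisely when $T_k G_{[1,k]}(1;k)$ has just amplified, because the resolvent of a GOE-plus-rank-structure matrix concentrates and its "overlap" with a fixed direction is typically bounded below. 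Third, I would set up the \emph{recurrence equation} relating $\mu_k$ and $\mu_{k+1}$ — an identity (not an inequality) that expresses one magnitude in terms of the other via the Schur complement, so that the probabilistic "not both large" dichotomy can be iterated along consecutive pairs. Fourth, combining the dichotomy over $\lfloor (N-1)/2\rfloor$ disjoint consecutive pairs and a large-deviation / martingale argument, I would show that $\sum_{k=1}^{N-1}\mu_k \le -\, c' N / (W(\log W)^3)$ outside an event of probability $\le \exp[-c_0 N/W^{D_0}]$; the $(\log W)^3$ arises from the losses incurred in the main estimates when passing from block-resolvent concentration at scale $W$ to a uniform-in-direction bound, and from the number of reduction substeps one must chain to see a genuine contraction. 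Finally, I would absorb the contributions of the boundary block and of the exceptional sub-events (where individual $\|S_k^{-1}\|$ is atypically large, or a block resolvent fails to concentrate) into $\cB_N$, each such failure occurring with probability $\le e^{-cW^{c''}}$ per block and hence summing to the claimed bound once $N\ge W^{D_0}$.

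**Main obstacle.** The crux — and the hardest part — is the \emph{main estimates}: showing that the conditional law of $\mu_{k+1}$ given $\mu_k$ large is concentrated at a \emph{negative} value bounded away from $0$ by an absolute constant, uniformly over the (random, past-dependent) geometry of the vector $w_{k+2}$ and the matrix $G_{[1,k]}(1;k)$. This requires delicate control of $\big\| S_{k+1}^{-1}\, T_k\, v\big\|$ for a fixed unit vector $v$, where $S_{k+1} = V_{k+1} - E - T_k^t G_{[1,k]}(k;k) T_k$ has a GOE part $V_{k+1}$ independent of everything else but also a deterministic-given-the-past part that can be large and ill-conditioned; one needs anti-concentration for quadratic forms $\langle v, S_{k+1}^{-1} v'\rangle$ of a GOE resolvent with a shifted, rank-$W$-perturbed center, together with the decoupling that the vector-action formulation provides (the action "splits" because $T_k$ is independent of $G_{[1,k]}(1;k)$). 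Making this uniform in the direction $v$ — which is where a naive union bound over the $W$-sphere would cost $W$ and destroy the constant — is exactly what forces the logarithmic factors and is the technically heaviest portion of the argument. Everything downstream (the recurrence equation, the pair-wise iteration, the large-deviation sum, the bookkeeping of $\cB_N$) is comparatively routine once the main estimates are in hand.
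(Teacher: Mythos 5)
Your proposal is, in substance, a single-scale argument: set up the Schur-complement factorization, prove the ``not both consecutive vector actions large'' dichotomy via the recurrence equation, and then run a large-deviation/Bernstein argument over all $\lfloor(N-1)/2\rfloor$ consecutive pairs at once to get $\sum_k \mu_k \le -c'N/(W(\log W)^3)$ outside an event of probability $\exp[-c_0 N/W^{D_0}]$. That route does capture the paper's machinery for Theorems~B and~C, but it cannot prove Theorem~A as stated, and the paper itself flags exactly why: in Remark~1.1(2) it says that Theorem~B ``does have a version for arbitrary $N$,'' but ``it does not seem possible to fit Bernstein condition for the latter setting without loosing the sharpness of the estimate,'' and for this reason the derivation of Theorem~A goes through multi-scale analysis (Fr\"ohlich--Spencer), with Theorem~C supplying only the \emph{initial-scale} estimate for $W^2\le N\le W^{D_0}$ and the Wegner estimate \eqref{eq:Wegner0} providing the a priori bound.

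Concretely, the gap has two linked parts. First, the exceptional set produced by your step two (the ``main estimates'') includes the Wegner-type event $\widehat{\cB}$ with $\pr_N[\widehat{\cB}]\le CNW^{D-1}$ --- a bound that is only \emph{polynomially} small in $W$ and actually deteriorates linearly in $N$. That is why Theorem~C can only assert $\pr[\cB_N]\le N^{-C_0 D_0}$ on the restricted range $W^2\le N\le W^{D_0}$, whereas Theorem~A claims an \emph{exponentially} small bad set $\exp[-c_0 N/W^{D_0}]$ for $N\ge W^{D_0}$. No single application of the decoupling-plus-Bernstein scheme can turn a union over $N$ polynomially-bad block events into an exponentially small failure probability; you get the exponential gain only by partitioning $[1,N]$ into $\sim N/W^{D_0}$ subintervals, applying the initial-scale estimate (Theorem~C) to each, and then running the multiscale induction, which is precisely what the paper outsources to the standard Fr\"ohlich--Spencer argument. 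Second, and relatedly, your step four implicitly needs a uniform-in-$N$ Bernstein condition for the summands $\eta_{k,|}$; the paper's condition $|\eta_{k,|}|\le C\log W$ together with $\Var[\eta_{k,|}]\ge c/W$ is established only after conditioning on the good event $\left(U_2,\dots,U_N\right)\notin\widehat{\cB}$, and that conditioning is only affordable when $N\lesssim W^{D_0}$. So your argument, if carried out, reproduces Theorem~C; to reach Theorem~A you must add the multiscale bootstrap as a separate step, which your proposal omits entirely.
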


\begin{thmb}\label{thm:B}
There exists a large absolute constant $D_0\gg 1$ such that for $W^2\le N\le W^{D_0}$ the following statement holds. Take arbitrary unit vector $f\in\IR^W$. Then
\be\label{eq:vartheorem}
\Var\left[\log\bn G_{[1,N]}(1;N)f\bn\right]\ge {cN\over W}
\ee
where $c>0$ is an absolute constant.
\end{thmb}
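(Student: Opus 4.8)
The plan is to build the variance lower bound in \eqref{eq:vartheorem} from a martingale-difference (Efron--Stein / Doob) decomposition of $\log\bn G_{[1,N]}(1;N)f\bn$ with respect to the independent blocks $V_1,\dots,V_N,T_1,\dots,T_{N-1}$, and to show that a positive fraction of the block-increments contribute a variance of size $\gtrsim W^{-1}$ each, so that the total is $\gtrsim NW^{-1}$. Concretely, write $F=\log\bn G_{[1,N]}(1;N)f\bn$ and $F=\sum_{j}\Delta_j$, where $\Delta_j=\expc[F\mid \calF_j]-\expc[F\mid \calF_{j-1}]$ and $\calF_j$ is the $\sigma$-algebra generated by the first $j$ blocks in some fixed order; then $\Var F=\sum_j\expc[\Delta_j^2]$ by orthogonality of martingale differences. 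Thus it suffices to exhibit a set $J\subset\{1,\dots,N\}$ with $|J|\gtrsim N$ and, for $j\in J$, a uniform lower bound $\expc[\Delta_j^2]\gtrsim W^{-1}$.

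The mechanism producing that per-block lower bound is exactly the \emph{recurrence equation} and the central dichotomy advertised in the abstract: the vector action can be propagated edge-to-edge through Schur complements, and the log-norm increments satisfy a recursion in which two consecutive Schur-complement vector-action magnitudes cannot both be smaller (nor both larger) than an absolute constant. The plan is: (i) express the edge-to-edge vector action $G_{[1,N]}(1;N)f$ as a telescoping product of one-step Schur transfer operators, so that $F=\log\bn(\text{product})f\bn=\sum_k \log\bn u_{k+1}\bn - \log\bn u_k\bn$ along the propagated vectors $u_k$; (ii) using the main estimates of the method (the involved estimates the abstract refers to), show that conditioning on everything except block $k$ leaves the single-step increment $\log\bn u_{k+1}\bn-\log\bn u_k\bn$ genuinely fluctuating — i.e. its conditional variance, given $\calF_{k-1}$ and the later blocks, is $\gtrsim W^{-1}$ — precisely because the magnitude of the action at step $k$ cannot be pinned down to a constant-free neighborhood once the magnitude at step $k-1$ (or $k+1$) is fixed by the recurrence equation; (iii) relate this conditional-on-complement variance to $\expc[\Delta_k^2]$ via the standard fact that $\expc[\Delta_k^2]\ge \half\,\expc\big[\Var(F\mid \text{all blocks except }k)\big]$ up to constants (one direction of Efron--Stein), or alternatively work directly with the Doob increment in a one-block-at-a-time order and estimate it by a coupling/resampling of $V_k$ (resp.\ $T_k$). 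Along the way the Gaussianity of $V_k,T_k$ from \eqref{eq:GOEmatrix}--\eqref{eq:Gaussianonsym} is used to guarantee that the relevant one-block randomness is nondegenerate on the scale $W^{-1/2}$ in each entry, which is what feeds the $W^{-1}$ floor.

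The bookkeeping issue is that the naive telescoping picture only works where the Green function is well-conditioned; on a bad event some Schur complements are nearly singular and the increments are not controlled. So the plan is to restrict to a good set $\good$ of configurations — analogous to the complement of $\cB_N$ in Theorem~A but now for the regime $W^2\le N\le W^{D_0}$ — on which all intermediate Schur complements are invertible with norms comparable to a constant, the recurrence equation holds, and the per-step estimates are valid; one shows $\pr[\good]\ge 1-o(1)$ (here the constraint $N\le W^{D_0}$ is what keeps the exceptional probability from eating the variance) and then runs the martingale argument conditionally on $\good$, absorbing the truncation error because $F$ has, say, sub-exponential tails so that the contribution of the complement to both $\expc F$ and $\expc F^2$ is negligible compared to $NW^{-1}$.

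I expect the main obstacle to be step (ii): turning the qualitative ``two consecutive magnitudes cannot both be constant'' dichotomy into a \emph{quantitative} per-block lower bound of the right order $W^{-1}$, and in particular making sure that the fluctuations of the individual log-increments $\log\bn u_{k+1}\bn-\log\bn u_k\bn$ do not secretly cancel along the chain (which would destroy additivity of the variance). This is precisely where the orthogonality of the martingale differences is doing the essential work — it is what licenses summing the per-block contributions without worrying about cancellations — but one still has to verify that the chosen filtration order makes $\Delta_k$ capture a fixed fraction of the step-$k$ fluctuation rather than have it averaged away by the conditional expectation over the remaining blocks. Handling that correctly, together with propagating the good-set conditioning through the recurrence equation uniformly in $k$, is the technical heart of the proof of Theorem~B.
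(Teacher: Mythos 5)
Your proposal is built on a martingale-difference / Efron--Stein decomposition in the original block variables $V_1,\dots,V_N,T_1,\dots,T_{N-1}$, whereas the paper does not use a martingale at all: it first performs the Schenker change of variables (Propositions~\ref{prop:newvariables}, \ref{prop:condecoupling1}) and then splits $\log\bn G_{[1,N]}(1;N)f\bn$ into a sum of spherical \emph{radii} $\log r_{|n-1}$ (equation \eqref{eq:Greenedgetoedgesplitf}), which are \emph{conditionally independent} given the $U$-matrices and the spherical directions $\textsf x_{|n-1}$ (Propositions~\ref{prop:sitescondecoupling1}, \ref{prop:condecoupling2}). Under that conditioning the conditional variance of the sum is a literal sum of conditional variances, and Proposition~\ref{prop:superexplogvar} (the super-exponential density analysis, applied to pairs via the recurrence/dichotomy of Proposition~\ref{prop:condecoupling} part $(5)$) supplies the per-pair floor $\gtrsim W^{-1}$. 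The variance lower bound then follows by the law of total variance over a high-probability good set, which is where the constraint $N\lesssim W^{D_0}$ enters. So: same qualitative mechanism (recurrence + dichotomy), but a genuinely different decomposition.

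There is also a concrete flaw in the step you flagged as ``standard.'' The inequality
\begin{equation*}
\expc\left[\Delta_k^2\right]\ \ge\ \tfrac12\,\expc\!\left[\Var\!\left(F\mid \text{all blocks except }k\right)\right]
\end{equation*}
is false in general. Take $F=W_1W_2$ with independent standard Gaussians $W_1,W_2$ and $\calF_1=\sigma(W_1)$. Then $\Delta_1=\expc[F\mid W_1]-\expc[F]=0$, so $\expc[\Delta_1^2]=0$, while $\expc[\Var(F\mid W_2)]=\expc[W_2^2]=1$. Efron--Stein goes the other way: it \emph{upper}-bounds $\Var F$ by the sum of ``leave-one-out'' conditional variances. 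For a \emph{lower} bound you cannot transfer the leave-one-out fluctuation onto the Doob increment; the conditional expectation over the remaining blocks can annihilate exactly the fluctuation you wanted to capture. That is the essential reason the paper does not attempt a martingale decomposition: after integrating out the later $T$-blocks, there is no a priori reason $\Delta_k$ retains a $W^{-1}$ share of the step-$k$ randomness.

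Your alternative phrasing (``work directly with the Doob increment ... by a coupling/resampling of $V_k$'') avoids the false inequality but only restates the hard problem; you would still need to show $\expc[\Delta_k^2]\gtrsim W^{-1}$, and the only tool on the table for that is precisely the Schenker--spherical machinery the paper develops. The crucial structural input you are missing is that, after the change of variables and conditioning on all $U_n$'s and all spherical directions, the one-step Schur factor action collapses to a scalar radius $r_{|n-1}$ with an explicit density $r^{W-1}e^{-W(\textsf a^2r^4+\textsf b r^2+\textsf c r)}$ (Proposition~\ref{prop:condecoupling2} part $(1)$), the radii at distinct sites become \emph{independent} (not merely martingale-orthogonal), and the recurrence equation controls the coefficients $\textsf a,\textsf b,\textsf c$ so that the dichotomy of Proposition~\ref{prop:condecoupling} part $(5)$ guarantees a nondegenerate $W^{-1}$ log-variance for at least one of every two consecutive radii. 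Without exhibiting this conditional independence, the additivity you invoke is not available; without the explicit density, the $W^{-1}$ per-step floor is not available either.
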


\begin{remark}\label{rem:initialscale}
$(1)$ We would like to remark here on why we need more estimates on top of the variance in Theorem B. The reason for that is a trivial objection for such general conclusion. This objection is possible to eliminate for the setting in question. The objection is as follows. Take \emph{a non-positive random variable $\xi$}. Assume that $\Var \xi=\mathfrak S^2\gg 1$. It is \emph{not true, in general}, that
\begin{gather*}
\expc\left[e^\xi\right] \le 
e^{-c\mathfrak S^2},
\end{gather*}
with an absolute constant $c>0$. A trivial example of a random variable taking two values only
\begin{gather*}
\begin{cases}
\pr\left\{\xi=-\mathfrak S^{{3\over 2}}\right\}=p:=\mathfrak S^{-1},\\
\pr\left\{\xi=0\right\}=q:=1-p
\end{cases}
\end{gather*}
results
\begin{gather*}
\Var \xi=\left(\mathfrak S^2-\mathfrak S\right)\sim \mathfrak S^2,\\
\expc\left[e^\xi\right]=p e^{-\mathfrak S^{{3\over 2}}}
\end{gather*} 
The reason for this kind of outcome is \emph{the lack of a "classical" large deviation estimate} for $\xi$. The latter means that the estimate in question comes from the projection of one with $\xi=\sum_{1\le j\le N}\xi_k$ and with $\xi_k$ being independent, not necessarily equaly distributed, with "classical" Bernstein condition for the moments, see for instance ~\cite{Pe-75}:
\begin{gather*}
\left|\expc\left[\xi_k- \expc \xi_k\right]^{n} \right|\le n!
\sigma^2_k H^{n-2},\\
\sigma_k^2:=\Var \xi_k
\end{gather*}
with $H$ being the same for all $k$. As we have mentioned in the beginning of this remark, there is no such problem for the variable in question in the current work setup. In fact the random variable in Theorem B, can be conditioned so, that it is exactly as in  "classical" Bernstein condition and $H=C\log W$ with absolute constant $C$. That is exactly the \emph{junction where the localization lengths acquires the factor $\left(\log W\right)^3$}. 

\smallskip
$(2)$ Theorem $B$ \emph{does have a version for arbitrary $N$}. Unfortunately, \emph{it does not seem possible to fit Bernstein condition for the latter seting without loosing the sharpness of the estimate}.
This objection leads to application of \emph{multi-scale analysis method} from the random Schrodinger operators, introduced in the seminal work of Frohlich and Spencer, ~\cite{FS-83-Absence}. Verification reduces to validation of what is called in multiscale analysis \emph{initial scale estimates}. Once these estimates established and the Wegner estimate is known, the technology explicitely evaluates the localization length. Theorem C below states all the estimates needed for this objective. The statement of Theorem A is exactly the result of application of multiscale analysis with estimates of Theorem C and well-known Wegner estimate in the following form
\be\label{eq:Wegner0}
\expc \left[\left\|G_{[1,N]}\left(1;N\right)\right\|^t\right]\le C_1W^{C_2}, \quad\text{ for any $0<t<{1\over 2}$}
\ee
where $C_1,C_2$ are absolute constant. \emph{We do not discuss the derivation of Theorem A from Theorem C, for it is well-known to the experts how this works}.
\end{remark}
\begin{thmc}\label{thm:C} There exists a large absolute constant $D_0\gg 1$ such that for $W^2\le N\le W^{D_0}$ the following statement holds. There exists a set $\cB_N$ such that
\[
\pr\big[\cB_N\big]\le N^{-C_0D_0},
\]
where $C_0\gg 1$ is an absolute constant, and off this set set 
\be\label{eq:LexponentTHMC}
\bn G_{[1,N]}(1;N)\bn<e^{-{cN\over W\left(\log W\right)^3}}
\ee
\end{thmc}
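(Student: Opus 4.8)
The plan is to derive Theorem C from Theorem B by conditioning to recover a ``classical'' Bernstein moment condition and then invoking a genuine large deviation estimate, in the regime $W^2 \le N \le W^{D_0}$. First I would set up the telescoping that underlies the Schenker-type approach: writing $G_{[1,N]}(1;N)$ via the Schur complement / transfer recursion block by block, one expresses $\log\bn G_{[1,N]}(1;N)f\bn$ (for a fixed unit vector $f\in\IR^W$, which by the Wegner bound \eqref{eq:Wegner0} and a union bound over a net of $f$ costs only a polynomial factor) as a sum
\[
\log\bn G_{[1,N]}(1;N)f\bn = \sum_{k=1}^{N-1} \eta_k,
\]
where $\eta_k$ is the increment of the $\log$-norm of the vector action upon appending the $k$-th GOE block. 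The central technological ingredient announced in the abstract — that two consecutive Schur-complement vector actions cannot both have magnitude larger than an absolute constant, via the recurrence equation — is exactly what forces each $\eta_k$ (or rather $\eta_{2j-1}+\eta_{2j}$ on consecutive pairs) to be bounded above by $-c_1$ for some absolute $c_1>0$ outside an exceptional event, and to have increments that, after the appropriate conditioning, obey
\[
\bigl|\expc\bigl[(\eta_k - \expc\eta_k)^n \,\big|\, \mathcal F_{k-1}\bigr]\bigr| \le n!\,\sigma_k^2 H^{n-2},\qquad H = C\log W,
\]
with $\sigma_k^2 = \Var[\eta_k \mid \mathcal F_{k-1}] \gtrsim 1$ by Theorem B's mechanism (the conditional variance lower bound $\gtrsim N/W$ distributed over the $\sim N/W$ effective scales yields $\gtrsim 1$ per scale). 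This is the step I expect to be the main obstacle: verifying the Bernstein condition with $H = C\log W$ uniformly in $k$ requires the full strength of ``the main estimates of the method'' — the involved tail bounds on the Schur complement vector actions — not merely the variance, precisely as flagged in Remark \ref{rem:initialscale}(1).

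Given the Bernstein condition, I would apply the standard Bernstein/Bennett inequality for sums of conditionally independent increments (as in \cite{Pe-75}): with $m \sim N/W$ effective summands, $\sum_k \expc\eta_k \le -c_1 m$, and total conditional variance $\Sigma^2 = \sum_k \sigma_k^2 \le c_2 m H^2$, one gets
\[
\pr\Bigl[\log\bn G_{[1,N]}(1;N)f\bn > -\tfrac{c_1}{2} m\Bigr] \le \exp\Bigl[-c_3 \min\Bigl(\tfrac{(c_1 m)^2}{\Sigma^2},\ \tfrac{c_1 m}{H}\Bigr)\Bigr] \le \exp\Bigl[-\tfrac{c_4 m}{H^2}\Bigr] = \exp\Bigl[-\tfrac{c_4 N}{W(\log W)^3}\Bigr]
\]
after absorbing one factor $\log W$ from $H = C\log W$ into the exponent — indeed $m/H^2 \asymp N/(W(\log W)^2)$, and the remaining $\log W$ comes from the junction where, to make the conditioning legitimate on all scales simultaneously, one pays a further $\log W$ in the number of scales or in the size of $H$; this is exactly ``the junction where the localization length acquires the factor $(\log W)^3$.'' Since $m = cN/W$ with the above $c$, the bound $-\tfrac{c_1}{2}m = -c N/W \le -cN/(W(\log W)^3)$ gives precisely \eqref{eq:LexponentTHMC}.

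It remains to control the exceptional sets and to convert the fixed-$f$ statement into the operator-norm statement \eqref{eq:LexponentTHMC}. The bad event $\cB_N$ is the union of: the events on each pair of scales where the ``two consecutive magnitudes'' dichotomy fails (each of polynomially small probability by the main estimates, and there are $\sim N/W \le W^{D_0}$ of them, so total probability $\le W^{D_0}\cdot N^{-C_0'} \le N^{-C_0 D_0}$ after choosing $C_0$ large relative to the number of scales and using $N \ge W^2$); the event where the Bernstein large deviation above is violated, of probability $\exp[-c_4 N/(W(\log W)^3)]$, which for $N \ge W^2$ is far smaller than any negative power $N^{-C_0 D_0}$; and the union-bound contribution over a $\tfrac12$-net $\mathcal N$ of the unit sphere in $\IR^W$, of cardinality $\le 5^W$, combined with the Wegner bound \eqref{eq:Wegner0} via Markov: $\pr[\bn G_{[1,N]}(1;N)\bn > R] \le C_1 W^{C_2} R^{-t}$, which for $R = e^{cN/(W(\log W)^3)}$ and $N \ge W^2$ is again super-polynomially small, so a net argument $\bn G_{[1,N]}(1;N)\bn \le 2\max_{f\in\mathcal N}\bn G_{[1,N]}(1;N)f\bn$ upgrades the vector bound to the norm bound at the cost of the factor $5^W$, harmless against $\exp[-c_4 N/(W(\log W)^3)]$ when $N \ge W^2$. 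Collecting all contributions and adjusting the absolute constant $c$ in the exponent downward once more yields $\pr[\cB_N] \le N^{-C_0 D_0}$ and \eqref{eq:LexponentTHMC} off $\cB_N$, as claimed.
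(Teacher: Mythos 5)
Your proposal correctly identifies the structural shell of the argument --- telescoping the edge-to-edge vector action into a sum of increments $\eta_k$, conditioning to secure a Bernstein moment bound with $H = C\log W$, and controlling the conversion from a fixed-$f$ statement to an operator-norm statement via the Wegner estimate and a net. Those pieces are indeed the scaffolding the paper uses. However, the central step of your argument rests on a claim the paper neither makes nor needs, and which is in fact the sort of claim the whole method is designed to avoid.

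The claim is the drift bound: that each consecutive pair satisfies $\eta_{2j-1}+\eta_{2j}\le -c_1$ outside an exceptional event, i.e., that the conditional means of the increments are negative and bounded away from zero, so that $\sum_k\expc[\eta_k]\le -c_1 m$. This is a lower bound on the (conditional) Lyapunov exponent of order $\gtrsim 1/W$, and it is not established by anything in the paper. Theorem~B and the supporting Propositions~\ref{prop:condecoupling1}--\ref{prop:superexplogvar} deliver \emph{only} a lower bound on the \emph{variance} of $\log r_{|n-1}$, namely $\Var[\eta_{k,|}]\ge c/W$, together with the almost-sure bound $|\eta_{k,|}|\le C\log W$; the recurrence equation and the ``two consecutive magnitudes cannot both be large'' dichotomy are used to certify this variance bound (by pinning down the shape of the super-exponential radial density), not to sign the mean. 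If the negative-drift claim were available, you would in fact obtain the stronger bound $\|G_{[1,N]}(1;N)\|<e^{-cN/W}$ with no $(\log W)^3$ loss, which the paper conspicuously does not assert. So this step is a genuine gap, not a presentational choice. Relatedly, your derivation of the $(\log W)^3$ is internally inconsistent: you compute $m/H^2\asymp N/(W(\log W)^2)$ and then invoke ``a further $\log W$'' without a mechanism.

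The paper's actual route avoids the drift issue entirely. Lemma~\ref{lem:greenorlowmoments} shows that the Bernstein moment condition (with $H = C\log W$) forces the \emph{self-improving} inequality
$\bigl[\expc[e^{t\eta}]\bigr]^2\le\expc[e^{2t\eta}]\bigl(1-\tfrac{t^2\sigma^2}{4}\bigr)$
for $|t|\le cH^{-1}$. Summing over the conditionally independent $\eta_{k,|}$ (Corollary~\ref{cor:classicalexpest}) gives $\bigl[\expc[e^{t\textsf L}]\bigr]^2\le\expc[e^{2t\textsf L}]\,e^{-c_1N/(W(\log W)^2)}$, and the Wegner estimate \eqref{eq:Wegner} supplies the \emph{polynomial} bound $\expc[e^{2t\textsf L}] \le C_1W^{C_2}$ that controls the right-hand side without any drift information; Cauchy--Schwarz then collapses the conditional factors. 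Taking $t\asymp(\log W)^{-1}$ (forced by the Bernstein condition) and applying Markov to $\|G\|^t$ yields the $(\log W)^3$ in the localization length. This ``square versus double-exponent'' trick, together with Wegner, is what replaces the large-deviation-with-negative-drift argument you propose. To repair your proof you would need either to establish the drift bound directly (a substantial additional result) or to replace the Bennett/Bernstein step by Lemma~\ref{lem:greenorlowmoments}--Corollary~\ref{cor:classicalexpest} plus Wegner as the paper does.
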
 
\section{Variables Change.}\label{sec:VariableChange}
Take 
\begin{equation}
\label{eq:gdef}
H_{[1,N]}=\begin{bmatrix}
V_1 & T_1 & 0 & \cdots & \cdots & 0\\
T_1^t & V_2 & T_2 & 0&\cdots & 0\\
0 & T^t_2 & V_3 & T_3&0\cdots & 0\\
\cdots&\cdots&\cdots&\cdots&\cdots&\cdots\\
\cdots&\cdots&\cdots&\cdots&\cdots&\cdots\\
0 & \cdots & \cdots &0 & T^t_{N-1} & V_N\end{bmatrix},
\end{equation}
where $V_n\in \textsf M_{W,\textsf s}$, $T_n\in \textsf M_{W}$. Assume $V_n$ and $T_m$ are random matrices. Given unit vector $f\in \IR^W$ we call the map $\left((V_1,...,V_N,T_1,...T_{N-1}\right)\to G_{[1, N]}\left(1; N\right)f$ \emph{Green function edge--to--edge vector action}. The method we develop in this work targets the random variable
\be\label{eq:mainrandomvariable}
\gamma:=\log \left\|G_{[1, N]}\left(1;N\right)f\right\|
\ee
with $V_n$ being random GOE matrices, $T_n$ random from gaussian matrices ensemble, and all matrices being independent. As it was mentioned in the Introduction to study this object we employ changes of variabes. In this section we discuss these new variables and changes of variables in question

The most important change of variables targets the Green function edge--to--edge matrix factorization. The idea of effectiveness of this approach was introduced and demonstrated in the Schenker's work ~\cite{Sch-09-Eigenvector}. We call all these new variables \emph{Schenker variables}. It starts with application of Schur complement formula to the $W\times W$ diagonal blocks of $H_{[1, N]}-E$. After that it  proceeds with a change of variables involved in Schur complement formula, which allows to see "Markovian features" of the random matrix $G_{[1, N]}\left((V_1,...,V_N,T_1,\dots,T_{N-1};1;N\right)$. For convenience of the reader we state in the next proposition the details of application of Schur complement formula for non-random setting.
\begin{prop}\label{prop:Schur} Using the above notations the following equations hold 

\smallskip
$(i)$ For $n\ge 2$ 
\begin{gather*}
G_{[1, 1]}(1;1)=U_1^{-1},\quad U_1= V_1-E,\\
G_{[1,n]}(1;n) = -G_{[1, n-1]}(1; n-1)T_{n-1} U_n^{-1},\quad n\ge 2\\
U_n=\left(V_n-E-T^t_{n-1}G_{[1, n-1]}(n-1; n-1)T_{n-1}\right)
\end{gather*}

\smallskip
$(ii)$ Using the notations is $(i)$
\begin{gather*}
G_{[1,N]}(1;N) = (-1)^{N-1}U_1^{-1}\times T_1\times ...\times U^{-1}_{N-1}\times T_{N-1}\times U_N^{-1}
\end{gather*}
\end{prop}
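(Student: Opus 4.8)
The plan is to prove $(i)$ by induction on $n$, at each stage splitting off the last GOE block of $H_{[1,n]}-E$ and invoking the classical Schur complement formula, and then to deduce $(ii)$ by simply unwinding the recursion in $(i)$. Recall the algebraic fact I will use: if $M=\begin{bmatrix} A & B\\ C & D\end{bmatrix}$ with $A$ and $M$ invertible, then the Schur complement $D-CA^{-1}B$ is invertible and
\[
M^{-1}=\begin{bmatrix} * & -A^{-1}B\,(D-CA^{-1}B)^{-1}\\[2pt] * & (D-CA^{-1}B)^{-1}\end{bmatrix},
\]
where only the right column matters for us.

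For the base case $n=1$ there is nothing to do: $G_{[1,1]}(1;1)=(V_1-E)^{-1}=U_1^{-1}$ by definition of $U_1$. For the inductive step, fix $n\ge 2$ and write $H_{[1,n]}-E$ in $2\times2$ block form that groups the first $n-1$ diagonal blocks together and isolates the $n$-th one:
\[
H_{[1,n]}-E=\begin{bmatrix} H_{[1,n-1]}-E & B_{n-1}\\[2pt] B_{n-1}^{t} & V_n-E\end{bmatrix},
\]
where $B_{n-1}$ is the $(n-1)W\times W$ matrix whose only nonzero $W\times W$ sub-block is $T_{n-1}$, located in block row $n-1$ (the last block row of the range $[1,n-1]$), in agreement with the band structure of \eqref{eq:gdef}. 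I then apply the Schur complement formula with $A=H_{[1,n-1]}-E$, $D=V_n-E$, $B=B_{n-1}$, $C=B_{n-1}^{t}$. Because $B_{n-1}$ is supported in block row $n-1$, one has $B_{n-1}^{t}G_{[1,n-1]}B_{n-1}=T^{t}_{n-1}G_{[1,n-1]}(n-1;n-1)T_{n-1}$, so the $(n;n)$ block of $G_{[1,n]}$ is exactly $U_n^{-1}$; note that the relation $G_{[1,m]}(m;m)=U_m^{-1}$, which is what actually feeds into the definition of $U_{m+1}$, is thereby part of the induction. Likewise, reading off block row $1$ of $-A^{-1}B(D-CA^{-1}B)^{-1}=-G_{[1,n-1]}B_{n-1}U_n^{-1}$ and using again that $B_{n-1}$ lives in block row $n-1$, block row $1$ of $G_{[1,n-1]}B_{n-1}$ equals $G_{[1,n-1]}(1;n-1)T_{n-1}$, which yields $G_{[1,n]}(1;n)=-G_{[1,n-1]}(1;n-1)T_{n-1}U_n^{-1}$. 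This closes the induction and proves $(i)$.

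For $(ii)$ I simply iterate $(i)$: $G_{[1,N]}(1;N)=-G_{[1,N-1]}(1;N-1)T_{N-1}U_N^{-1}=(-1)^2G_{[1,N-2]}(1;N-2)T_{N-2}U_{N-1}^{-1}T_{N-1}U_N^{-1}=\cdots$, and after $N-1$ applications the left factor becomes $G_{[1,1]}(1;1)=U_1^{-1}$, producing $(-1)^{N-1}U_1^{-1}T_1U_2^{-1}T_2\cdots U_{N-1}^{-1}T_{N-1}U_N^{-1}$, as claimed.

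I do not expect a genuine obstacle; the only point needing a word of care is the invertibility hypothesis implicit in the statement. The formulas presume that $H_{[1,m]}-E$ is invertible for every $m\le N$, equivalently (by the inductive use of the Schur complement itself) that every $U_m$ is invertible. In the non-random setting this is taken as a standing hypothesis — one works with $E$ outside the finite union $\bigcup_{1\le m\le N}\Spec\big(H_{[1,m]}\big)$ — so the content of the proposition is the purely algebraic identity, and the proof is the two-line induction above together with the bookkeeping of which block row of $B_{n-1}$ is nonzero.
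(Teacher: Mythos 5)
Your proof is correct and follows exactly the route the paper has in mind: the paper states Proposition~\ref{prop:Schur} without proof as "the details of application of Schur complement formula for non-random setting," and your induction via the $2\times 2$ block partition isolating the $n$-th diagonal block, together with the observation that $B_{n-1}$ is supported in block row $n-1$, is precisely that application. The closing remark about the implicit invertibility of each $U_m$ (equivalently $E\notin\bigcup_m \Spec H_{[1,m]}$) is a correct and useful clarification of the standing hypothesis.
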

Next we state Schenker variables basic features. The key observation by Schenker is of course that the Jacobian of the variables change is identically 1. That allows for explicit joint distribution equation. All features, including "Markovian" ones, follow straight from the definitions combined with this fact. The verification of all features is short. Since in the current setup $T_k$ are random we make detailed statements and discuss the proofs for completeness. We consider not identically distributed variables just because the features in question do not require this condition. We discuss also Schenker variables \emph{marginal and conditioned distributions}. That makes it more convenient to explain validity of certain details in the main technological estimates in Section~\ref{sec:basicestimates}. Those estimates enter \emph{spherical radius concentration and fluctuations estimates against certain super-exponential densities}, see next Section~\ref{sec:superexp}. Similar distribution features were introduced independently in the recent work of N. Chen and C.Smart ~\cite{CS22} In this section, and later in what follows, out of convenience, we suppress the condition variables from the notations of conditioned distributions, unless it causes umbiguity.

\begin{prop}\label{prop:newvariables}

\smallskip
$(i)$
Using the notations in Proposition~\ref{prop:Schur}, assume $T_k\in \textsf M_W$, $V_\ell\in \textsf M_{\textsf s,W}$, are random independent with 
\be
\quad d\pr\left(V_k\right)=\phi_k\left(V_k\right)dV_k,\quad dV_k=\prod_{1 \leq p\le q \leq W}dv_{k,p, q},\nn\\
V_k=\left(v_{k,p, q}\right)_{1 \leq p, q \leq W}, \quad v_{k,p, q}=v_{k,q,p},\qquad\nn\\
\label{eq:basicdistributions}\\
d\pr\left(T_k\right)=\rho_k\left(T_k\right)dT_k,\quad dT_k=\prod_{1 \leq p,q \leq W}d\xi_{k,p, q},\nn\\
T_k=\left(\xi_{k,p,q}\right)_{1 \leq p, q \leq W}\qquad\qquad\qquad\nn
\ee
Given  $T_1,\dots,T_{N-1}$, define new variables $u_{k,p,q}$, $1\le p\le q\le W$ in $\prod_{k=1}^{N}\IR^{{W(W+1)\over 2}}$ via
\begin{gather*}
U_1 =V_1-E,\quad U_k=\left(V_k-E-T^t_{k-1}\times U_{k-1}^{-1}\times T_{k-1}\right),\\
U_k=\left(u_{k,p, q}\right)_{1 \leq p, q \leq W}, \quad u_{k,p, q}=u_{k,q,p}
\end{gather*}
The Jacobian of the transformation 
\begin{gather*}
\IR^{{NW(W+1)\over 2}}\ni\left(v_{k,p, q}\right)_{1\le k\le N;1 \leq p\le q \leq W}\to \mathfrak F_{T_1,\dots,T_{N-1}}
\left(\left(v_{k,p, q}\right)_{1\le k\le N;1 \leq p\le q \leq W}\right):=\\
\left(u_{k,p, q}\right)_{1\le k\le N;1 \leq p\le q \leq W}\in\IR^{{NW(W+1)\over 2}}
\end{gather*}
is $=1$.

\smallskip
$(ii)$ The joint distribution of $U_1,U_2, \ldots,U_N,T_1,\dots,T_{N-1}$ is as follows
\[d\pr_N\Big(U_1,U_2, \ldots,U_N,T_1,\dots,T_{N-1}\Big) =\]
\[
 \phi_1(U_1)\times\prod_{k=2, \ldots N}\rho_{k-1}\left(T_{k-1}\right)\phi_k\left(U_k+T^t_{k-1}\times U_{k-1}^{-1}\times T_{k-1}\right)\times dU_k\times dT_{k-1}
\]

\smallskip
$(iii)$ The variables $U_1,\dots,U_{n-1},T_1,\dots,T_{n-1}$ are independent from the variables $U_{n+1},\dots,U_{N},T_{n},\dots,T_{N-1}$, when conditioned on $U_n$. 
In particular, the variables $T_{n}$, $n=1,...,N-1$ are independent, when conditioned on $U_{n}$, $n=1,...$.

\smallskip
$(iv)$  The distribution of $T_{n-1}$, conditioned on the rest of the variables is as follows 
\begin{gather*}
d\pr_N\Big(T_{n-1}\big|\Big)=\\
\mu_{T,n-1}\Big(U_{n-1},U_{n}\Big)\rho_{n-1}\left(T_{n-1}\right)\times\phi_n\left(U_n+T^t_{n-1}\times U_{n-1}^{-1}\times T_{n-1}\right)\times dT_{n-1},
\end{gather*}
where $\mu_{T,n-1}\Big(U_{n-1},U_{n}\Big)$ is the normalization factor. The joint distribution of $T_{n}$, conditioned on $U_{n}$, $n=1,...,N$ is as follows 
\begin{gather*}
d\pr_N\Big(T_1,...,T_{N-1}\big|\Big)=\\
\prod_{1 \leq n \leq N-1} \mu_{T,n}\Big(U_{n},U_{n+1}\Big)\rho_{n}\left(T_{n}\right)\times\phi_{n+1}\left(U_{n+1}+T^t_{n}\times U_{n}^{-1}\times T_{n}\right)\times dT_{n}
\end{gather*}

\smallskip
$(v)$ The $\sigma$--algebras $\cF\left(U_1,T_1,...,U_{n-1},T_{n-1}\right)$ and $\cF\left(V_n\right)$ are independent.

\end{prop}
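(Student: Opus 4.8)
The plan is to establish $(i)$ first, since $(ii)$--$(v)$ are formal consequences of it together with the change-of-variables formula and the product form of the joint law. Order the coordinates $\left(v_{k,p,q}\right)_{1\le p\le q\le W}$ primarily by the block index $k$. From $U_1=V_1-E$ and $U_k=V_k-E-T^t_{k-1}U_{k-1}^{-1}T_{k-1}$ one sees by induction on $k$ that, with $T_1,\dots,T_{N-1}$ frozen, $U_k$ is a function of $V_1,\dots,V_k$ only; hence the Jacobian matrix $\partial\left(u_{k,p,q}\right)/\partial\left(v_{j,p',q'}\right)$ is block lower triangular in the block index. Within the diagonal block $k$, the only dependence of $u_{k,p,q}$ on the entries of $V_k$ is through the explicit summand $v_{k,p,q}$, because $U_{k-1}$, and therefore $T^t_{k-1}U_{k-1}^{-1}T_{k-1}$, does not involve $V_k$; thus each diagonal block is the identity. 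A block lower triangular matrix with identity diagonal blocks has determinant $1$, which is $(i)$. Along the way one records that $T^t_{k-1}U_{k-1}^{-1}T_{k-1}$ is symmetric, so $\mathfrak F_{T_1,\dots,T_{N-1}}$ indeed maps the space of symmetric matrices into itself, is defined off the null set where some $U_{k-1}$ is singular, and has inverse $V_1=U_1+E$, $V_k=U_k+E+T^t_{k-1}U_{k-1}^{-1}T_{k-1}$, again block lower triangular, so it is a bijection.

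Given $(i)$, part $(ii)$ is immediate: by independence the joint law of $\left(V_1,\dots,V_N,T_1,\dots,T_{N-1}\right)$ has density $\prod_k\phi_k(V_k)\prod_k\rho_k(T_k)$; substituting the inverse map and using that the Jacobian equals $1$ turns $\prod_k\phi_k(V_k)$ into $\phi_1(U_1)\prod_{k\ge 2}\phi_k\!\left(U_k+T^t_{k-1}U_{k-1}^{-1}T_{k-1}\right)$, and regrouping the factors $\rho_{k-1}(T_{k-1})$ yields the asserted product. Part $(iv)$ is then read off from this density: the only factor containing $T_{n-1}$ is $\rho_{n-1}(T_{n-1})\,\phi_n\!\left(U_n+T^t_{n-1}U_{n-1}^{-1}T_{n-1}\right)$, so conditioning on all the remaining variables leaves exactly that factor up to the normalization $\mu_{T,n-1}(U_{n-1},U_n)$; and since the $T$-dependent factors $\rho_n(T_n)\phi_{n+1}\!\left(U_{n+1}+T^t_nU_n^{-1}T_n\right)$ enter as a product over $n$, conditioning on all of $U_1,\dots,U_N$ gives the stated product of one-block conditional laws.

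For $(iii)$ I would again use the product formula of $(ii)$: freezing $U_n$, the density splits as the product of $\phi_1(U_1)\prod_{k=2}^{n}\rho_{k-1}(T_{k-1})\phi_k(\cdots)$, which depends only on $\left(U_1,\dots,U_{n-1},T_1,\dots,T_{n-1}\right)$ once $U_n$ is fixed, with $\prod_{k=n+1}^{N}\rho_{k-1}(T_{k-1})\phi_k(\cdots)$, which depends only on $\left(U_{n+1},\dots,U_N,T_n,\dots,T_{N-1}\right)$ once $U_n$ is fixed; this factorization of the conditional density is precisely the claimed conditional independence, and the "in particular" assertion is the special case obtained by further marginalizing. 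Finally $(v)$: the induction in the proof of $(i)$ shows $\cF\!\left(U_1,T_1,\dots,U_{n-1},T_{n-1}\right)\subseteq\cF\!\left(V_1,\dots,V_{n-1},T_1,\dots,T_{n-1}\right)$, and the right-hand $\sigma$-algebra is independent of $\cF(V_n)$ because $V_n$ is independent of $V_1,\dots,V_{n-1},T_1,\dots,T_{n-1}$ under the original product measure.

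The main --- essentially the only --- obstacle is the bookkeeping in $(i)$: one must be careful that the coordinates are the $\tfrac{W(W+1)}{2}$ independent entries of the symmetric matrices, check that $T^t_{k-1}U_{k-1}^{-1}T_{k-1}$ is symmetric so that the map stays in $\textsf M_{W,\textsf s}$, and note that $U_{k-1}$ is invertible off a null set so that $\mathfrak F_{T_1,\dots,T_{N-1}}$ is defined almost everywhere. Once the block lower triangular structure with identity diagonal blocks is in place, the determinant is $1$ with no computation, and $(ii)$--$(v)$ are purely formal.
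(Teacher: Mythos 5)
Your proposal is correct and takes the same route as the paper's own proof: parts $(ii)$--$(v)$ are read off formally from the product structure of the original density plus the change-of-variables formula, and part $(i)$ is the Jacobian computation, which you justify by observing the block lower triangular structure with identity diagonal blocks (the paper states essentially the same thing more tersely, writing $v_{k,p,q}=u_{k,p,q}+s_{k-1,p,q}$ with $s_{k-1,p,q}$ depending only on $U_{k-1}$ and $T_{k-1}$, i.e.\ only on $V_1,\dots,V_{k-1}$ and the $T$'s). Your version is if anything a bit more careful, spelling out the symmetry of $T^t_{k-1}U_{k-1}^{-1}T_{k-1}$, the almost-everywhere invertibility of $U_{k-1}$, and the explicit inverse map, none of which changes the substance.
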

\begin{proof} $(i)$ From the transformation definition,
\begin{gather*}
v_{k, p, q} = u_{k, p, q}+s_{k-1, p, q},\quad p\le q,\quad \big(s_{k-1, p, q}\big)_{1 \le p, q\le k} = U_{k-1}^{-1}
\end{gather*}
That implies the statement.

\smallskip
$(ii)$
Combine the change of variables distribution density rule with part $(i)$, get the statement in $(ii)$.

\smallskip
$(iii)$ Combine conditional distribution density rule with the the joint distribution equation from $(ii)$, get the statement.

\smallskip
$(iv)$ Combine the conditioned distribution rule with the joint distribution equation from part $(ii)$, get the statement.

\smallskip
$(v)$ From the definitions $\cF\left(U_1,T_1,...,U_{n-1},T_{n-1}\right)\subset\cF\left(V_1,T_1,...,V_{n-1},T_{n-1}\right)$ $($ actually those are the same $)$. Since $V_1,T_1,...,V_{n-1},T_{n-1},V_n$ are independent random variables, the statement is correct.

\end{proof}

\section{Vector Action Decoupling-Splitting. Recurrence Equation.}
\label{sec:basicestimates}
In this section we discuss the setup of the method, which is instrumental to evaluate the random variable \eqref{eq:mainrandomvariable} $($the central object of the method$)$. Recall that the random variable is as follows 
\be\label{eq:mainrandomvariablea}
\gamma:=\log \left\|G_{[1, N]}\left(1;N\right)f\right\|,
\ee
where $G_{[1, N]}\left(1;N\right)$ stands for the Green function edge--to--edge matrix and $f\in\IR^W$ is a fixed unit vector. Recall that using the notations in Proposition~\ref{prop:Schur}
\be\label{eq:Greenedgetoedge}
G_{[1,N]}(1;N)f= (-1)^{N-1}U_1^{-1}\times T_1\times ...\times U^{-1}_{N-1}\times T_{N-1}\times U_N^{-1}f
\ee
The statements in this section address the distribution of \emph{this equation factors} vector action. 
In other words, we look into the following random vector--functions
\be\nn
\textsf v_{g}\left(T_{n-1}\big|\right):=T_{n-1}\times U_n^{-1}g,\qquad\qquad\qquad\qquad\qquad\\
\label{eq:schurfactorvectoraction}\\
\qquad\text{with variables $U_n,U_{n-1}$, being fixed and $g$, being an arbitrary fixed unit vector}\nn
\ee
We setup a technology which allows for a relay from a given factor to the next one on its left, so that\emph{ this one also acts on a fixed unit vector}, i.e. vector action starts over with a completely similar setup. The technology also \emph{"decouples" the vector action outcome from the rest of components of the Schenker variable in  question}. In other words it provides an explicit equation for the vector action outcome distribution conditioned on the rest of the components.

We do not discuss in this section how to combine the estimates with $n=1,2,...$, we just look into each given factor action only. Moreover, here we do not address the most important question concerning the random vector--functions
in \eqref{eq:schurfactorvectoraction},-- \emph{what is the magnitue of fluctuations, of $\log\|\textsf v_n\|$, conditioned on 
\be\label{eq:vectorcond}
\left\|\textsf v_n\left(T_{n-1}\big|\right)\right\|^{-1}\textsf v_n\left(T_{n-1}\big|\right)=h
\ee
with a given unit vector  $h\in\IR^W$ $($ the one to relay to the neighbour factor$)$} We address these questions in Section~\ref{sec:Theorem B}. It turns out that the magnitude of the fluctuations in question is \emph{completely defined by magnitude of either of two specific vectors in the splitting setup}. We call these vectors \emph{splitting principal vectors} and denote them $\textsf B,\textsf Z$. We will show in Section~\ref{sec:Theorem B} that  \emph{the magnitude of the fluctuations in question is of order $\sim W^{-{1\over 2}}\left\|\textsf Z\right\|^{-1}\sim W^{-{1\over 2}}\left\|\textsf B\right\|^{-1}$}. 

The principal vector $\textsf Z$ naturally indexed by $n-1$ from \eqref{eq:vectorcond}. The number $\left\|\textsf Z\right\|$ itself is \emph{of magnitude of the Schur complement factor vector action $U_{n-2}^{-1}h$}, see \eqref{eq:Greenedgetoedge} and \eqref{eq:vectorcond}.
In the current section we \emph{identify a recurrence equation relating the vectors $\textsf Z_n$ and  $\textsf B_{n+1}$}. This is the key technological piece. In Section~\ref{sec:mainestimates} we show that with high probability
\be\label{eq:recurrence0}
\min\left(\left\|\textsf Z_{n}\right\|,\left\|\textsf B_{n+1}\right\|\right)\sim 1
\ee
This equation in the display does not hold eveywhere,- it is much more subtle. The definition of the objects needed to state it properly is pretty lengthy. This is the crucial technological estimate because it implies that at least for every second $n$ the magnitude of the fluctuations in question is $\sim W^{-{1\over 2}}$ exactly. The details missing in the display equation are delicate. To incorporate these details a long series of involved estimates is needed. In Remark~\ref{rem:recurrence} we comment on \emph{elementary nature of the recurrence equation itself, but still being effective}.

We state the main results of the current section in Proposition~\ref{prop:condecoupling1} below. Recurrence equation appears in part $(4)$ of the proposition. We assume that the random matrices $V_k$ are GOE random matrices, i.e. $\phi_k=\phi$ as in \eqref{eq:GOEmatrix}. The random matrices $T_k$ are Gaussian non-symmetric, i.e. $\rho_k=\rho$, as in \eqref{eq:Gaussianonsym}.

\begin{prop}\label{prop:condecoupling1} Use the notations in Proposition~\ref{prop:newvariables} with $\phi_k=\phi$, $\rho_k=\rho$.

\smallskip
$(1)$ Use the above notations in this section. Take unit vector $g\in\IR^W$. Take arbitrary orthogonal matrix $Q_1=Q_1\left(U_n,g\right)\in\mathbb O_W$, such that
\be\label{eq:vectorchange0}
\left\|U_n^{-1}g\right\|^{-1}Q_1U_n^{-1}g=e_1,
\ee
where $e_1$ is the first standard basis in $\IR^W$ vector. Change the variable $T_{n-1}=R_{n-1}Q_1$. With this change of variable the distribution of the random vector $\textsf v_{g}\left(T_{n-1}\big |\right)$ is the the push--forward of the following distribution
\begin{gather*}
d\pr\left(R_{n-1}\big|\right)=
\kappa
e^{-{W\over 2}Tr \left[R_{n-1}^tR_{n-1}\right]}\times e^{-{W\over 4}Tr \left[ U_n+Q_1^tR_{n-1}^t\times U_{n-1}^{-1}\times R_{n-1}Q_1\right]^2}dR_{n-1}
\end{gather*}
under the map 
\be\label{eq:vectorchange1}
R_{n-1}\to\textsf v\left(R_{n-1}\big|\right):=\left\|U_n^{-1}g\right\| R_{n-1}e_1,
\ee 
Here $\kappa=\kappa\left(U_{n-1},U_n\right)$ is the normalization factor.

\smallskip
$(2)$ Use notations in part $(1)$. There exists orthogonal matrix $Q_2=Q_2\left(U_{n-1}\right)\in\mathbb O_W$, such that with new variable $$S_{n-1}:=S=Q^t_2R_{n-1}=Q^t_2T_{n-1}Q^t_1,$$ 
holds 
\be\label{eq:vectorchange2}
\textsf v\left(R_{n-1}\big|\right)=\left\|U_n^{-1}g\right\|Q_2Se_1,
\ee
and
\begin{gather*}
d\pr\left(S\big|\right)=\kappa
e^{-{W\over 2}Tr \left[S^tS\right]}\times e^{-{W\over 4}Tr \left[ Q_1U_nQ_1^t+S^t\times C\times S\right]^2}\times dS\times dU_n,\\
Tr \left[ Q_1U_nQ_1^t+S^t\times C\times S\right]^2=Tr \left[\lambda_1\tilde Y^t_1\times\tilde Y_1+\check{S}^t\check C\check{S}+\check B \right]^2\\
+\left(\lambda_{1}\xi^2_{1,1}+\langle \check C\tilde X_1,\tilde X_1\rangle+b_{1,1}\right)^2+2\left\|\lambda_{1}\xi_{1,1}\tilde Y^t_1+\check{S}^t\check C\tilde X_1+\tilde B_1\right\|^2,
\end{gather*}
where 
\begin{gather*}
S=\left(\xi_{i,j}\right)_{1 \leq i,j \leq W},\quad \check S=\left(\xi_{i,j}\right)_{2\le i,j\le W},\\
\tilde Y_1=\left(\xi_{1,j}\right)_{2\le j\le W},\quad \tilde X_1=\left(\xi_{i,1}\right)^t_{2\le i\le W},\\
Q_1U_nQ_1^t:=B:=\left(b_{i,j}\right)_{1\leq i,j\leq W},\quad B_1=Be_1=\left(b_{i,1}\right)^t_{1\le i\le W},\\
\check B=\left(b_{i,j}\right)_{2\le i,j\le W},\quad \tilde B_1=\left(b_{i,1}\right)^t_{2\le i\le W},\\
C:=Q^t_2U^{-1}_{n-1}Q_2,\\
C=\left(\lambda_i\delta_{i,j}\right)_{1\leq i,j\leq W},\quad \check C=\left(\lambda_i\delta_{i,j}\right)_{2\le i,j\le W},
\end{gather*}
Here $\left\|X\right\|$ stands for the standard Euclidean norm of the vector. Furthermore, with $X_1^t:=\left(\xi_{1,1},\tilde X^t_1\right)$ holds
\begin{gather*}
S^tCX_1=
\begin{bmatrix}
\lambda_1\xi^2_{1,1}+\langle \check C\tilde X_1,\tilde X_1\rangle\\
\lambda_1\xi_{1,1}\tilde Y^t_1+\check{S}^t\check C\tilde X_1\end{bmatrix},
\end{gather*}

\smallskip
$(3)$ Use the notations in part $(2)$. The conditional distribution of the random vector $X_1^t$ is as follows
\begin{gather*}
d\pr\left(\left(\xi_{1,1},\tilde X^t_1\right)^t\big|\right)=\mu\Big(U_{n-1},U_{n}\Big)e^{-{W\over 2}\phi\left( X_1\big|\right)}\times dX_1,\\
\phi\left(X_1\big|\right)=\left\|X_1\right\|^2+{1\over 2}\left(\langle C X_1,X_1\rangle+b_{1,1}\right)^2+\\
\left\|\lambda_{1}\xi_{1,1}\tilde Y^t_1+\check{S}^t\check C\tilde X_1+\tilde B_1\right\|^2
\end{gather*}
Here $\mu\Big(U_{n-1},U_{n}\Big)$ is the normalization factor. The vector function $\textsf v\left(T_{n-1}\big|\right)$ from \eqref{eq:schurfactorvectoraction} and the conditioning equation \eqref{eq:vectorcond} are as follows
\be\nn
\textsf v\left(T_{n-1}\big|\right)=Q_2\left(\xi_{1,1},\tilde X^t_1\right)^t,\qquad\\
\label{eq:vectorcond1}\\
\quad\left\|\left(\xi_{1,1},\tilde X^t_1\right)\right\|^{-1}Q_2\left(\xi_{1,1},\tilde X^t_1\right)^t=h\nn
\ee

\smallskip
$(4)$  Use the notations in part $(2)$, attach the index $(n-1)$ to $S$ back i.e. use $S_{n-1}=Q^t_2R_{n-1}$. Add the index $(n-1)$ to some of the notations in parts $(2)$,  $(3)$:
\be
\label{eq:siteindex}B_{1,n-1}:=B_1,\quad C_{n-1}:=C,\quad X_{1,n-1}^t:=\left(\xi_{1,1,n-1},\tilde X^t_{1,n-1}\right) 
\ee 
Introduce the site $(n-1)$--\textsf{principal vectors}:
\be
\label{eq:princevector} \textsf B_{n-1}=B_{1,n-1},\quad \textsf Z_{n-1}=C_{n-1}X_{1,n-1}
\ee
Define all these objects for the site $(n-2)$ with $h$ as in part $(3)$ in the role of $g$. Then the following \textsf{recurrence equation} holds
\be
\label{eq:recurrence} \left\|\textsf B_{n-2}\right\|=\left\|X_{1,n-1}\right\|\left\|\textsf Z_{n-1}\right\|^{-1}
\ee

\end{prop}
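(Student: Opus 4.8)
The plan is to derive the recurrence equation \eqref{eq:recurrence} by unwinding the definitions of the principal vectors at sites $(n-1)$ and $(n-2)$ and using the relay mechanism built into parts $(1)$--$(3)$ of the proposition. Recall that $\textsf Z_{n-1}=C_{n-1}X_{1,n-1}=Q_2^tU_{n-1}^{-1}Q_2\,X_{1,n-1}$, where $X_{1,n-1}=(\xi_{1,1,n-1},\tilde X^t_{1,n-1})^t$, and by \eqref{eq:vectorcond1} the vector action at site $(n-1)$ is $\textsf v(T_{n-1}\,|\,)=Q_2 X_{1,n-1}$ with normalized direction $h$. The key point is that this $h$ is precisely the unit vector that is fed into the site $(n-2)$ construction in the role of $g$. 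So when we apply part $(1)$ at site $(n-2)$, the first move is to form $U_{n-2}^{-1}h$ and normalize it; thereafter $\textsf B_{n-2}=B_{1,n-2}=Q_1U_{n-2}Q_1^t e_1$, where $Q_1=Q_1(U_{n-2},h)$ is the orthogonal matrix sending $\|U_{n-2}^{-1}h\|^{-1}U_{n-2}^{-1}h$ to $e_1$.

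First I would compute $\|\textsf B_{n-2}\|$ directly. Since $Q_1$ is orthogonal, $\|\textsf B_{n-2}\|=\|Q_1U_{n-2}Q_1^t e_1\|=\|U_{n-2}Q_1^t e_1\|$, and $Q_1^t e_1=\|U_{n-2}^{-1}h\|^{-1}U_{n-2}^{-1}h$ by the defining relation \eqref{eq:vectorchange0} applied at site $(n-2)$. Hence $\|\textsf B_{n-2}\|=\|U_{n-2}^{-1}h\|^{-1}\,\|U_{n-2}U_{n-2}^{-1}h\|=\|U_{n-2}^{-1}h\|^{-1}$, using that $h$ is a unit vector. So the whole content of \eqref{eq:recurrence} reduces to the identity $\|U_{n-2}^{-1}h\|=\|\textsf Z_{n-1}\|\,\|X_{1,n-1}\|^{-1}$, i.e. to identifying $\|U_{n-2}^{-1}h\|$ in terms of the site $(n-1)$ data.

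Next I would express $\|\textsf Z_{n-1}\|$ the same way. We have $\textsf Z_{n-1}=Q_2^tU_{n-1}^{-1}Q_2 X_{1,n-1}$, so $\|\textsf Z_{n-1}\|=\|U_{n-1}^{-1}Q_2 X_{1,n-1}\|=\|U_{n-1}^{-1}\textsf v(T_{n-1}\,|\,)\|$ by \eqref{eq:vectorcond1}. Writing $\textsf v(T_{n-1}\,|\,)=\|\textsf v(T_{n-1}\,|\,)\|\,h=\|X_{1,n-1}\|\,h$ (again from \eqref{eq:vectorcond1}, since $Q_2$ is orthogonal so $\|\textsf v(T_{n-1}\,|\,)\|=\|X_{1,n-1}\|$), we get $\|\textsf Z_{n-1}\|=\|X_{1,n-1}\|\,\|U_{n-1}^{-1}h\|$, hence $\|U_{n-1}^{-1}h\|=\|X_{1,n-1}\|^{-1}\|\textsf Z_{n-1}\|$. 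Comparing with the expression for $\|\textsf B_{n-2}\|$ — but with the site index shifted, i.e. $\|\textsf B_{n-2}\|=\|U_{n-2}^{-1}h\|^{-1}$ while $\|U_{n-1}^{-1}h\|=\|X_{1,n-1}\|^{-1}\|\textsf Z_{n-1}\|$ — one sees the only thing to check is the bookkeeping of which Schur block $U$ appears: the construction at site $(n-2)$ uses $U_{n-2}$ and $U_{n-3}$, so $\textsf B_{n-2}$ genuinely involves $U_{n-2}^{-1}h$; meanwhile $\textsf Z_{n-1}$ involves $U_{n-1}^{-1}$. The recurrence \eqref{eq:recurrence} as stated pairs $\textsf B_{n-2}$ with $\textsf Z_{n-1}$, so the identity is $\|U_{n-2}^{-1}h\|^{-1}=\|X_{1,n-1}\|^{-1}\|\textsf Z_{n-1}\|$ after noting that the vector $h$ relayed out of site $(n-1)$ and the Schur block $U_{n-2}$ are exactly the inputs of the site $(n-2)$ construction, with $\textsf Z_{n-1}$ being, up to the orthogonal conjugation, the action of $U_{n-2}^{-1}$ on that relayed $h$ (this is the sentence in the text: $\|\textsf Z\|$ is "of magnitude of the Schur complement factor vector action $U_{n-2}^{-1}h$" — in fact exactly equal after the change of variables, not just of the same magnitude).

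The main obstacle I anticipate is purely notational: one must be scrupulous about the three nested orthogonal changes of variables ($Q_1$ at the relevant site, then $Q_2$, and the analogous pair one step down) and about the fact that the unit vector $h$ produced by \eqref{eq:vectorcond1} at site $(n-1)$ is literally the $g$ of the site $(n-2)$ construction, so that $C_{n-1}$ is built from $U_{n-1},U_{n-2}$ while $\textsf B_{n-2}$ is built from $U_{n-2},U_{n-3}$; getting the indices to line up so that the two $U_{n-2}^{-1}h$ factors cancel is the only real step. No analysis or probability is needed — everything is orthogonal invariance of the Euclidean norm plus the defining relations \eqref{eq:vectorchange0}, \eqref{eq:vectorchange1}, \eqref{eq:vectorchange2}, \eqref{eq:vectorcond1}. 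I would present it as: (a) $\|\textsf B_{n-2}\|=\|U_{n-2}^{-1}h\|^{-1}$ by orthogonality of $Q_1$ and \eqref{eq:vectorchange0}; (b) $\|\textsf Z_{n-1}\| = \|U_{n-1}^{-1}\textsf v(T_{n-1}\,|\,)\|$ by orthogonality of $Q_2$, and $\textsf v(T_{n-1}\,|\,)$ normalizes to $h$ with norm $\|X_{1,n-1}\|$; (c) identify, via the definition of the site $(n-2)$ construction applied with input $h$, that this equals $\|X_{1,n-1}\|\,\|U_{n-2}^{-1}h\|$; (d) combine to get \eqref{eq:recurrence}.
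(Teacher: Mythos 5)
Your strategy is the paper's: compute $\|\textsf B_{n-2}\|$ by orthogonal invariance and the defining relation \eqref{eq:vectorchange0}, compute $\|\textsf Z_{n-1}\|$ the same way, and combine via the relay $h\mapsto g$. Your step (b) is correct and gives $\|\textsf Z_{n-1}\|=\|X_{1,n-1}\|\,\|U_{n-1}^{-1}h\|$. But step (a) has an off-by-one error in the $U$--index that turns the ``bookkeeping'' you flag into a genuine failure of the final splice.

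At site $(n-1)$ the vector-action factor is $T_{n-1}U_n^{-1}g$, so $Q_1=Q_1(U_n,g)$ and $B=Q_1U_nQ_1^t$: the $U$ entering $\textsf B_{n-1}$ carries index $n$, one above the site index. Shifting by one site ($n-1\mapsto n-2$, $g\mapsto h$) shifts that $U$--index to $n-1$, not $n-2$. The factor at site $(n-2)$ is $T_{n-2}U_{n-1}^{-1}h$, so the diagonalizing matrix is $Q_1(U_{n-1},h)$ and $B^{(n-2)}=Q_1(U_{n-1},h)\,U_{n-1}\,Q_1(U_{n-1},h)^t$, whence
\[
\bigl\|\textsf B_{n-2}\bigr\|=\bigl\|U_{n-1}\,Q_1(U_{n-1},h)^t e_1\bigr\|=\bigl\|U_{n-1}^{-1}h\bigr\|^{-1},
\]
not $\|U_{n-2}^{-1}h\|^{-1}$. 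Your remark that ``$\textsf Z_{n-1}$ is, up to orthogonal conjugation, the action of $U_{n-2}^{-1}$ on the relayed $h$'' contradicts your own (correct) step (b), which shows it involves $U_{n-1}^{-1}$. As written, (a) and (b) do not combine without the false identity $\|U_{n-2}^{-1}h\|=\|U_{n-1}^{-1}h\|$. Replacing $U_{n-2}$ by $U_{n-1}$ in step (a) closes the proof exactly as the paper does: $\|\textsf B_{n-2}\|=\|U_{n-1}^{-1}h\|^{-1}=\|X_{1,n-1}\|\,\|\textsf Z_{n-1}\|^{-1}$, which is \eqref{eq:recurrence}.
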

\begin{remark}\label{rem:recurrence} 
It is completely clear, from common sense point, that the recurrence equation \eqref{eq:recurrence} must be nothing, but a simple computation, which follows straightforward from the the Green function edge--to--edge matrix structure in \eqref{eq:Greenedgetoedge}. That is true, but is hard see it in the original setup. The main point is that it is not so clear why this helps to evaluate the fluctuations in question. It becomes transparent in the splitting-decoupling setup development though.
\end{remark}
Before we prove of Proposition~\ref{prop:condecoupling1} we want to recall one equation from Proposition~\ref{prop:newvariables}, to which we refer repeatedly in this section and the next one 
\be\nn
d\pr_N\Big(T_{n-1}\big|\Big)=\mu_{T,n-1}\Big(U_{n-1},U_{n}\Big)\times\qquad\qquad\\
\label{eq:tripletdistributn}\\
\nn\rho_{n-1}\left(T_{n-1}\right)\times\phi_n\left(U_n+T^t_{n-1}\times U_{n-1}^{-1}\times T_{n-1}\right)\times dT_{n-1}
\ee
where $\mu_{T,n-1}\Big(U_{n-1},U_{n}\Big)$ is the normalization factor.
We also need the following elementary linear algebra statement.
\begin{lemma}\label{lem:elem1} Take arbitrary matrices $A\in \textsf M_{W,\textsf s}$ and $S=\left(\xi_{i,j}\right)_{1\le i,j\le W}\in \textsf M_{W}$. Then there exists orthogonal matrix $Q\in\mathbb O_W$, such that 
\begin{gather*}
S^tQ^t\times A\times QS=
\begin{bmatrix}
\lambda_1\xi^2_{1,1}+\langle\check C\tilde X_1,\tilde X_1\rangle&\lambda_1\xi_{1,1}\tilde Y_1+\left(\check{S}^t\check C\tilde X_1\right)^t\\
\lambda_1\xi_{1,1}\tilde Y^t_1+\check{S}^t\check C\tilde X_1&\lambda_1\tilde Y^t_1\times\tilde Y_1+ \check{S}^t\check C\check{S}\end{bmatrix}
\end{gather*}
where 
$\check S\in \textsf M_{W-1}$, and the matrices $\check S$, $\tilde X_1$, $\tilde Y_1$ come from the following block-matrix representation 
\begin{gather*}
S=
\begin{bmatrix}
\xi_{1,1}&\tilde Y_1\\
\tilde X_1&\check{S}\end{bmatrix},
\end{gather*}
$\lambda_1,\lambda_2,\dots,\lambda_W$ are arbitrary enumerated eigenvalues of $A$, $\check C=\left(\delta_{i,j}\lambda_j\right)_{2\le i,j\le W}$. Furthermore, with $X_1^t:=\left(\xi_{1,1},\tilde X^t_1\right)$ holds
\begin{gather*}
S^tCX_1=
\begin{bmatrix}
\lambda_1\xi^2_{1,1}+\langle \check C\tilde X_1,\tilde X_1\rangle\\
\lambda_1\xi_{1,1}\tilde Y^t_1+\check{S}^t\check C\tilde X_1\end{bmatrix},
\end{gather*}

\end{lemma}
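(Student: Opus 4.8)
The plan is to reduce everything to the spectral decomposition of $A$ followed by a single block-matrix multiplication, splitting the index set $\{1,\dots,W\}$ as $\{1\}\cup\{2,\dots,W\}$. First I would invoke the spectral theorem for the real symmetric matrix $A\in\textsf M_{W,\textsf s}$: there is an orthogonal $Q\in\mathbb O_W$ whose columns are an orthonormal eigenbasis, so that $Q^tAQ=\diag(\lambda_1,\dots,\lambda_W)=:C$. Since the columns of $Q$ can be permuted, the eigenvalues $\lambda_1,\dots,\lambda_W$ may be taken in any prescribed order; this is what lets $\lambda_1$ be an arbitrary eigenvalue. In the chosen $1+(W-1)$ block decomposition one has
\[
C=\begin{bmatrix}\lambda_1 & 0\\ 0 & \check C\end{bmatrix},\qquad \check C=\left(\delta_{i,j}\lambda_j\right)_{2\le i,j\le W},\qquad
S=\begin{bmatrix}\xi_{1,1}&\tilde Y_1\\ \tilde X_1&\check S\end{bmatrix},
\]
with $\tilde Y_1$ a row of length $W-1$, $\tilde X_1$ a column of length $W-1$, and $\check S\in\textsf M_{W-1}$, exactly as in the statement; and $S^tQ^tAQS=S^tCS$.

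Next I would carry out the block product $S^tCS$ directly. Computing $CS=\begin{bmatrix}\lambda_1\xi_{1,1}&\lambda_1\tilde Y_1\\ \check C\tilde X_1&\check C\check S\end{bmatrix}$ and then multiplying on the left by $S^t=\begin{bmatrix}\xi_{1,1}&\tilde X_1^t\\ \tilde Y_1^t&\check S^t\end{bmatrix}$ yields the four blocks: the $(1,1)$ entry $\lambda_1\xi_{1,1}^2+\tilde X_1^t\check C\tilde X_1=\lambda_1\xi_{1,1}^2+\langle\check C\tilde X_1,\tilde X_1\rangle$; the $(2,1)$ block $\lambda_1\xi_{1,1}\tilde Y_1^t+\check S^t\check C\tilde X_1$; the $(2,2)$ block $\lambda_1\tilde Y_1^t\tilde Y_1+\check S^t\check C\check S$; and the $(1,2)$ block $\lambda_1\xi_{1,1}\tilde Y_1+\tilde X_1^t\check C\check S$, which equals $\lambda_1\xi_{1,1}\tilde Y_1+(\check S^t\check C\tilde X_1)^t$ because $\check C$ is symmetric, so $\tilde X_1^t\check C\check S=(\check S^t\check C^t\tilde X_1)^t=(\check S^t\check C\tilde X_1)^t$. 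Assembling the four blocks reproduces the claimed expression for $S^tQ^tAQS$.

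For the last identity, with $C=\diag(\lambda_1,\dots,\lambda_W)=Q^tAQ$ and $X_1=\begin{bmatrix}\xi_{1,1}\\ \tilde X_1\end{bmatrix}$, I would compute $CX_1=\begin{bmatrix}\lambda_1\xi_{1,1}\\ \check C\tilde X_1\end{bmatrix}$ and then $S^tCX_1=\begin{bmatrix}\xi_{1,1}&\tilde X_1^t\\ \tilde Y_1^t&\check S^t\end{bmatrix}\begin{bmatrix}\lambda_1\xi_{1,1}\\ \check C\tilde X_1\end{bmatrix}=\begin{bmatrix}\lambda_1\xi_{1,1}^2+\langle\check C\tilde X_1,\tilde X_1\rangle\\ \lambda_1\xi_{1,1}\tilde Y_1^t+\check S^t\check C\tilde X_1\end{bmatrix}$, which is the stated formula. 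Note that the top entry of $S^tCX_1$ coincides with the $(1,1)$ entry of $S^tCS$ and its bottom block coincides with the $(2,1)$ block, which is the structural reason the two displays share those expressions.

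There is no real obstacle here: the only points requiring care are the conventions that $\tilde Y_1$ is a row while $\tilde X_1$ is a column (so transposes must be tracked), and the use of $\check C=\check C^t$ in identifying the off-diagonal block. The argument is otherwise a routine verification, which is why I would present it tersely.
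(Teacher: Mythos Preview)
Your proof is correct and follows essentially the same approach as the paper: diagonalize $A$ via the spectral theorem to get $Q^tAQ=C=\diag(\lambda_1,\dots,\lambda_W)$, then carry out the $2\times 2$ block multiplication of $S^tCS$ and $S^tCX_1$ directly. The only additions beyond the paper's argument are your explicit remarks on the freedom to permute eigenvalues and on the use of $\check C=\check C^t$ for the off-diagonal block, both of which are welcome clarifications.
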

\begin{proof} Find orthogonal matrix $Q\in\mathbb O_W$ such that $Q^t\times A\times Q$ is a diagonal matrix with $\lambda_1,\dots,\lambda_W$ on its diagonal. Use block matrices mutiplication rule, get
\begin{gather*}
S^t\times Q^t\times A\times Q\times S=\begin{bmatrix}
\xi_{1,1}&\tilde X^t_1\\
\tilde Y^t_1&\check{S}^t\end{bmatrix}\times \begin{bmatrix}
\lambda_1&0\\
0&\check C\end{bmatrix}\times \begin{bmatrix}
\xi_{1,1}&\tilde Y_1\\
\tilde X_1&\check{S}\end{bmatrix}=\\
\begin{bmatrix}
\xi_{1,1}&\tilde X^t_1\\
\tilde Y^t_1&\check{S}^t\end{bmatrix}\times\begin{bmatrix}
\lambda_1\xi_{1,1}&\lambda_1\tilde Y_1\\
\check C\tilde X_1&\check C\check{S}\end{bmatrix}=
\begin{bmatrix}
\lambda_1\xi^2_{1,1}+\langle \check C\tilde X_1,\tilde X_1\rangle&\lambda_1\xi_{1,1}\tilde Y_1+\left(\check{S}^t\check C\tilde X_1\right)^t\\
\lambda_1\xi_{1,1}\tilde Y^t_1+\check{S}^t\check C\tilde X_1&\lambda_1\tilde Y^t_1\times\tilde Y_1+ \check{S}^t\check C\check{S}\end{bmatrix},
\end{gather*}
as claimed. Take $X_1^t:=\left(\xi_{1,1},\tilde X^t_1\right)$,  get 
\begin{gather*}
S^tCX_1=
\begin{bmatrix}
\xi_{1,1}&\tilde X^t_1\\
\tilde Y^t_1&\check{R}^t\end{bmatrix}\times \begin{bmatrix}
\lambda_1\xi_{1,1}\\
\check C\tilde X_1\end{bmatrix}=
\begin{bmatrix}
\lambda_1\xi^2_{1,1}+\langle \check C\tilde X_1,\tilde X_1\rangle\\
\lambda_1\xi_{1,1}\tilde Y^t_1+\check{R}^t\check C\tilde X_1\end{bmatrix},
\end{gather*}
as claimed.
\end{proof}

\begin{proof}[Proof of Proposition ~\ref{prop:condecoupling1}] $(1)$  Use the proposition notations. Take unit vector $g\in\IR^W$.  
Take arbitrary orthogonal matrix $Q_1=Q_1\left(U_n,g\right)\in\mathbb O_W$, such that
\be\nn
\left\|U_n^{-1}g\right\|^{-1}Q_1U_n^{-1}g=e_1,
\ee
where $e_1$ is the first standard basis in $\IR^W$ vector. Change the variable $T_{n-1}=S_{n-1}Q_1$ with the rest of the variables being intact. Then
\be\nn
\textsf v_{g}\left(T_{n-1}\big|\right)=T_{n-1}\times U_n^{-1}g=R_{n-1}Q\times U_n^{-1}g=\\
\label{eq:vectorchange}\\
\nn\left\|U_n^{-1}g\right\|^{-1}S_{n-1}e_1=
\textsf v\left(S_{n-1}\big|\right),\qquad\qquad
\ee
as in \eqref{eq:vectorchange1} in part $(1)$, see the definition equation \eqref{eq:schurfactorvectoraction}.
The Jacobian of the variable change is $=1$. Combine the change of variables distribution density rule with Jaconian $=1$, with the second equation in \eqref{eq:tripletdistributn}, and with invariance of $\rho_{n-1}$ under the change of variable, get the statement in $(1)$.

\smallskip
$(2)$ Apply Lemma~\ref{lem:elem1} with $S_{n-1}$ in the role of $T$ and $U_{n-1}^{-1}$ in the role of $A$, find $Q:=Q_2=Q_2\left(U_{n-1}\right)\in\mathbb O_W$ as stated in the lemma. Change the variable $S=Q^t_2S_{n-1}$ as stated in $(2)$, write
\begin{gather*}
S_{n-1}^t\times U_{n-1}^{-1}\times S_{n-1}=S^t\times Q_2^tU_{n-1}^{-1}Q_2\times S=\\
\begin{bmatrix}
\lambda_1\xi^2_{1,1}+\langle \check C\tilde X_1,\tilde X_1\rangle&\lambda_1\xi_{1,1}\tilde Y_1+\left(\check{T}^t\check C\tilde X_1\right)^t\\
\lambda_1\xi_{1,1}\tilde Y^t_1+\check{T}^t\check C\tilde X_1&\lambda_1\tilde Y^t_1\times\tilde Y_1+ \check{T}^t\check C\check{T}\end{bmatrix},\\
S^tCX_1=
\begin{bmatrix}
\lambda_1\xi^2_{1,1}+\langle \check C\tilde X_1,\tilde X_1\rangle\\
\lambda_1\xi_{1,1}\tilde Y^t_1+\check{R}^t\check C\tilde X_1\end{bmatrix},
\end{gather*}
see the notations in the statement of part $(2)$. Change the variable $S_{n-1}=Q_2S$ in the equation \eqref{eq:tripletdistributn}, get
\be
\label{eq:vectorchangeS}
\textsf v_{g}\left(S_{n-1}\big|\right)=\left\|U_n^{-1}g\right\|^{-1}Q_2Se_1,
\ee
as in \eqref{eq:vectorchange2} in part $(2)$. 
The Jacobian of the variable change is $=1$.
Use the distribution equation from part $(1)$, get
\begin{gather*}
d\pr_N\Big(S\big|\Big)=
\mu\Big(U_{n-1},U_{n}\Big)\times e^{-{W\over 2}Tr S^tS}\times e^{-{W\over 4}Tr \left[U_n+Q_1^tS^tQ_2^t\times U_{n-1}^{-1}\times Q_2SQ_1\right]^2}dS
\end{gather*}
Use that and
\begin{gather*}
Tr\left[Q^t\times X\times Q\right]^2=Tr\left[X^2\right],\quad Q^tQ=I
\end{gather*}
for any $X\in \textsf M_{W,\textsf s}$ and $Q\in\mathbb O_W$, rewrite
\begin{gather*}
Tr\left[U_n+Q_1^tS^tQ_2^t\times U_{n-1}^{-1}\times Q_2SQ_1\right]^2=Tr\left[Q_1U_nQ_1^t+S^tQ_2^t\times U_{n-1}^{-1}\times Q_2S\right]^2,\\
d\pr_N\Big(S\big|\Big)=
\hat \mu\Big(U_{n-1},U_{n}\Big)\times e^{-{W\over 2}Tr S^tS}\times e^{-{W\over 4}Tr\left[Q_1U_nQ_1^t+S^tQ_2^t\times U_{n-1}^{-1}\times Q_2S\right]^2}dS
\end{gather*}
Combine, use the notations in part $(2)$ statement, get 
\be\nn
Q_1U_nQ_1^t+S_{n-1}^t\times U_{n-1}^{-1}\times S_{n-1}=B+S_{n-1}^t\times U_{n-1}^{-1}\times S_{n-1}=\\
\label{eq:mainsplit}\\
\nn\left(b_{i,j}\right)_{1\leq i,j\leq W}+\begin{bmatrix}
\lambda_1\xi^2_{1,1}+\langle \check C\tilde X_1,\tilde X_1\rangle&\lambda_1\xi_{1,1}\tilde Y_1+\left(\check{T}^t\check C\tilde X_1\right)^t\\
\lambda_1\xi_{1,1}\tilde Y^t_1+\check{T}^t\check C\tilde X_1&\lambda_1\tilde Y^t_1\times\tilde Y_1+ \check{T}^t\check C\check{T}\end{bmatrix}
\ee
Split the matrix $B$ in blocks, use 
\begin{gather*}
Tr\begin{bmatrix}
x&\check X^t_1\\
\check X_1&\check{X}\end{bmatrix}^2=x^2+2\left\|\check X_1\right\|^2+Tr\left[\check X^2\right]
\end{gather*}
for any symmetric block matrix, get the distribution statement in part $(2)$.

\smallskip
$(3)$ Use the notations in part $(2)$. Use conditional distribution rule, get the first statement in $(3)$. Combine the definition equation \eqref{eq:schurfactorvectoraction}, respectively, the conditioning equation \eqref{eq:vectorcond}, with equations \eqref{eq:vectorchange} and \eqref{eq:vectorchangeS}, get the second statement in $(3)$

\smallskip
$(4)$ We derive an equation for the site $(n-1)$, which clearly projects onto a statement for the site $(n-2)$, and which implies easily the statement in $(4)$. This way we do not need to introduce additional notations. Use equation \eqref{eq:vectorchange0}, combine with equations in parts $(2)$, write
\begin{gather*}
\left\|U_n^{-1}g\right\|^{-1}Q_1U_n^{-1}g=e_1,\\
\left\|U_n^{-1}g\right\|^{-1}U_n^{-1}g=Q_1^te_1,\\
\textsf B_{n-1}=B_{1,n-1}=Be_1=Q_1U_nQ_1^te_1=\left\|U_n^{-1}g\right\|^{-1}Q_1g,\\
\left\|\textsf B_{n-1}\right\|=\left\|U_n^{-1}g\right\|^{-1}
\end{gather*}
The last equation here clearly projects on the  $(n-2)$ site:
\begin{gather*}
\left\|\textsf B_{n-2}\right\|=\left\|U_{n-1}^{-1}h\right\|^{-1}
\end{gather*}
Recall that $h$ here is as in part $(3)$, i.e. 
\begin{gather*}
h=\left\|\left(\xi_{1,1},\tilde X^t_1\right)\right\|^{-1}Q_2\left(\xi_{1,1},\tilde X^t_1\right)^t=\left\|X_{1,n-1}\right\|^{-1}Q_2X_{1,n-1}
\end{gather*}
Write
\begin{gather*}
\left\|U_{n-1}^{-1}h\right\|=\left\|X_{1,n-1}\right\|^{-1}\left\|U_{n-1}^{-1}Q_2X_{1,n-1}\right\|=\left\|X_{1,n-1}\right\|^{-1}\left\|Q_2^tU_{n-1}^{-1}Q_2X_{1,n-1}\right\|=\\
\left\|X_{1,n-1}\right\|^{-1}\left\|C_{n-1}X_{1,n-1}\right\|=\left\|X_{1,n-1}\right\|^{-1}\left\|\text Z_{n-1}\right\|
\end{gather*}
Combine, get the recurrence equation in the statement.

\end{proof}

\section{Main Estimates.}
\label{sec:mainestimates}

We state the main results of the current section in Proposition~\ref{prop:condecoupling} below.

\begin{prop}\label{prop:condecoupling} Use the notations in Proposition~\ref{prop:condecoupling1}.

\smallskip
$(0)$  Use the notations in Proposition~\ref{prop:condecoupling1}, part $(3)$. Introduce the spherical variables 
\begin{gather*}
\left(\xi_{1,1},\tilde X^t_1\right)^t:=X=r \textsf x,\\
r=\left\|X\right\|,\quad \textsf x=\left(\textsf x_{1},...,\textsf x_{W}\right)=r^{-1}X^t\in\mathbb S^{W-1}
\end{gather*}
The joint conditional distribution $d\pr\left(r,\textsf x\big|\right)$ and the conditional distribution $d\pr\left(r\big|\right)$ are, respectively, as follows
\begin{gather*}
d\pr\left(r,x\big|\right)=\mu\Big(U_{n-1},U_{n}\Big)r^{W-1}e^{-{W\over 2}\phi\left(r,\textsf x|\right)}drd\textsf x,\\
d\pr\left(r\big|\right)=\mu\Big(U_{n-1},U_{n},\textsf x\Big)r^{W-1}e^{-{W\over 2}\phi\left(r|\right)}dr,\\
\phi\left(r,\textsf x|\right)=\phi\left(r\textsf x|\right),\quad\text{see Proposition~\ref{prop:condecoupling1}, part $(3)$},\\
\phi\left(r|\right)=\textsf a^2 r^4+\textsf b r^2+\textsf c r,\\
\textsf a^2=\textsf a^2\left(\textsf x\right)={1\over 2}\langle C\textsf x,\textsf x\rangle^2,\\
\textsf b=\textsf b\left(\textsf x\right)=1+\beta^2+\gamma,\\
\beta^2=\beta^2\left(\textsf x\right)=\left\|\lambda_{1}\textsf x_{1}\tilde Y^t_1+\check{S}^t\check C\tilde{\textsf x}\right\|^2,\quad \tilde{\textsf x}=\left(\textsf x_{2},...,\textsf x_{W}\right)^t,\\
\gamma=\gamma\left(\textsf x\right)=b_{1,1}\langle C\textsf x,\textsf x\rangle,\\
\textsf c=\textsf c\left(\textsf x\right)=
2\langle\lambda_{1}\textsf x_{1}\tilde Y^t_1+\check{S}^t\check C\tilde{\textsf x},\tilde B_1\rangle,
\end{gather*}
Here $\mu\Big(U_{n-1},U_{n}\Big)$, $\mu\Big(U_{n-1},U_{n},\textsf x\Big)$ are the normalizing factors.

\smallskip   
$(1)$  There exists a set $\widehat{\cB}_{U,n-2,n-1,D}\in \mathcal F\left(U_{n-2},U_{n-1}\right)$,  such that
\begin{gather*}
\pr_N\left[\widehat{\cB}_{U,n-1,n,D,A}\right]\le CW^{D-1},
\end{gather*}
and a set $\cB_{S,n-1,A}\in \mathcal F\left(S\right)$ with 
\begin{gather*}
\pr_N\cB_{S,n-1,A}\le C_1e^{-c_1A^2W}
\end{gather*}
such that for any $\left(U_{n-1},U_{n}\right)\notin \widehat{\cB}_{U,n-1,n,D,A}$, $S\notin \cB_{S,n-1,A}$ and any $\textsf x$ holds 
\begin{gather*}
\max\left(\textsf a^2\left(\textsf x\right),\left|\textsf b\left(\textsf x\right)\right|,\left|\textsf c\left(\textsf x\right)\right|\right)\le 2A^2W^{2D}
\end{gather*}
Furthermore, 
\begin{gather*}
\pr_N\left\{\textsf a^2\left(\textsf x\right)=0\right\}=0
\end{gather*}

\smallskip
$(2)$ 
There exists a set $\widehat{\cB}_{U,n-1,n,A}\in \mathcal F\left(U_{n-1},U_{n}\right)$, such that
\begin{gather*}
\pr_N\left[\widehat{\cB}_{U,n-1,n,A}\right]\le Ce^{-cA^2W},
\end{gather*}
with absolute constants $c,C$, and for any $\left(U_{n-1},U_{n}\right)\notin \widehat{\cB}_{U,n-1,n,A}$ there exists a set $\cB_{X,n-1,n,A}\in \mathcal F\left(r,\textsf x\right)$, such that
\begin{gather*}
\pr_N\left[\cB_{X,n-1,n,A}\big|U_{n-1},U_{n}\right]\le Ce^{-cA^2W},
\end{gather*}
such that for any $\left(r,\textsf x\right)\notin \cB_{X,n-1,n,A}$ the following equations hold
\begin{gather*}
r\le A,\\
\bm\gamma\left(\textsf x\right)r^2+2\textsf a^2\left(\textsf x\right)r^4\bm\le \sqrt 2A\textsf a\left(\textsf x\right)r^2,\\
\left|\textsf c\left(\textsf x\right)r+2\beta^2\left(\textsf x\right)r^2\right|\le 2A\beta\left(\textsf x\right)r,
\end{gather*}
and also
\begin{gather*}
\textsf ar^2< A+\left\|B_{1}\right\|,\\
|\gamma|r^2\le \left\|B_{1}\right\|\left(A+\left\|B_{1}\right\|\right),\\
\beta^2r^2\le \left( A+\left\|B_{1}\right\|\right)^2,\\
|b|r^2\le A^2+\left( A+\left\|B_{1}\right\|\right)^2+\left\|B_{1}\right\|\left(A+\left\|B_{1}\right\|\right),\\
|\textsf c|r\le 2 \left\|B_{1}\right\|\left(A+\left\|B_{1}\right\|\right)
\end{gather*}

\smallskip
$(3)$ Use the notations in $(2)$. For any $\left(U_{n-1},U_{n}\right)\notin \widehat{\cB}_{U,n-1,n,A}$ there exists a set $\cB_{\texttt x,n-1,n,A}\in \mathcal F\left(\textsf x\right)$, such that
\begin{gather*}
\pr_N\left[\cB_{\texttt x,n-1,n,A}\big|U_{n-1},U_{n}\right]\le C_0e^{-c_0A^2W},
\end{gather*}
with absolute constants $c_0,C_0$, and for any $\textsf x\notin \widehat{\cB}_{\texttt x,n-1,n,A}$ there exists a set $\cB_{\texttt r,\textsf x,n-1,n,A}\subset (0,+\infty)$, such that
\begin{gather*}
\pr_N\left[\left\{r\in\cB_{\texttt r,n-1,n,A}\right\}\big|U_{n-1},U_{n}\right]\le C_0e^{-cA^2W},
\end{gather*}
such that for any $r\notin \cB_{\texttt r,n-1,n,A}$ the equations in part $(2)$ hold.

\smallskip
$(4)$  Take $\textsf x\notin \widehat{\cB}_{\texttt x,n-1,n,A}$, see $(3)$ \underline{with no condition $r\notin \cB_{\texttt r,\textsf x,n-1,n,A}$ though}. Take one of the $(n-1)$--principal vectors $\textsf Z=\textsf Z_{n-1}(\textsf x)=CX$, defined in \eqref{eq:princevector}, set $\zeta=\left\|\textsf X\right\|^{-1}\left\|\textsf Z\right\|$.
Then
\begin{gather*}
\textsf a^2\le {1\over 2}\zeta^2,\quad \beta^2\le A^2\zeta^2,\\
\left|\textsf b\right|\le 1+A\zeta+2A^2\zeta^2,\\
\left|\textsf c\right|\le 2A^3\zeta^2+2A\zeta
\end{gather*}

\smallskip
$(5)$ 
In the above notation attach the index $(n-1)$ to 
$$\textsf a,\textsf b, \textsf c,\beta,\gamma,\zeta,B_1,$$ i.e. use 
$$\textsf a_{n-1},\textsf b_{n-1}, \textsf c_{n-1},\beta_{n-1},\gamma_{n-1},\zeta_{n-1},B_{1,n-1}$$ 
instead. Define 
$$\textsf a_{n-2},\textsf b_{n-2}, \textsf c_{n-2},\beta_{n-2},\gamma_{n-2},\zeta_{n-2},B_{1,n-2}$$ 
in a completely similar way. 
There exists an absolute constant $\textsf C$, such that the following statement holds. Take 
\begin{gather*}
\textsf y\notin \widehat{\cB}_{\texttt x,n-2,n-1,A},\quad \rho\notin \cB_{\texttt r,\textsf x,n-2,n-1,A}
\end{gather*}
If 
$$\zeta_{n-1}\left(\textsf x\right)>\textsf C,$$
then the following equations hold
\begin{gather*}
\textsf a_{n-2}\left(\textsf y\right)\rho^2< A+\textsf C^{-1},\\
|\gamma_{n-2}\left(\textsf y\right)|\rho^2\le \textsf C^{-1}\left(A+\textsf C^{-1}\right),\\
\beta^2_{n-2}\left(\textsf y\right)\rho^2\le \left( A+\textsf C^{-1}\right)^2,\\
|\textsf c_{n-2}\left(\textsf y\right)|\rho\le 2 \textsf C^{-1}\left(A+\textsf C^{-1}\right)
\end{gather*}
Furthermore, either
\be\label{eq:bcrucial}
\textsf b_{n-2}\left(\textsf y\right)>0, 
\ee
or 
\be\label{eq:bcruciald}
|\textsf b_{n-2}\left(\textsf y\right)|<2\sqrt 2\textsf C^{-1}|\textsf a_{n-2}\left(\textsf y\right)|
\ee

In other words, using the above notations, the following \underline{dichotomy holds}: There exists an absolute constant $\textsf A$, such that for any $n$ either
\be
\label{eq:dichotomy0}\max\left(\textsf a_{n-1}^2(\textsf x),\left|\textsf b_{n-1}(\textsf x)\right|,\left|\textsf c_{n-1}(\textsf x)\right|\right)\le \textsf A
\ee
for any $\textsf x\notin \widehat{\cB}_{\texttt x,n-1,n,A}$, or
\be
\nn\max\left(\textsf a_{n-2}^2(\textsf y)\rho^4,\left|\textsf b_{n-2}(\textsf y)\right|\rho^2,\left|\textsf c_{n-2}(\textsf x)\right|\rho\right)\le \textsf A
\label{eq:dichotomy}
\ee
for any $\textsf y\notin \widehat{\cB}_{\texttt x,n-2,n-1,A}$, $\rho\notin \cB_{\texttt r,\textsf x,n-2,n-1,A}$. Furthermore, in this case, for any $\textsf y\notin \widehat{\cB}_{\texttt x,n-2,n-1,A}$, either
\be\label{eq:bcruciala}
\textsf b_{n-2}\left(\textsf y\right)>0, 
\ee
or 
\be\label{eq:bcrucialda}
|\textsf b_{n-2}\left(\textsf y\right)|<2\sqrt 2\textsf C^{-1}|\textsf a_{n-2}\left(\textsf y\right)|
\ee

\end{prop}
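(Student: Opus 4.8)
The plan is to treat parts $(0)$, $(1)$, $(3)$ as bookkeeping, $(4)$ as linear algebra, to concentrate the real work in $(2)$, and to read off the dichotomy in $(5)$ by combining part $(4)$ at site $n-1$, the recurrence equation of Proposition~\ref{prop:condecoupling1}$(4)$, and part $(2)$ at site $n-2$ through the single identity $\|B_{1,n-2}\|=\zeta_{n-1}^{-1}$.

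For part $(0)$ I would just substitute $X_1=r\textsf x$ into the density of Proposition~\ref{prop:condecoupling1}$(3)$: expanding $\langle CX_1,X_1\rangle=\langle C\textsf x,\textsf x\rangle r^2$ and $\|\lambda_1\xi_{1,1}\tilde Y_1^t+\check S^t\check C\tilde X_1+\tilde B_1\|^2=\|r(\lambda_1\textsf x_1\tilde Y_1^t+\check S^t\check C\tilde{\textsf x})+\tilde B_1\|^2$, collecting powers of $r$ (the $r$-free remainder $\tfrac12 b_{1,1}^2+\|\tilde B_1\|^2$ is absorbed into the normalizer), and attaching the spherical volume factor $r^{W-1}$ reads off $\textsf a^2,\textsf b,\textsf c$, and marginalizing $\textsf x$ gives the second density. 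For part $(1)$: off a polynomially small $\widehat{\cB}_U\in\cF(U_{n-1},U_n)$ one has $\|U_{n-1}^{-1}\|,\|U_n^{-1}\|\le W^{D}$, and off an exponentially small $\cB_S\in\cF(S)$ one has $\|S\|\le A$ (the Gaussian tail of the $\mathcal N(0,W^{-1})$-entried $S$); substituting these into the formulas of $(0)$, together with $|b_{1,1}|\le\|B_1\|=\|U_n^{-1}g\|^{-1}\le\|U_n\|$, bounds $\textsf a^2,|\textsf b|,|\textsf c|$ by $2A^2W^{2D}$. That $\pr\{\textsf a^2=0\}=0$ is immediate, since $\textsf a^2=0$ iff $\langle CX_1,X_1\rangle=0$, which is a null set ($X_1$ is conditionally absolutely continuous and $C\ne0$ a.s.).

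The heart is part $(2)$. The structural fact to exploit is that $\phi(X_1|\cdot)$ is a quadratic plus two perfect squares, so $\phi(X_1|\cdot)\ge\|X_1\|^2=r^2$. Granting the matching lower bound $Z(\textsf x):=\int_0^\infty r^{W-1}e^{-\frac W2\phi(r|\cdot)}\,dr\ge e^{-CW}$ with $C$ absolute, uniformly in $\textsf x$ and off an exponentially small set (the hard input, discussed at the end), one combines $\phi\ge r^2$ with a Gamma-integral tail bound to get $\pr\{r>t\mid\textsf x\}\le e^{CW}e^{-\frac W2 t^2+W\log t}\le e^{-ct^2W}$ past an absolute threshold, hence $r\le A$ off $e^{-cA^2W}$; the same integral on $\{r\le A\}$ applied to $\{\phi(X_1|\cdot)>M\}$ gives $\phi(X_1|\cdot)\le A^2$ off $e^{-cA^2W}$. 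From $\phi(X_1|\cdot)\le A^2$ every displayed inequality of part $(2)$ is an elementary rearrangement using the two perfect-square bounds $\tfrac12(\langle CX_1,X_1\rangle+b_{1,1})^2\le\phi$ and $\|rw+\tilde B_1\|^2\le\phi$ (with $w=\lambda_1\textsf x_1\tilde Y_1^t+\check S^t\check C\tilde{\textsf x}$, $\beta=\|w\|$) together with the identities $|\langle CX_1,X_1\rangle|=\sqrt2\,\textsf a r^2$ and $\textsf c r+2\beta^2 r^2=2\langle rw,rw+\tilde B_1\rangle$; e.g. the first perfect-square bound gives $|\langle CX_1,X_1\rangle+b_{1,1}|\le A$, hence $\textsf a r^2<A+\|B_1\|$, $|\gamma|r^2\le\|B_1\|(A+\|B_1\|)$ and $|\gamma r^2+2\textsf a^2 r^4|\le\sqrt2 A\textsf a r^2$, while the $|\textsf b|r^2$ bound follows from $\textsf b r^2=r^2+\gamma r^2+\beta^2 r^2$ and $r\le A$ (harmless absolute factors absorbed into $A$). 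Part $(3)$ is then a Chebyshev disintegration of $\cB_X\subset\{(r,\textsf x)\}$ over $\textsf x$: declaring $\textsf x$ bad when $\pr\{r\in(\cB_X)_{\textsf x}\mid\textsf x\}>e^{-c_0A^2W}$, the bad-$\textsf x$ probability is $\le e^{-(c-c_0)A^2W}$ and for good $\textsf x$ the section $(\cB_X)_{\textsf x}$ serves as the bad $r$-set. Part $(4)$ is linear algebra on the good event and uses no $r$-conditioning, since $\textsf a,\textsf b,\textsf c$ depend on $\textsf x$ only through $\zeta=\zeta(\textsf x)=\|C\textsf x\|$ and the conditioning data: Cauchy–Schwarz gives $\textsf a^2=\tfrac12\langle C\textsf x,\textsf x\rangle^2\le\tfrac12\zeta^2$; $w$ is the lower block of $S_0^tC\textsf x$ with $S_0=S$ minus its first column, so $\beta\le\|S_0\|\zeta\le A\zeta$; and $|b_{1,1}|\le\|B_1\|\le A$ gives $|\gamma|\le A\zeta$, hence $|\textsf b|\le1+A\zeta+2A^2\zeta^2$ and $|\textsf c|\le2\beta\|\tilde B_1\|\le2A\beta\le2A^2\zeta\le2A^3\zeta^2+2A\zeta$.

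Part $(5)$ then glues these together. Since $\|\textsf Z_{n-1}\|=\|C_{n-1}X_{1,n-1}\|=\zeta_{n-1}\|X_{1,n-1}\|$, the recurrence equation of Proposition~\ref{prop:condecoupling1}$(4)$ reads $\|B_{1,n-2}\|=\zeta_{n-1}^{-1}$. Fix $\textsf C$ large absolute. If $\zeta_{n-1}(\textsf x)\le\textsf C$, part $(4)$ at site $n-1$ bounds $\textsf a_{n-1}^2,|\textsf b_{n-1}|,|\textsf c_{n-1}|$ by an absolute $\textsf A$; if $\zeta_{n-1}(\textsf x)>\textsf C$, then $\|B_{1,n-2}\|<\textsf C^{-1}$, and feeding this bound on $\|B_1\|$ into the site-$(n-2)$ inequalities of part $(2)$ gives precisely the four displayed bounds on $\textsf a_{n-2}\rho^2,|\gamma_{n-2}|\rho^2,\beta_{n-2}^2\rho^2,|\textsf c_{n-2}|\rho$ and, with $\rho\le A$, on $|\textsf b_{n-2}|\rho^2$; moreover $\textsf b_{n-2}=1+\gamma_{n-2}+\beta_{n-2}^2$ with $|\gamma_{n-2}|=|b_{1,1,n-2}|\,|\langle C_{n-2}\textsf y,\textsf y\rangle|\le\|B_{1,n-2}\|\,\sqrt2\,\textsf a_{n-2}<\sqrt2\,\textsf C^{-1}\textsf a_{n-2}$, so either $\gamma_{n-2}>-1$ and $\textsf b_{n-2}>0$, or $\textsf b_{n-2}\le0$ and then $|\textsf b_{n-2}|=|\gamma_{n-2}|-1-\beta_{n-2}^2\le|\gamma_{n-2}|<2\sqrt2\,\textsf C^{-1}|\textsf a_{n-2}|$. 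The hard part will be the one input assumed in part $(2)$: the $W$-\emph{free} lower bound $Z(\textsf x)\ge e^{-CW}$ with $C$ absolute, uniformly in $\textsf x$ and with overwhelming probability. This is delicate precisely because the local resolvents $U_{n-1}^{-1},U_n^{-1}$ are only controlled polynomially in $W$ and can be nearly singular, so a fixed reference radius $r_0\sim1$ need not make $\phi(r_0\textsf x|\cdot)$ of absolute size; one must either pick $r_0$ adaptively or argue directly that the $\textsf x$-marginal of the conditional law charges only directions with $|\langle C\textsf x,\textsf x\rangle|$, $\|w\|$ and $\|\tilde B_1\|$ of absolute size. This is exactly where the ``main estimates of the method'' alluded to in the introduction are needed; granting them, the tail bounds, the sum-of-squares rearrangements, the passage to $\zeta$, and the dichotomy are all soft.
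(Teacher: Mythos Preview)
Your treatment of parts $(0)$, $(1)$, $(3)$, $(4)$, $(5)$ is essentially the paper's argument, and the glue in $(5)$ via $\|B_{1,n-2}\|=\zeta_{n-1}^{-1}$ is exactly right.

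The gap is in part $(2)$. You reduce everything to the partition function lower bound $Z(\textsf x)\ge e^{-CW}$, acknowledge this is ``the hard input'', and at the end attribute it to ``the main estimates of the method''. But Proposition~\ref{prop:condecoupling} \emph{is} the Main Estimates section; there is nothing upstream to appeal to, so this is circular. Moreover, as you correctly note, the obstruction is genuine: $U_{n-1}^{-1}$ and $U_n$ are only polynomially controlled, so a naive lower bound on $Z(\textsf x)$ uniform in $\textsf x$ is not available.

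The paper bypasses this entirely by a change-of-variables trick. The matrix
\[
\widetilde{\textsf V}_n:=Q_1U_nQ_1^t+S^tCS
=Q_1\bigl(U_n+T_{n-1}^tU_{n-1}^{-1}T_{n-1}\bigr)Q_1^t
=Q_1(V_n-E)Q_1^t
\]
is orthogonally conjugate to the original GOE block $V_n-E$. Undoing the Schenker change $(S_{n-1},U_n)\mapsto(T_{n-1},V_n)$ (Jacobian $1$) shows that, conditionally on $U_{n-1}$, the event $\{\|S\|>A\}\cup\{\|\widetilde{\textsf V}_n\|>A\}$ has the same probability as $\{\|T_{n-1}\|>A\}\cup\{\|V_n-E\|>A\}$ under the product Gaussian/GOE law, hence $\le Ce^{-cA^2W}$. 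No partition function is needed. Now read the first column of $\widetilde{\textsf V}_n$ via the block identity $\widetilde{\textsf V}_ne_1=B_1+S^tCX_1$: the top entry gives $|b_{1,1}+\langle CX_1,X_1\rangle|\le A$ and the remaining entries give $\|\tilde B_1+(\lambda_1\xi_{1,1}\tilde Y_1^t+\check S^t\check C\tilde X_1)\|\le A$. These are precisely your two ``perfect-square bounds'', obtained directly from $\|\widetilde{\textsf V}_n\|\le A$ rather than from $\phi(X_1\mid\cdot)\le A^2$; the displayed inequalities of $(2)$ then follow by the same rearrangements you wrote. In short: instead of controlling the conditional law of $X_1$ given everything (which forces $Z(\textsf x)$), control the joint law of $(S,U_n)$ given $U_{n-1}$ by reverting to $(T_{n-1},V_n)$.
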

\begin{remark}\label{rem:sphericaldistribution} 
$(1)$ We want to remark here on equation \eqref{eq:bcrucial}. This one is crucial for the dichotomy second case. It is a key to the log-variance of spherical radius against special super-exponential densitities estimate, which we establish in section \ref{sec:superexp}. The latter, in its turn, is one of the main ingredients in the proof of Theorem B. The reader will note that in the proof of the log-variance estimate in section \ref{sec:superexp}, we use other estimates in the display above \eqref{eq:bcrucial}. This is confusing. The delicate detail, which explains the confusion is as follows. The estimates in the display hold \emph{only under condition $\rho\notin \cB_{\texttt r,\textsf x,n-2,n-1,A}$}. The set $\cB_{\texttt r,\textsf x,n-2,n-1,A}$ is negligible, for its probability is very small. However, one needs to know \emph{that the estimates hold for the value $\rho_1$, at which the exponent in the density in question assumes its maximum}. It is very natural to expect that the bulk of the distribution sits around $\rho_1$ and most of the points there must be out of the set $\cB_{\texttt r,\textsf x,n-2,n-1,A}$. This is a delicate problem though. The equation \eqref{eq:bcrucial} resolves this problem by making all terms of a central piece inequality relating \emph{the quantities in question positive, which implies that each term obeys the inequality}, i.e. there is no cancelations hidden.

\smallskip
$(2)$
Write the vector action from Proposition~\ref{prop:condecoupling1} in spherical coordinates $\textsf v\left(T_{n-1}\big|\right)=r \textsf y$. As we had mentioned in the Introduction the spherical variable $\textsf y\in S^{W-1}$ distribution is hard to evaluate. That is the reason there is no statement addressing this distribution in Proposition~\ref{prop:condecoupling}. 
\end{remark}

The proof of Proposition~\ref{prop:condecoupling} uses two statements which we discuss separately. 
We use extensively Proposition~\ref{prop:condecoupling1} with the same notations. Out of convenience, we write up below the equation we mainly use 
\be\nn
\nn d\pr_N\Big(T_{n-1}\big|\Big) =\qquad\qquad\qquad\qquad\\
\label{eq:tripletdistributn1}\\
\nn\qquad\qquad \rho_{n-1}\left(T_{n-1}\right)\times\phi_n\left(U_n+T^t_{n-1}\times U_{n-1}^{-1}\times T_{n-1}\right)\times dT_{n-1}\times dU_n
\ee

\smallskip
$\blacktriangle$ We use Gaussian matrix operator norm upper tail estimate, which must be well-known. We follow the method in Corollary 2.3.5, from the monohraph ~\cite{TAO-13}. The statement requires $M=\left(\xi_{p, q}\right)_{1 \leq p, q \leq W}$ with \emph{random variables $\xi_{p, q}$ being i.i.d. and uniformly bounded by one}. The derivation goes via \emph{$\epsilon$--net argument} introduced in the proof of Corollary 2.3.5 and based on the following fixed vector estimate from ~\cite{TAO-13}, Lemma 2.3.1: 
\begin{gather*}
\pr\left\{\left\|Mx\right\|>A\sqrt W\right\}
\le Ce^{-cAW},
\end{gather*}
for any unit vector $x\in\IR^W$, with absolute constants $c,C$.  The statement then follows from unit sphere $\mathbb S^{W-1}$ $\epsilon$-net cardinality estimate. Nothing else is needed for the $\epsilon$--net argument. For Gaussian non-symmetric matrices $T$, normalized as in this work, the estimate in the last display has the following version
\begin{gather*}
\pr\left\{\left\|Mx\right\|>A\right\}
\le Ce^{-cA^2W}
\end{gather*}
The same is true for GOE random matrix. In fact \emph{the latter one, i.e. the GOE matrix operator norm upper tail estimateis, is what we use the most}. For Gaussian random non-symmetric matrix \emph{we use vector action concentration estimate in the last display applied to the first standard basis vector}. 
Since we use the estimate repeatedly, it is convenient to state the estimate as a separate proposition. We discuss the proof of the estimate for completeness. 
\begin{prop}\label{prop:GOEopnorm} $(0)$ For Gaussian random vector $x=(\xi_1,...,\xi_W)$, normalized as in this work, holds
\begin{gather*}
\pr\left\{\left|\left\|x\right\|-1\right|>AW^{-{1\over 2}}\right\}
\le Ce^{-cA^2}
\end{gather*}
for any $A>0$ with absolute constants $c,C$.

\smallskip
$(i)$ For Gaussian non-symmetric matrix $T$ holds
\begin{gather*}
\pr\left\{\left|\left\|Tx\right\|-1\right|>AW^{-{1\over 2}}\right\}
\le Ce^{-cA^2}
\end{gather*}
for any unit vector $x\in\IR^W$ and any $A>0$. In particular,
\begin{gather*}
\pr\left\{\left|\left\|Tx\right\|-1\right|>a\right\}
\le Ce^{-ca^2W}
\end{gather*}
for any $a>0$.

\smallskip
$(ii)$ For Gaussian random matrix $T$ holds
\begin{gather*}
\pr\left\{\left\|T\right\|>A\right\}
\le Ce^{-cA^2W}
\end{gather*}
for any $A>0$. 

\smallskip
$(iii)$ For GOE matrix $V$ holds
\begin{gather*}
\pr\left\{\left|\left\|Vx\right\|-1\right|>A\right\}
\le Ce^{-cA^2W}
\end{gather*}   
for any unit vector $x\in\IR^W$ and any $A>0$. In particular,
\begin{gather*}
\pr\left\{\left\|Vx\right\|>1+a\right\}
\le Ce^{-ca^2W}
\end{gather*}
for any $a>0$.  

\smallskip
$(iv)$ For GOE matrix $V$ holds
\begin{gather*}
\pr\left\{\left\|V\right\|>A\right\}
\le Ce^{-cA^2W}
\end{gather*}
for any $A>0$.   
\end{prop}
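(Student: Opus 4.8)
\textbf{Proof proposal for Proposition~\ref{prop:GOEopnorm}.} The plan is to reduce everything to the Gaussian concentration inequality for Lipschitz functions together with the standard $\epsilon$--net reduction of operator norms to fixed vectors: I would establish the fixed-vector bounds $(0)$, $(i)$, $(iii)$ first and then deduce the operator-norm bounds $(ii)$, $(iv)$ from them exactly as in the $\epsilon$--net argument of \cite{TAO-13}, Corollary 2.3.5. Throughout, when $A$ (resp.\ $a$) lies below an absolute threshold $A_0$ the asserted bound $Ce^{-cA^2}$ (resp.\ $Ce^{-cA^2W}$) is $\ge 1$ once $C$ is taken large, so it suffices to treat $A\ge A_0$; this is the point that lets me discard $O(1/W)$ shifts of the means below.

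For $(0)$: the rescaled vector $\sqrt W\,x$ has i.i.d.\ standard normal coordinates and $y\mapsto\|y\|$ is $1$--Lipschitz on $\IR^W$, so Gaussian concentration gives $\pr\{|\|\sqrt W\,x\|-\expc\|\sqrt W\,x\||>s\}\le 2e^{-s^2/2}$; since $\expc\|\sqrt W\,x\|^2=W$ and $\Var\|\sqrt W\,x\|\le 1$ by the Gaussian Poincar\'e inequality, one gets $\sqrt W-1\le\expc\|\sqrt W\,x\|\le\sqrt W$, hence $|\expc\|x\|-1|\le W^{-1/2}$, and rescaling by $\sqrt W$ yields $(0)$ (equivalently one may use the Laurent--Massart tail bound for $W\|x\|^2\sim\chi^2_W$). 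Part $(i)$ is then immediate: for i.i.d.\ $N(0,1/W)$ entries and a unit vector $x$, the vector $Tx$ has i.i.d.\ $N(0,1/W)$ coordinates, i.e.\ exactly the law of $x$ in $(0)$; and the ``in particular'' clauses of $(i)$ and $(iii)$ are just the substitutions $A=a\sqrt W$, resp.\ $A=a$.

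For $(iii)$ the coordinates of $Vx$ are Gaussian but no longer i.i.d.; a direct moment computation, using that distinct entries of $V$ are independent with variance $1/W$ off the diagonal and $2/W$ on it, gives $\expc\|Vx\|^2=\tfrac1W\sum_p(1+x_p^2)=1+\tfrac1W$. To get concentration I would regard $V=V(g)$ as a function of the $\tfrac{W(W+1)}2$ i.i.d.\ standard normals $g$ parametrizing it; because the factor $2/W$ on the diagonal of $V$ is exactly compensated by the weight $\tfrac12$ carried by the diagonal Gaussians, one has the clean identity $\|V(g)-V(g')\|_F^2=\tfrac2W\|g-g'\|^2$, while $V\mapsto\|Vx\|$ is $1$--Lipschitz for the Frobenius norm; hence $g\mapsto\|V(g)x\|$ is $\sqrt{2/W}$--Lipschitz, and Gaussian concentration gives $\pr\{|\|Vx\|-\expc\|Vx\||>t\}\le 2e^{-Wt^2/4}$. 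Combining with $\Var\|Vx\|\le 2/W$ (Poincar\'e again) gives $|\expc\|Vx\|-1|\le CW^{-1}$, and $(iii)$ follows.

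Finally $(ii)$ and $(iv)$: fix a $\tfrac12$--net $\net$ of $\mathbb S^{W-1}$ with $\card\,\net\le 5^W$; then $\|M\|\le 2\sup_{x\in\net}\|Mx\|$ for every $W\times W$ matrix $M$, so by the union bound $\pr\{\|M\|>A\}\le 5^W\sup_{x\in\net}\pr\{\|Mx\|>A/2\}$, and substituting the fixed-vector bound $(i)$ for $M=T$, resp.\ $(iii)$ for $M=V$, bounds the right-hand side by $5^W\cdot Ce^{-cA^2W/16}$, which for $A$ above an absolute constant is $\le C'e^{-c'A^2W}$ since the factor $5^W$ is absorbed into the exponent. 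I do not anticipate a genuine obstacle here; the only points requiring care are the covariance and Lipschitz-constant bookkeeping in $(iii)$ and, everywhere, first reducing to $A\ge A_0$ before discarding the $O(1/W)$ shifts of the means.
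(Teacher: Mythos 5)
Your proof is correct, and it takes a genuinely different route for the fixed-vector estimates than the paper does. The paper proves part $(0)$ directly from the Laurent--Massart tail bound for $\chi^2_W$, then deduces $(i)$ and $(iii)$ by exploiting orthogonal invariance: $T\mapsto TQ$ and $V\mapsto Q^tVQ$ preserve the respective laws, so one may replace $x$ by $e_1$, reducing everything to a single column. You instead work with Gaussian concentration for Lipschitz functions (plus the Poincar\'e inequality to locate the mean), and for $(iii)$ you view $\|Vx\|$ as a function of the $\tfrac{W(W+1)}{2}$ underlying i.i.d.\ standard normals and compute its Lipschitz constant via the identity $\|V(g)-V(g')\|_F^2=\tfrac{2}{W}\|g-g'\|^2$, which is a clean and self-contained observation. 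Both routes land in the same place; what yours buys is that $(iii)$ is fully spelled out rather than left as ``use the argument in $(i)$'' --- the paper's reduction leaves $Ve_1$ with one coordinate ($v_{1,1}$) of variance $2/W$ rather than $1/W$, so its coordinates are not literally i.i.d.\ as in $(0)$; this is harmless because the deviation level in $(iii)$ is $A$ rather than $AW^{-1/2}$, but your argument avoids having to say so. What the paper's route buys is brevity once the $\chi^2$ tail and the $\epsilon$--net lemma are taken as black boxes. The $\epsilon$--net reductions for $(ii)$ and $(iv)$ are identical in both. Two minor bookkeeping checks in your write-up that are correct but worth having verified: the factor $5^W$ for a $\tfrac12$--net of $\mathbb S^{W-1}$ and the inequality $\|M\|\le 2\sup_{x\in\net}\|Mx\|$ hold for arbitrary (not only symmetric) $M$, so they apply uniformly to $T$ and $V$; and your second-moment computation $\expc\|Vx\|^2=1+\tfrac1W$ is indeed exact for a unit vector $x$, so the shift of the mean you discard is genuinely $O(1/W)$.
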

\begin{proof} $(0)$ Write 
\begin{gather*}
\pr\left\{\left\|x\right\|>1+AW^{-{1\over 2}}\right\}=\pr\left\{\sum_{1\le j\le W}\xi_{i}^2>1+2AW^{-{1\over 2}}+A^2W^{-1}\right\}=\\
\pr\left\{\chi^2_W>W+2AW^{{1\over 2}}+A^2\right\},
\end{gather*}
where $\chi^2_n$ stands for chi--squared distributed with $n$ degrees of freedom.
The Laurent-Massart concentration estimate reads
\begin{gather*}
\pr\left\{\chi^2_n-n>2\sqrt{nx}+2x\right\}<e^{-x},\\
\pr\left\{-\chi^2_n>2\sqrt{nx}\right\}<e^{-x}
\end{gather*}
for any $x>0$, see ~\cite{LM00}. Take $n=W$, use the estimate, get
\begin{gather*}
\pr\left\{\chi^2_W>W+2AW^{{1\over 2}}+A^2\right\}\le
\pr\left\{\chi^2_W-W>2\sqrt{W\times {A^2\over 2}}+2\times{A^2\over 2}\right\}
<e^{-{A^2\over 4}}
\end{gather*}
Similarly,
\begin{gather*}
\pr\left\{\left\|x\right\|<1-AW^{-{1\over 2}}\right\}=
\pr\left\{\chi^2_W<W-2AW^{{1\over 2}}+A^2\right\}\le \pr\left\{\chi^2_W<W-AW^{{1\over 2}}\right\}\le\\
\pr\left\{W-\chi^2_W>2\sqrt{W\times {A^2\over 4}}\right\}
<e^{-{A^2\over 4}},
\end{gather*}
provided $A<W^{1\over 2}$. That finishes the statement in $(0)$ for any $A>0$. 

\smallskip
$(i)$ Use orthogonal right multiplication invariance of $d\pr\big(T\big)$, replace vector $x$ by the standard basis in $\IR^W$ first vector $e_1=(1,0,\dots,0)^t$, write 
\begin{gather*}
Tx=Te_1=(\xi_{1,1},\xi_{2,1},\dots,\xi_{W,1})^t
\end{gather*}
Use part $(0)$, get the statement.

The estimate in $(ii)$ follows from $(i)$ via $\epsilon$--net argument, see ~\cite{TAO-13}, p. 129.

\smallskip
$(iii)$ Use the argument in the proof of $(i)$, get the statement. 

The estimate in $(iv)$ follows from $(iii)$ via $\epsilon$--net argument.
\end{proof}

\smallskip
$\blacktriangle$
As we mentioned in the Introduction, we need a version of Wegner estimate for GOE perturbations of arbitrary symmetric matrix $A$. Any version which is not "too weak" would do. The latter means that the estimate eliminates the \emph{inverse matrix magnitudes which are polynomial in size of the matrix}. On the other hand \emph{an optimal Wegner estimate was established} in \cite{APSSS17}. Out of convenience we state this estimate here. Take GOE matrix $V$, normalized as in this work. Take arbitrary real symmetric matrix $A$. Denote $H=V+A$. Take arbitrary unit vector $f$. Theorem 1 in \cite{APSSS17} says that the following estimate holds for any $K>0$ 

\begin{equation}
\label{eq:Weg}
\pr\left\{\bn H^{-1}\bn> KW\right\}\le {C \over K},
\end{equation}
where $C$ is an absolute constant. We use this estimate with $K=W^{D-1}$, i.e.
\begin{equation}
\label{eq:Weg1}
\pr\left\{\bn H^{-1}\bn> W^D\right\}\le CW^{-\left(D-1\right)},
\end{equation}
where $D\gg 1$ is an absolute constant.

\smallskip
$\blacktriangle$
Next proposition discusses straightforward applications of the estimates from Proposition~\ref{prop:GOEopnorm} and \eqref{eq:Weg1} to Schenker variables. 
\begin{prop}\label{prop:mainest} Use  the notations in Proposition~\ref{prop:newvariables} with $\phi_k=\phi$, $\rho_k=\rho$. 

\smallskip
$(0)$
For any $n$ and $A>0$ 
\begin{gather*}
\pr_N\left\{\left\|T_{n-1}\right\|>A\right\},\pr_N\left\{\left\|V_n\right\|>A\right\}\le Ce^{-cA^2W}
\end{gather*}
with absolute constants $c,C$.

\smallskip
$(1)$ For any $U_{n-2}$ holds
\begin{gather*}
\pr_N\left(\left\{\bn U^{-1}_{n-1}\bn> W^D\right\}|U_{n-2}\right)\le CW^{-\left(D-1\right)},
\end{gather*}
where $C$ is an absolute constant.

\smallskip
$(2)$   Take arbitrary orthogonal matrices $Q_1,Q_2\in\mathbb O_W$, which may depend on the variables $U_{n-1},U_{n}$. Change the variable $T_{n-1}=Q_2S_{n-1}Q_1$ with the rest of the variables being intact. Then $S_{n-1}\in \mathcal F\left(U_{n-1},T_{n-1},U_n\right)$,
\be
\nn d\pr_N\Big(S_{n-1}\Big)=\rho_{n-1}\left(S_{n-1}\right)\times dS_{n-1},\qquad\qquad\quad\qquad\qquad\\
\label{eq:tripletdistributnS}d\pr_N\Big(S_{n-1},U_n\big|\Big) =\qquad\qquad\qquad\qquad\qquad\qquad\\
\nn\qquad\qquad \rho_{n-1}\left(S_{n-1}\right)\times\phi_n\left(U_n+Q_1^tS^t_{n-1}Q_2^t\times U_{n-1}^{-1}\times Q_2S_{n-1}Q_1\right)\times dS_{n-1}\times dU_n
\ee

\smallskip
$(3)$ For any $A>0$
\begin{gather*}
\pr_N\left\{\left\|S_{n-1}\right\|>A\right\}\le Ce^{-c\kappa^2W},
\end{gather*}

\smallskip
$(4)$ Denote:
\begin{gather*}
\textsf V_{n}=\textsf V_{n}\left(S_{n-1};U_{n-1},U_{n}\right)=U_n+Q_1^tS^t_{n-1}Q_2^t\times U_{n-1}^{-1}\times Q_2S_{n-1}Q_1,\\
\cB_{S,n-1,A}=\left\{\left\|S_{n-1}\right\|>A\right\},\quad \cB_{V,n,A}=\left\{\left\|V_n\right\|>A\right\},\\
\cB_{S,n-1,n,A}:=\cB_{S,n-1,n,A,\big|U_{n-1}}=\cB_{S,n-1,A}\cup
\left\{\left(S_{n-1},U_n\right):\textsf V_{n}\in\cB_{V,n,A}\right\}
\end{gather*}
Then
\begin{gather*}
\pr_N\left[\cB_{S,n-1,n,A,\big|U_{n-1}}\big|U_{n-1}\right]\le 2Ce^{-cA^2W}
\end{gather*}

\smallskip
$(5)$ There exists a set $\cB_{U,n-1,n,A}\in \mathcal F\left(U_{n-1},U_{n}\right)$ such that
\begin{gather*}
\pr_N\left[\cB_{U,n-1,n,A}\right]\le C_1e^{-c_1A^2W},\\
\pr_N\left[\cB_{S,n-1,n,A}\big|U_{n-1},U_{n}\right]\le C_1e^{-c_1A^2W},\quad \text{for any $\left(U_{n-1},U_{n}\right)\notin \cB_{U,n-1,n,A}$},
\end{gather*}
with absolute constants $c_1,C_1$.

\smallskip
$(6)$ Use the notations in $(5)$. Denote:
\begin{gather*}
\widetilde{\textsf V}_{n}:=\widetilde{\textsf V}_{n}\left(S_{n-1};U_{n-1},U_{n}\right)=Q_1U_nQ_1^t+S^t_{n-1}Q_2^t\times U_{n-1}^{-1}\times Q_2S_{n-1},\\
\widetilde{\cB}_{S,n-1,n,A}=\cB_{S,n-1,A}\cup
\left\{\left\|\widetilde{\textsf V}_{n}\right\|>A\right\},
\end{gather*}
Then
\begin{gather*}
\pr_N\left[\widetilde{\cB}_{S,n-1,n,A}\big|U_{n-1},U_{n}\right]\le C_1e^{-c_1A^2W},\quad \text{for any $\left(U_{n-1},U_{n}\right)\notin \cB_{U,n-1,n,A}$}
\end{gather*}

\end{prop}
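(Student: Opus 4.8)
The plan is to establish $(0)$--$(6)$ in order, each feeding the next, using only the distributional facts of Proposition~\ref{prop:newvariables}, the Gaussian/GOE norm tails of Proposition~\ref{prop:GOEopnorm}, and the Wegner bound \eqref{eq:Weg1}; essentially all of it is bookkeeping, and the one structural ingredient is an algebraic identity that reconstructs the original GOE block from the Schenker variables. Part $(0)$ is immediate: by Proposition~\ref{prop:newvariables}$(v)$ the block $V_n$ keeps its GOE law and the $T_k$ are untouched by the Schenker change of variables, so the two tails are Proposition~\ref{prop:GOEopnorm}$(iv)$ and $(ii)$. For $(1)$ I would condition on $(U_{n-2},T_{n-2})$: then $A:=-E-T_{n-2}^tU_{n-2}^{-1}T_{n-2}$ is a fixed symmetric matrix, and $V_{n-1}$ is still GOE after this conditioning because it is independent of $\cF(U_1,T_1,\dots,U_{n-1},T_{n-1})$ by Proposition~\ref{prop:newvariables}$(v)$; hence $U_{n-1}=V_{n-1}+A$ is a GOE perturbation of a fixed symmetric matrix, \eqref{eq:Weg1} gives $\pr_N\{\|U_{n-1}^{-1}\|>W^D\mid U_{n-2},T_{n-2}\}\le CW^{-(D-1)}$ uniformly in $T_{n-2}$, and integrating out $T_{n-2}$ conditioned on $U_{n-2}$ (tower property) yields $(1)$.

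For $(2)$, with everything else held fixed the substitution $T_{n-1}=Q_2S_{n-1}Q_1$, $Q_i\in\mathbb O_W$, has Jacobian $\pm1$ and leaves $\rho$ invariant since $\mathrm{Tr}[(Q_2SQ_1)^tQ_2SQ_1]=\mathrm{Tr}[S^tS]$; thus $\rho_{n-1}(T_{n-1})dT_{n-1}$ becomes $\rho_{n-1}(S_{n-1})dS_{n-1}$, and inserting this into \eqref{eq:tripletdistributn1} together with $(Q_2S_{n-1}Q_1)^tU_{n-1}^{-1}(Q_2S_{n-1}Q_1)=Q_1^tS_{n-1}^tQ_2^tU_{n-1}^{-1}Q_2S_{n-1}Q_1$ gives the joint density; measurability $S_{n-1}=Q_2^tT_{n-1}Q_1^t\in\cF(U_{n-1},T_{n-1},U_n)$ is clear. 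Part $(3)$ is then trivial, since orthogonal conjugation preserves the operator norm: $\|S_{n-1}\|=\|T_{n-1}\|$ pointwise, and $(0)$ finishes it (the exponent being $e^{-cA^2W}$).

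The engine of $(4)$ is the identity $\textsf V_n=U_n+T_{n-1}^tU_{n-1}^{-1}T_{n-1}=V_n-E$, obtained by plugging $T_{n-1}=Q_2S_{n-1}Q_1$ into the definition of $\textsf V_n$ and using $U_n=V_n-E-T_{n-1}^tU_{n-1}^{-1}T_{n-1}$ from Proposition~\ref{prop:newvariables}$(i)$. So, conditionally on $U_{n-1}$, the event defined by $\textsf V_n\in\cB_{V,n,A}$ amounts to an operator-norm tail event for the still-GOE matrix $V_n$ (which is GOE because it is independent of $\cF(U_{n-1})$, and differs from $\{\|V_n\|>A\}$ by at most a translation by the constant $E$), hence has conditional probability $\le Ce^{-cA^2W}$ by Proposition~\ref{prop:GOEopnorm}$(iv)$; meanwhile $\cB_{S,n-1,A}=\{\|S_{n-1}\|>A\}=\{\|T_{n-1}\|>A\}$ with $T_{n-1}\mid U_{n-1}\sim\rho$ by the independence structure of Proposition~\ref{prop:newvariables}, so it too has conditional probability $\le Ce^{-cA^2W}$, and a union bound gives $\pr_N[\cB_{S,n-1,n,A}\mid U_{n-1}]\le 2Ce^{-cA^2W}$. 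Part $(5)$ is the usual Fubini-and-conditional-Markov step: with $g(U_{n-1},U_n):=\pr_N[\cB_{S,n-1,n,A}\mid U_{n-1},U_n]$ the tower property and $(4)$ give $\expc[g\mid U_{n-1}]\le 2Ce^{-cA^2W}$, so $\cB_{U,n-1,n,A}:=\{g>e^{-(c/2)A^2W}\}$ satisfies $\pr_N[\cB_{U,n-1,n,A}\mid U_{n-1}]\le 2Ce^{-(c/2)A^2W}$ (hence the same unconditionally), while off it $g\le e^{-(c/2)A^2W}$; take $c_1=c/2$, $C_1=2C$. Finally $(6)$ repeats $(5)$: the same computation gives $\widetilde{\textsf V}_n=Q_1(V_n-E)Q_1^t$, so $\|\widetilde{\textsf V}_n\|=\|V_n-E\|$ differs from $\|V_n\|$ by at most the absolute constant $|E|$; hence $\widetilde{\cB}_{S,n-1,n,A}$ and $\cB_{S,n-1,n,A}$ coincide up to replacing $A$ by $A\pm|E|$, and $(6)$ follows from $(5)$ after enlarging the bad set by a further event of comparably small probability to absorb this translation — harmless in the regime $A\gtrsim 1$ in which these estimates are applied.

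The main obstacle is not analytic: it is the bookkeeping of which $\sigma$-algebra each estimate is conditioned on — in particular that $V_n$ and $T_{n-1}$ remain, respectively, GOE and Gaussian under conditioning on $U_{n-1}$, which is exactly where Proposition~\ref{prop:newvariables}$(iii)$,$(v)$ enters — together with spotting the two reconstruction identities $\textsf V_n=V_n-E$ and $\widetilde{\textsf V}_n=Q_1(V_n-E)Q_1^t$ that make $(4)$ and $(6)$ collapse to the GOE operator-norm tail of Proposition~\ref{prop:GOEopnorm}.
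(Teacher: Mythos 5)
Your proof is correct and follows essentially the same route as the paper's: the identity $\textsf V_n = U_n + T_{n-1}^t U_{n-1}^{-1} T_{n-1} = V_n - E$ that you extract is precisely what the paper achieves by changing variables back to $V_n$ in the proof of $(4)$, and the remaining steps (orthogonal invariance of $\rho$, Wegner for $(1)$, disintegration and conditional Chebyshev for $(5)$) match the paper verbatim. One small remark on $(6)$: the paper observes directly that $\widetilde{\textsf V}_n = Q_1 \textsf V_n Q_1^t$, hence $\|\widetilde{\textsf V}_n\| = \|\textsf V_n\|$ exactly and $\widetilde{\cB}_{S,n-1,n,A}$ \emph{equals} $\cB_{S,n-1,n,A}$, so no $\pm|E|$ adjustment or enlargement of the bad set is needed; your detour there is harmless but superfluous.
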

\begin{proof} $(0)$ The estimates follow from Proposition~\ref{prop:GOEopnorm} parts $(ii)$ and $(iv)$ respectively. 

\smallskip
$(1)$ By the definition $U_{n-1}=V_{n-1}-T^t_{n-2}\times U_{n-2}^{-1}\times T_{n-2}$. Proposition~\ref{prop:condecoupling1} says that the randdom variables $V_{n-1}$ and $T^t_{n-2}\times U_{n-2}^{-1}\times T_{n-2}$ are independent. Take $H=V+A$, where $V$ is GOE distributed, and $A=-T^t_{n-2}\times U_{n-2}^{-1}\times T_{n-2}$. Apply \eqref{eq:Weg1}, get the statement.

\smallskip
$(2)$ Lemma~\ref{lem:elementary3} says that the Jacobian of the variable change $T_{n-1}=S_{n-1}Q$ is $=1$, including the case when $Q$ depends on the rest of variables. From the definitions in $(3)$ follows $S_{n-1}\in \mathcal F\left(U_{n-1},T_{n-1},U_n\right)$. Combine  the change of variables distribution density rule with Jacobian $=1$, with the equations in \eqref{eq:tripletdistributn1}, and with invariance of $\rho_{n-1}$ under the change of variable in question, get the statements in $(3)$. 

\smallskip
$(3)$ From the definition $T_{n-1}=Q_2S_{n-1}Q_1$. Therefore $\left\|T_{n-1}\right\|=\left\|S_{n-1}\right\|$ and the statement follows from  part $(1)$.

\smallskip
$(4)$
Use \eqref{eq:tripletdistributnS}, write
\begin{gather*}
\pr_N\left[\cB\big|U_{n-1}\right]=\int_{\cB}d\pr_N\Big(S_{n-1},U_n\big|U_{n-1}\Big)=\\
\int_{\cB}\rho_{n-1}\left(S_{n-1}\right)\times\phi_n\left(U_n+Q_1^tS^t_{n-1}Q_2^t\times U_{n-1}^{-1}\times Q_2S_{n-1}Q_1\right)\times dT_{n-1}\times dU_n
\end{gather*}
In this dispaly $\cB:=\cB_{U,S,n,A,Q\big|U_{n-1}}$ to save space. 
In the display second line integral change back the variable $S_{n-1}=Q_2^tT_{n-1}Q_1^t$. Combine  the integration change of variables rule with Jacobian $=1$ and with invariance of $\rho_{n-1}$, get 
\begin{gather*}
\pr_N\left[\cB_{U,S,n,A,Q\big|U_{n-1}}\big|U_{n-1}\right]=\\
\int_{\cB_{U,T,n,A\big|U_{n-1}}}\rho_{n-1}\left(T_{n-1}\right)\times\phi_n\left(U_n+T^t_{n-1}\times U_{n-1}^{-1}\times T_{n-1}\right)\times dT_{n-1}\times dU_n
\end{gather*}
Denote:
\begin{gather*}
\cB_{U,T,n,A,Q\big|U_{n-1}}=\left\{\Big(T_{n-1},U_n\Big):U_n+T^t_{n-1}\times U_{n-1}^{-1}\times T_{n-1}\in\cB_{V,n,\kappa}\right\}
\end{gather*}
In the last integral change the variable $U_n+T^t_{n-1}\times U_{n-1}^{-1}\times T_{n-1}=V_n$ and keep the variable $T_{n-1}$ intact. The Jacobian of this 
change of the variables is $=1$. From the definitions the image of $\cB_{U,T,n,\kappa,Q\big|U_{n-1}}$ under this change of variables is $\cB_{V,n,\kappa}$. Write
\begin{gather*}
\pr_N\left[\cB_{U,S,n,A,Q\big|U_{n-1}}\big|U_{n-1}\right]=
\int_{\cB_{V,n,A}}\rho_{n-1}\left(T_{n-1}\right)\times\phi_n\left(V_n\right)\times dT_{n-1}\times dV_n\\\le
\pr_N\left[ \cB_{S,n-1,A}\right]+\pr_N\left[ \cB_{V,n,A}\right]\le 2Ce^{-cA^2W}
\end{gather*}
Here the last inequality in the display comes from part $(1)$.

\smallskip
$(5)$ 
From the definitions
\begin{gather*}
\left[\cB_{S,n-1,n,A}\right]_{\big|U_{n-1},U_n}=\cB_{S,n-1,n,A,\big|U_{n-1}},
\end{gather*}
where $\cB_{\big|\cdot}$ stands for the cross-section of the set $\cB$ at a given point $\cdot$ from the respective variable domain. Combine, use the standard disintegration argument with help of Chebyshev inequality, get the statement.

\smallskip
$(6)$ Use 
\begin{gather*}
\widetilde{\textsf V}_{n}=Q_1\textsf V_{n}Q^t_1,\quad \big\|\widetilde{\textsf V}_{n}\big\|=\left\|\textsf V_{n}\right\|,
\end{gather*}
combine with $(5)$, get the statements in $(6)$.

\end{proof}

\begin{proof}[Proof of Proposition ~\ref{prop:condecoupling}]

\smallskip
$(0)$  Use the notations in part $(3)$ of Proposition ~\ref{prop:condecoupling1}. The spherical variables change Jacobian is $=r^{W-1}$. Combine the density change of variables rule with distribution $d\pr\left(\left(\xi_{1,1},\tilde X^t_1\right)^t\big|\right)$ equations from  part $(3)$ of Proposition ~\ref{prop:condecoupling1}, compute the joint distribution $d\pr\left(r,\textsf x\right)$. Use conditional distribution rule, get the statements in $(0)$.

\smallskip
$(1)$ Proposition ~\ref{prop:condecoupling1}, $(1)$ says that for any $U_{n-2}$ holds
\begin{gather*}
\pr_N\left(\left\{\bn U^{-1}_{n-1}\bn> W^D\right\}|U_{n-2}\right)\le CW^{-\left(D-1\right)},
\end{gather*}
It is convenient to denote $\widehat{\cB}_{U,n-2,n-1,D,A}\in \mathcal F\left(U_{n-1},U_{n}\right)$ the set of pairs $\left(U_{n-2},U_{n-1}\right)$ in the display, though $U_{n-2}$ is actually arbitrary . Use the definitions, get 
\begin{gather*}
\left\|C\right\|=\left\|Q_2^t\times U_{n-1}^{-1}\times Q_2\right\|=\left\|U_{n-1}^{-1}\right\|\le CW^{-\left(D-1\right)},
\end{gather*}
for any $\left(U_{n-2},U_{n-1}\right)\notin \widehat{\cB}_{U,n-2,n-1,D,A}$. Use Proposition~\ref{prop:mainest}, $(6)$
\begin{gather*}
\pr_N\left\{\left\|S_{n-1}\right\|>A\right\}\le C_1e^{-c_1A^2W}
\end{gather*}
with absolute constants $c_1,C_1$. Use the definitions of $\textsf a^2\left(\textsf x\right),\textsf b\left(\textsf x\right),\textsf c\left(\textsf x\right)$, combine, get the first statement in $(1)$. For any symmetric matrix $M\neq 0$ holds
\begin{gather*}
\mes_{n-1}\left\{\textsf x\in \mathbb S^{n-1}: \langle Mx,x\rangle=0\right\}=0,
\end{gather*}
where $\mes_{n-1}$ stands for the normalized Lebesgue measure on the unit sphere $\mathbb S^{n-1}$. It follows from the definitions that $C$ is actually invertible with probability $1$. That implies the second statement in $(1)$.

\smallskip
$(2)$ Use Proposition~\ref{prop:mainest}, $(6)$: for any $A>0$ there exists a set $\cB_{U,n-1,n,A}\in \mathcal F\left(U_{n-1},U_{n}\right)$, such that
\begin{gather*}
\pr_N\left[\cB_{U,n-1,n,A}\right]\le C_1e^{-c_1A^2W},\\
\pr_N\left[\widetilde{\cB}_{S,n-1,n,A}\big|U_{n-1},U_{n}\right]\le C_1e^{-c_1A^2W},\quad \text{for any $\left(U_{n-1},U_{n}\right)\notin \cB_{U,n-1,n,A}$},
\end{gather*}
with absolute constants $c_1,C_1$, where 
\be\nn
\widetilde{\cB}_{S,n-1,n,A}:=\left\{\left\|S_{n-1}\right\|>A\right\}\cup
\left\{\left\|\widetilde{\textsf V}_{n}\right\|>A\right\},\qquad\qquad\qquad\qquad\\
\label{eq:tilde BV}\\
\nn\widetilde{\textsf V}_{n}:=\widetilde{\textsf V}_{n}\left(S;U_{n-1},U_{n}\right)=Q_1U_nQ_1^t+S^tQ_2^t\times U_{n-1}^{-1}\times Q_2S
\ee
Redenote  
\be
\label{eq:tilde BVcross}
\cB_{\cX,n-1,n,A}=\widetilde{\cB}_{\big|U_{n-1},U_{n}}
\ee
Here, as before, $\cB_{\big|\cdot}$ stands for the cross-section of the set $\cB$ at a given point $\cdot$ from the respective variable domain. 
The probability estimates in part $(6)$ hold for these sets. Verify the equation in part $(5)$ of the proposition with these sets in place and with
$$\left(U_{n-1},U_{n}\right)\notin \cB_{U,n-1,n,A},\quad X\notin \cB_{\cX,n-1,n,A}$$

From the definition the vector $X:=\left(\xi_{1,1},\tilde X^t_1\right)^t$ is the first column of $S$. Therefore,
\be\label{eq:rvariable}
r=\left\|\left(\xi_{1,1},\tilde X^t_1\right)^t\right\|\le \left\|S\right\|\le A,
\ee 
as claimed. Similarly,
\be\label{eq:tilde V}
\left\|\widetilde{\textsf V}_{n}e_1\right\|\le \left\|\widetilde{\textsf V}_{n}\right\|\le A
\ee

From the definitions in part $(2)$ 
\be\label{eq:mainsplitaC}
\lambda_1\xi^2_{1,1}+\langle \check C\tilde X_1,\tilde X_1\rangle=\langle CX,X\rangle
\ee
Next, use equation \eqref{eq:mainsplit} from part $(2)$ proof, combine with the definition of $\widetilde{\textsf V}_{n}$, see above, write 
\be\nn
\widetilde{\textsf V}_{n}=Q_1U_nQ_1^t+S_{n-1}^t\times U_{n-1}^{-1}\times S_{n-1}=B+S_{n-1}^t\times U_{n-1}^{-1}\times S_{n-1}=\\
\nn\left(b_{i,j}\right)_{1\leq i,j\leq W}+\begin{bmatrix}
\lambda_1\xi^2_{1,1}+\langle \check C\tilde X_1,\tilde X_1\rangle&\lambda_1\xi_{1,1}\tilde Y_1+\left(\check{T}^t\check C\tilde X_1\right)^t\\
\lambda_1\xi_{1,1}\tilde Y^t_1+\check{T}^t\check C\tilde X_1&\lambda_1\tilde Y^t_1\times\tilde Y_1+ \check{T}^t\check C\check{T}\end{bmatrix}
\ee
In particular,
\be\label{eq:mainsplita}
\widetilde{\textsf V}_{n}e_1=
\left(b_{i,1}\right)_{1\leq i,j\leq W}+\begin{bmatrix}
\langle CX,X\rangle\\
\lambda_1\xi_{1,1}\tilde Y^t_1+\check{T}^tC\tilde X_1\end{bmatrix}
\ee 
Use the equations in part $(0)$, combine with \eqref{eq:rvariable},\eqref{eq:mainsplitaC}, \eqref{eq:mainsplita}, get
\begin{gather*}
b_{1,1}=\langle\widetilde{\textsf V}_{n}e_1,e_1\rangle-\langle CX,X\rangle,\\
\gamma\left(\textsf x\right) r^2=b_{1,1}\langle C\textsf x,\textsf x\rangle r^2=b_{1,1}\langle CX,X\rangle=\langle\widetilde{\textsf V}_{n}e_1,e_1\rangle\langle CX,X\rangle-\langle CX,X\rangle^2
=\\
\sqrt 2\langle\widetilde{\textsf V}_{n}e_1,e_1\rangle\textsf ar^2
-2\textsf a^2r^4,\\
\bm\gamma\left(\textsf x\right)r^2+2\textsf a^2\left(\textsf x\right)r^4\bm=\sqrt 2\bm\langle\widetilde{\textsf V}_{n}e_1,e_1\rangle\bm\textsf a\left(\textsf x\right)r^2\le \sqrt 2 A \textsf a\left(\textsf x\right)r^2,\\
\left(\langle\widetilde{\textsf V}_{n}e_1,e_1\rangle\right)_{2\leq i\leq W}=\tilde B_1+\lambda_1\xi_{1,1}\tilde Y^t_1+\check{T}^tC\tilde X_1,\\
r^2\beta^2\left(\textsf x\right)=r^2\left\|\lambda_{1}\textsf x_{1}\tilde Y^t_1+\check{S}^t\check C\tilde{\textsf x}\right\|^2=\left\|\lambda_{1}\xi_{1,1}\tilde Y^t_1+\check{S}^t\check C\tilde{X}_1\right\|^2,\\
\textsf c\left(\textsf x\right)r=
2r\langle\lambda_{1}\textsf x_{1}\tilde Y^t_1+\check{S}^t\check C\tilde{\textsf x},\tilde B_1\rangle=2\langle\lambda_{1}\xi_{1,1}\tilde Y^t_1+\check{S}^t\check C\tilde{X}_1,\tilde B_1\rangle=\\
2\langle\left(\langle\widetilde{\textsf V}_{n}e_1,e_i\rangle\right)_{2\leq i\leq W},\lambda_{1}\xi_{1,1}\tilde Y^t_1+\check{S}^t\check C\tilde{X}_1\rangle-2\left\|\lambda_{1}\xi_{1,1}\tilde Y^t_1+\check{S}^t\check C\tilde{X}_1\right\|^2=\\
2\langle\left(\langle\widetilde{\textsf V}_{n}e_1,e_i\rangle\right)_{2\leq i\leq W},\lambda_{1}\xi_{1,1}\tilde Y^t_1+\check{S}^t\check C\tilde{X}_1\rangle-2\beta^2\left(\textsf x\right)r^2,\\
\left|\textsf c\left(\textsf x\right)r+2\beta^2\left(\textsf x\right)r^2\right|=2\left|\langle\left(\langle\widetilde{\textsf V}_{n}e_1,e_i\rangle\right)_{2\leq i\leq W},\lambda_{1}\xi_{1,1}\tilde Y^t_1+\check{S}^t\check C\tilde{X}_1\rangle\right|\le \\
2\left\|\widetilde{\textsf V}_{n}\right\|\times \left\|\lambda_{1}\xi_{1,1}\tilde Y^t_1+\check{S}^t\check C\tilde{X}_1\right\|\le 2A\beta\left(\textsf x\right)r
\end{gather*}
The verification of the equations
\begin{gather*}
\textsf ar^2< A+\left\|B_{1}\right\|,\\
|\gamma|r^2\le \left\|B_{1}\right\|\left(A+\left\|B_{1}\right\|\right),\\
\beta^2r^2\le \left( A+\left\|B_{1}\right\|\right)^2,\\
|b|r^2\le A^2+\left( A+\left\|B_{1}\right\|\right)^2+\left\|B_{1}\right\|\left(A+\left\|B_{1}\right\|\right),\\
|\textsf c|r\le 2 \left\|B_{1}\right\|\left(A+\left\|B_{1}\right\|\right)
\end{gather*}
is similar and actually a bit shorter. We skip it. That finishes the proof of part $(2)$.

\smallskip
$(3)$ Use part $(2)$, combine with the standard disintegration argument and with Chebyshev inequality, get the statements in $(3)$.

\smallskip
$(4)$ Use the definitions, get
\begin{gather*}
\textsf a^2={1\over 2}\langle C\textsf x,\textsf x\rangle^2\le {1\over 2}\left\|C\textsf x\right\|^2=
{1\over 2}\left\|X\right\|^{-2}\left\|CX\right\|^2={1\over 2}\zeta^2
\end{gather*}
Next, use 
\begin{gather*}
S^tCX=
\begin{bmatrix}
\lambda_1\xi^2_{1,1}+\langle \check C\tilde X_1,\tilde X_1\rangle\\
\lambda_1\xi_{1,1}\tilde Y^t_1+\check{S}^t\check C\tilde X_1\end{bmatrix},
\end{gather*}
see Proposition~\ref{prop:condecoupling1}, part $(3)$. Use the definitions, combine with the equation \eqref{eq:rvariable}, write
\begin{gather*}
\beta^2r^2=\left\|\lambda_{1}\xi_{1,1}\tilde Y^t_1+\check{S}^t\check C \tilde X_1\right\|^2\le \left\|S^tCX\right\|^2\le A^2\left\|CX\right\|^2=A^2\left\|\textsf Z\right\|^2=A^2\zeta^2r^2,\\
\beta^2\le A^2\zeta^2
\end{gather*}
as claimed.
Finally, take $\textsf x\notin \widehat{\cB}_{\texttt x,n-1,n,A}$, $r\notin \cB_{\texttt r,n-1,n,A}$ as in $(3)$, write
\begin{gather*}
\bm\gamma r^2\bm\le \bm\gamma r^2+2\textsf a^2r^4\bm+2\textsf a^2r^4\le 
\sqrt 2A\textsf ar^2+2\textsf a^2r^4\le A\zeta r^2+\zeta^2r^4,\\
\left|\textsf c\right|r\le 2\beta^2r^2+\left|\textsf cr+2\beta^2r^2\right|\le 2A^2\zeta^2r^2+2A\zeta r
\end{gather*}
Use that $r\le A$ if $r\notin \cB_{\texttt r,n-1,n,A}$, conclude
\begin{gather*}
\bm\gamma\bm \le A\zeta+\zeta^2r^2 \le A\zeta+A^2\zeta^2,\\
\left|\textsf b\right|\le 1+\beta^2+\bm\gamma\bm\le 1+A^2\zeta^2+A\zeta+A^2\zeta^2= 1+A\zeta+2A^2\zeta^2,\\
\left|\textsf c\right|\le 2A^3\zeta^2+2A\zeta,
\end{gather*}
as claimed. Since $r$ does not enter the estimates in the last display the estimates \underline{hold without condition $r\notin \cB_{\texttt r,\textsf x,n-1,n,A}$}.

\smallskip
$(5)$ 
Rewrite the recurrence equation \eqref{eq:recurrence} from Proposition~\ref{prop:condecoupling1} part $(4)$ as follows 
\be
\nn \left\|\textsf B_{n-2}\left(\textsf y\right)\right\|=\left\|X_{1,n-1}\right\|\left\|\textsf Z_{n-1}\right\|^{-1}=\zeta_{n-1}^{-1}\left(\textsf x\right)
\ee
Take 
\begin{gather*}
\textsf y\notin \widehat{\cB}_{\texttt x,n-2,n-1,A},\quad \rho\notin \cB_{\texttt r,\textsf x,n-2,n-1,A}
\end{gather*}
Use the part $(2)$ last display estimates with $(n-2)$ in the role of $(n-1)$. Use $\rho$ for the notation for the sperical radius and $\textsf y$ for the spherical variable. Take 
\begin{gather*}
\textsf y\notin \widehat{\cB}_{\texttt x,n-2,n-1,A},\quad \rho\notin \cB_{\texttt r,\textsf x,n-2,n-1,A},
\end{gather*}
write
\begin{gather*}
\textsf a_{n-2}\left(\textsf y\right)\rho^2< A+\left\|\textsf B_{n-2}\left(\textsf y\right)\right\|,\\
\beta_{n-2}^2\left(\textsf y\right)\rho\le \left( A+\left\|\textsf B_{n-2}\left(\textsf y\right)\right\|\right)^2,\\
|\textsf c_{n-2}\left(\textsf y\right)|\rho\le 2 \left\|\textsf B_{n-2}\left(\textsf y\right)\right\|\left(A+\left\|\textsf B_{n-2}\left(\textsf y\right)\right\|\right)
\end{gather*}
If 
$$\zeta_{n-1}\left(\textsf x\right)>\textsf C,$$
then
\begin{gather*}
\left\|\textsf B_{n-2}\left(\textsf y\right)\right\|<\textsf C^{-1},\\
\textsf a_{n-2}\left(\textsf y\right)\rho^2< A+\textsf C^{-1},\\
|\gamma_{n-2}\left(\textsf y\right)|\rho^2\le \textsf C^{-1}\left(A+\textsf C^{-1}\right),\\
\beta^2_{n-2}\left(\textsf y\right)\rho^2\le \left( A+\textsf C^{-1}\right)^2,\\
\textsf b_{n-2}\left(\textsf y\right)\rho^2\le A^2+2\left( A+\textsf C^{-1}\right)^2,\\
|\textsf c_{n-2}\left(\textsf y\right)|\rho\le 2 \textsf C^{-1}\left(A+\textsf C^{-1}\right)
\end{gather*}
\emph{regardless how large is $\textsf C$}. 
Now we got to verify that dichotomy for $\textsf b:=\textsf b_{n-2}\left(\textsf y\right)$. Apply the current proposition part $(0)$ and Proposition~\ref{prop:newvariables}  part $(2)$ with $(n-2)$ in the role $(n-1)$. It is  convenient though to consider $(n-1)$ instead of $(n-2)$ That way we just use the same equations without any comments. Use 
\begin{gather*}
\textsf a^2\left(\textsf x\right)={1\over 2}\langle C\textsf x,\textsf x\rangle^2,\\
\textsf b=\textsf b\left(\textsf x\right)=1+\beta^2+\gamma ,\\
\gamma\left(\textsf x\right)=b_{1,1}\langle C\textsf x,\textsf x\rangle,\\
|\gamma\left(\textsf x\right)|=|b_{1,1}|\left|\langle C\textsf x,\textsf x\rangle\right|<\textsf C^{-1}\left|\langle C\textsf x,\textsf x\rangle\right|=\textsf C^{-1}\sqrt 2|\textsf a\left(\textsf x\right)|
\end{gather*}
see part $(0)$ in the current proposition. If $\textsf b\left(\textsf x\right)>0$, then we are done. Assume $\textsf b\left(\textsf x\right)\le 0$.
Then the last dispay implies
\begin{gather*}
1+\beta^2<|\gamma|<\textsf C^{-1}\sqrt 2|\textsf a\left(\textsf x\right)|,\\
|\textsf b\left(\textsf x\right)|\le 1+\beta^2+|\gamma\left(\textsf x\right)|<2|\gamma\left(\textsf x\right)|<2\textsf C^{-1}\sqrt 2|\textsf a\left(\textsf x\right)|,
\end{gather*}
as claimed.

\end{proof}

\section{Concentration Estimates and Log-Variance of Spherical Radius Against Special Super-Exponential Densitities.}\label{sec:superexp}
The proof of Theorem B is split into few proposition. Each of these propositions targets a specific technological problem which enter our method.  The first target consists of concentration estimates and log-variance of spherical radius against super-exponential densitities from Proposition ~\ref{prop:condecoupling}, part 4,
\be\label{eq:radiusdistrib}
d\pr(r)=\mu r^{W-1}e^{-W\left(\textsf a^2 r^4+\textsf br^2+\textsf c r\right)}d r:=\mu\phi(r)dr,\quad r\ge 0
\ee
where $\textsf a$, $\textsf b$, $\textsf c$ are constant coefficients, $\textsf a^2>0$, $\mu=\mu\left(\textsf a,\textsf b,\textsf c\right)$ is the normalizing factor.  Actually, in Proposition ~\ref{prop:condecoupling} the coefficients $\textsf a$, $\textsf b$ , $\textsf c$ \emph{depend on the spherical variable $\textsf x\in\mathbb S^{W-1}$}. The coefficients depend also on some other variables too. On the other hand Proposition ~\ref{prop:condecoupling} says that off a set of very small probability $($ the actual estimate of the set in question will play an important role in Section~\ref{sec:Theorem B}, but not in the curent section$)$, the coefficient obey some very important estimates. In the current section we list some of these estimates as the main conditions under which the result of the section hold.

Here is the list of conditions in question.

$(I)$ Crude upper bound estimates:
\begin{gather*}
\max\left(\textsf a^2,\left|\textsf b\right|,\left|\textsf c\right|\right)\le 2A^2W^{2D},
\end{gather*}
with absolute constants $A,D$

$(II)$ The distribution "essential" support size estimate 
\be\label{eq:rsupport}
\pr\left\{r\ge A\right\}\le C_0e^{-c_0A^2W},
\ee
with absolute constants $c_0,C_0$. Here $\pr$ stands for the probability distribution in\eqref{eq:radiusdistrib}.

$(III)$ The dichotomy conditions. Either
\be
\label{eq:dichotomy1}\max\left(\textsf a^2,\left|\textsf b\right|,\left|\textsf c\right|\right)\le \textsf A
\ee
with absolute constant $\textsf A$, or there exists a set $\cB_{\texttt r}$
\begin{gather*}
\pr\cB_{\texttt r}\le C_0e^{-c_0A^2W},
\end{gather*}
such that for any $r\notin \cB_{\texttt r}$ holds
\be
\max\left(\textsf a^2r^4,\textsf br^2,\left|\textsf c\right|r\right)\le \textsf A\label{eq:dichotomy2}
\ee
Furthermore, in this case, either
\be\label{eq:bcrucialaa}
\textsf b>0, 
\ee
or 
\be\label{eq:bcrucialdaa}
|\textsf b|<\textsf C^{-1}|\textsf a|
\ee

 The main result of this section is the following proposition
\begin{prop}\label{prop:superexplogvar} Assume that conditions $(I)-(III)$ hold. Then
\be
\Var \log r \ge cW^{-1}\label{eq:superexplogvar}
\ee
with absolute constant $c>0$. Furthemore, there exists a set $\cB_{\phi}\subset (0,+\infty)$ such that
\be
\nn \pr \cB_{\phi}\le Ce^{-cW}\qquad\qquad\qquad\qquad\\
\label{eq:radiuslowerb}\\
\nn r\ge c_1W^{-\textsf d},\quad \text{for any $r\in (0,+\infty)\setminus \cB_{\phi}$},
\ee
where $c,c_1,C,\textsf d$ are absolute constants.
\end{prop}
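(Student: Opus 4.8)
The plan is to pass to the logarithmic variable and reduce the variance bound to one scalar inequality via integration by parts. Write $t=\log r$; by \eqref{eq:radiusdistrib} the law of $t$ has density $\propto e^{-Wg(t)}$ with
\[
g(t)=-t+\textsf a^{2}e^{4t}+\textsf b\,e^{2t}+\textsf c\,e^{t},
\]
so that $g'(t)=-1+4\textsf a^{2}e^{4t}+2\textsf b\,e^{2t}+\textsf c\,e^{t}$ and $g''(t)=16\textsf a^{2}e^{4t}+4\textsf b\,e^{2t}+\textsf c\,e^{t}=16\textsf a^{2}r^{4}+4\textsf b\,r^{2}+\textsf c\,r$. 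Since $\textsf a^{2}>0$ one has $g(t)\to+\infty$ both as $t\to+\infty$ (super-exponentially) and as $t\to-\infty$ (where $g(t)\sim-t$), and the boundary products $t\,e^{-Wg(t)}$, $g'(t)e^{-Wg(t)}$ tend to $0$ at $\pm\infty$, which legitimizes the integrations by parts below. Three of them give $\Erw[g'(t)]=0$, $\Erw[t\,g'(t)]=\tfrac1W$ (hence $\Erw[(t-\Erw t)\,g'(t)]=\tfrac1W$), and $\Erw[g'(t)^{2}]=\tfrac1W\Erw[g''(t)]$. Cauchy--Schwarz then yields
\[
\tfrac1W=\Erw[(t-\Erw t)\,g'(t)]\le\sqrt{\Var(t)}\,\sqrt{\Erw[g'(t)^{2}]}=\sqrt{\Var(t)}\,\sqrt{\tfrac1W\Erw[g'']}\,,
\]
i.e. $\Var(\log r)\ge\big(W\,\Erw[g'']\big)^{-1}$ (an equality for a Gaussian, so the bound is sharp). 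Thus \eqref{eq:superexplogvar} follows once $\Erw[g'']\le C$ with $C$ absolute.

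To bound $\Erw[g'']$ I would split the expectation over a good event $G$ and its complement. The complement is empty under the first alternative \eqref{eq:dichotomy1} and equals $\cB_{\texttt r}$ under the second, so by condition $(III)$ it has $\pr$-measure $\le C_{0}e^{-c_{0}A^{2}W}$, while on it $|g''|\le 16\textsf a^{2}r^{4}+4|\textsf b|r^{2}+|\textsf c|r\le CW^{2D}(r^{4}+r^{2}+r)$ by the crude bound $(I)$; using $(II)$ (in the form $\pr\{r\ge s\}\le C_{0}e^{-c_{0}s^{2}W}$ for $s\ge A$, whence $\Erw[r^{m}\mathbf 1_{r\ge A}]\le C_{m}e^{-c_{m}W}$) this piece contributes $o(1)$. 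On $G$ the function $g''$ is bounded by an absolute constant: under \eqref{eq:dichotomy1} one has $\max(\textsf a^{2},|\textsf b|,|\textsf c|)\le\textsf A$ and, off an exponentially small set, $r\le A$, so $|g''|\le 21\,\textsf A\,A^{4}$; under \eqref{eq:dichotomy2}, off $\cB_{\texttt r}$ one has $\textsf a^{2}r^{4}\le\textsf A$ and $|\textsf c|r\le\textsf A$ directly, whereas for $\textsf b\,r^{2}$ one invokes the dichotomy \eqref{eq:bcrucialaa}--\eqref{eq:bcrucialdaa}: if $\textsf b>0$ then $\textsf b\,r^{2}=|\textsf b|r^{2}\le\textsf A$ with no cancellation, and if $\textsf b<0$ then $|\textsf b|r^{2}<\textsf C^{-1}|\textsf a|r^{2}=\textsf C^{-1}\sqrt{\textsf a^{2}r^{4}}\le\textsf C^{-1}\sqrt{\textsf A}$, so in either case $|g''|\le 16\textsf A+4\max(\textsf A,\textsf C^{-1}\sqrt{\textsf A})+\textsf A$. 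Hence $\Erw[g'']\le C$ and \eqref{eq:superexplogvar} follows.

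For \eqref{eq:radiuslowerb} I would estimate the left tail of the density directly. For $t\le t_{0}$ with $t_{0}\to-\infty$ one has $g(t)\ge-t-\big(\textsf a^{2}e^{4t_{0}}+|\textsf b|e^{2t_{0}}+|\textsf c|e^{t_{0}}\big)=:-t-\eps(t_{0})$ with $\eps(t_{0})\le 6A^{2}W^{2D}e^{t_{0}}\to0$ by $(I)$, so
\[
\pr\{t\le t_{0}\}\le\mu\int_{-\infty}^{t_{0}}e^{W(t+\eps(t_{0}))}\,dt=\mu\,e^{W\eps(t_{0})}\,\frac{e^{Wt_{0}}}{W}\,,\qquad \mu^{-1}=\int_{\R}e^{-Wg}.
\]
A Laplace lower bound $\mu^{-1}\ge c\,W^{-1/2}e^{-Wg(t_{1})}$ (using $|g''|\le C_{*}$ on a $W^{-1/2}$-window around the mode $t_{1}$, which is the previous step applied at $t_{1}$) reduces matters to an upper bound on $g(t_{1})=\min g$. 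The critical-point equation $4\textsf a^{2}r_{1}^{4}+2\textsf b\,r_{1}^{2}+\textsf c\,r_{1}=1-W^{-1}$ together with $(I)$ forces $r_{1}\ge cW^{-2D}$, so $-\log r_{1}\le 2D\log W+C$; and the controlled quantities $\textsf a^{2}r^{4},|\textsf b|r^{2},|\textsf c|r$, being increasing in $r$ with $\cB_{\texttt r}$ only a far tail, are $\le$ an absolute $\Lambda$ at $r=r_{1}$ (this is where positivity is used — see below), so $g(t_{1})=-\log r_{1}+\textsf a^{2}r_{1}^{4}+\textsf b\,r_{1}^{2}+\textsf c\,r_{1}\le C\log W$. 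Plugging $t_{0}=-\textsf d\log W$ for a suitable absolute $\textsf d$ then gives $\pr\{r<c_{1}W^{-\textsf d}\}\le Ce^{-cW}$, which is \eqref{eq:radiuslowerb}.

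The step I expect to be the main obstacle is exactly the one flagged in Remark~\ref{rem:sphericaldistribution}(1): the estimates in $(III)$ hold only off the exceptional set $\cB_{\texttt r}$, yet one must know they hold at the relevant points — the mode $t_{1}$ (for the normalization $\mu$, and for the bound $|g''|\le C_{*}$ near $t_{1}$) and the median of $r$. A priori the mode of a spread-out density can sit inside a set of small probability, so this is genuinely delicate. The resolution uses \eqref{eq:bcrucialaa}: in the dangerous regime $\textsf b>0$ (or \eqref{eq:bcrucialdaa} holds), so all three quantities $\textsf a^{2}r^{4},\textsf b\,r^{2},|\textsf c|r$ are nonnegative and monotone increasing in $r$; hence $\{r:\max(\textsf a^{2}r^{4},\textsf b\,r^{2},|\textsf c|r)>\Lambda\}$ is a tail $(s^{*},+\infty)\subseteq\cB_{\texttt r}$ of $\pr$-measure $<\tfrac12$, so the median — and, after a quantitative comparison of mode and median at scale $W^{-1/2}$, also $t_{1}$ and its neighbourhood — lies below $s^{*}$ and the estimates are available there, with no hidden cancellations. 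Making this quantitative — locating $t_{1}$ relative to the essential support and to $\cB_{\texttt r}$, and controlling the cubic remainder in the Laplace expansion of $\mu^{-1}$ uniformly in the coefficients — is the technical heart of the argument.
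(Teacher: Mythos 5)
Your variance argument is a genuinely different route from the paper's. The paper proves \eqref{eq:superexplogvar} via Proposition~\ref{prop:elementary2g}: a direct construction, by Laplace-type analysis of the density $\phi(r)$, of five points $r_{|0}<\dots<r_{|4}$ separated by factors $1+c/\sqrt W$ such that each of the four gaps carries probability $\geq 1/40$, and then reads off the variance from spread over these intervals. You instead pass to $t=\log r$, observe that the density of $t$ is $\propto e^{-Wg(t)}$ with the $e^{Wt}$ contribution folded in so that $g(t)\to+\infty$ at both ends, and carry out three integrations by parts to get the Fisher-information / Cram\'er--Rao chain
\[
\frac1W=\Erw\big[(t-\Erw t)g'\big]\le\sqrt{\Var t}\,\sqrt{\Erw[(g')^2]}=\sqrt{\Var t}\,\sqrt{\tfrac1W\Erw[g'']}\,,
\]
which reduces the variance bound to $\Erw[g'']\le C$. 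The identities and the boundary vanishing are correct (the $\textsf a^2>0$ condition gives super-exponential decay at $+\infty$; at $-\infty$ the $e^{Wt}$ factor does the job). This is a cleaner entry point than the paper's explicit construction, and it places the role of the dichotomy \eqref{eq:bcrucialaa}--\eqref{eq:bcrucialdaa} in sharp relief: it is exactly what makes $|g''|$ bounded on the good event, including when $\textsf b<0$ via $|\textsf b|r^2<\textsf C^{-1}\sqrt{\textsf a^2 r^4}\le\textsf C^{-1}\sqrt{\textsf A}$.

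There is, however, a gap where you control $\Erw[g''\mathbf 1_{\mathrm{bad}}]$. You quote condition $(II)$ ``in the form $\pr\{r\ge s\}\le C_0e^{-c_0s^2W}$ for all $s\ge A$,'' but $(II)$ as stated in Section~\ref{sec:superexp} gives the tail bound only at the one absolute cutoff $A$. With only $\pr\{r>A\}\le Ce^{-cW}$ and monotonicity, the moment $\Erw[r^m\mathbf 1_{r>A}]$ is not controlled (the tail probability does not decay in $s$, so the integral $\int_A^\infty s^{m-1}\pr\{r>s\}\,ds$ is not finite from that alone). The quantity $\Erw[g'']=W\Erw[(g')^2]$ is essentially \emph{equivalent} in strength to the variance lower bound you are proving (for a Gaussian it is $1/(W\sigma^2)$), so this step cannot be waved away. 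It is true that the underlying machinery (Proposition~\ref{prop:mainest}(3) and Proposition~\ref{prop:condecoupling}(2)--(3)) does give the full tail $\pr\{r>s\}\le Ce^{-cs^2W}$ for all $s$, and with that your moment bound closes; but then you should state that you are using that strengthening rather than $(II)$. Alternatively you could bound $\Erw[g'']$ via the IBP identity $\Erw[g']=0$, i.e.\ $4\textsf a^2M_4+2\textsf b M_2+\textsf c M_1=1$, which gives $\Erw[g'']=4-4\textsf b M_2-3\textsf c M_1$; this eliminates the $\textsf a^2r^4$ term but still leaves $|\textsf b|M_2$ and $|\textsf c|M_1$, for which the same tail control is needed.

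Your treatment of the second assertion \eqref{eq:radiuslowerb} is, by contrast, essentially the paper's own Laplace argument (lower bound on $\mu^{-1}$ via a $W^{-1/2}$-window around the mode $r_1$, upper bound on $g(t_1)$ via $r_1\gtrsim W^{-\textsf d}$ and the boundedness of $\textsf a^2r_1^4,\textsf b r_1^2,|\textsf c|r_1$). You are right to flag, echoing Remark~\ref{rem:sphericaldistribution}(1), that locating the mode inside the good set is the delicate step, and that the positivity of $\textsf b$ (or \eqref{eq:bcrucialdaa}) is what makes the controlled quantities monotone in $r$ so that $\{r:\max(\textsf a^2r^4,\textsf b r^2,|\textsf c|r)>\textsf A\}$ is a tail and hence the mode sits where the bounds hold. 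That is precisely what Proposition~\ref{prop:elementary2g}(viii) accomplishes; you have correctly identified the mechanism but not supplied the quantitative comparison between mode and median, which is the content of that part of the paper.
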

The proof of Proposition~\ref{prop:superexplogvar} is a simple corollary of the Proposition~\ref{prop:elementary2g} below. The latter proposition 
establishes a detailed analysis of the concentration estimates addressing several different narrow intervals of size $\sim W^{-{1\over 2}}$ of the random variable $r$ values. The final result of the analysis itself is a combination of a very long list of elementary calculus estimates, which address integrals of the density $\phi(r)$ over specific intervals where $\phi$ is monotonic.
\begin{prop}\label{prop:elementary2g} Assume that conditions $(I)-(III)$ hold. Denote
\be
\nn f\left(r\right)=W^{-1}\log \phi\left(r\right)=\qquad\qquad\qquad\qquad\\
\label{eq:radiusdistriba}\\
\nn W^{-1}\log \mu+\left(1-W^{-1}\right) \log r-\textsf a^2r^4-\textsf br^2-\textsf cr,\quad r\ge 0
\ee

\smallskip
$(i)$ There exist $0<r_1<2A$ and $\delta_1>0$, such that 
$f'\left(r\right)\ge 0$ for $0\le r\le r_1$, and $f'\left(r\right)<0$ for $r_1<r\le r_1+\delta_1$. In particular, $f'\left(r_1\right)=0$, $f''\left(r_1\right)<0$. The estimates 
\begin{gather*}
r_1\ge c_0\min\left(\textsf a^{-{1\over 2}},|\textsf b|^{-{1\over 2}},|\textsf c|^{-1}\right)\ge cW^{-\textsf d}
\end{gather*}
hold with absolute constants $c,\textsf d$.

\smallskip
$(ii)$ 
Assume that $\textsf b>0$. Then $r_1$ is the only critical point of $f$ on the interval $[0,2A]$.

\smallskip
$(iii)$ There might be at most one point $0<r_1<r_3$ with $f'\left(r_3\right)=0$, $f''\left(r_3\right)\le 0$. If $r_3$ exists, then $(iiia)$ $f''\left(r_3\right)<0$, $(iiib)$ $f(r)$ has a local minimum at $r_1< r_2<r_3$ and no other critical points on $(0,+\infty)$, $(iiic)$ $f''(r)$ has exactly two zeros, both sitting in $[r_1,r_3]$ and no other positive zeros.

\smallskip
$(iv)$ Take $r_j$, $j\in\{1,3\}$ and $\eta>-1$. The following equation holds
\be
\nn f\left(r_j\left(1+\eta\right)\right)- f\left(r_j\right)=\qquad\qquad\qquad\qquad\\
\label{eq:logdensitydecay}\\
\nn\left(1-\nu\right) \log \left(1+\eta\right)-\left(1-\nu\right)\eta-\left(6\textsf a^2r_j^4+\textsf br_1^2\right)\eta^2-4\textsf a^2r_j^4\eta^3-\textsf a^2r_j^4\eta^4
\ee
In particular, for $|\eta|\le {1\over 4}$
\be
\nn \exp W\left[-{2\over 3}\left(1-\nu\right)\eta^2-6\textsf a^2r_j^4\left(1+{3\eta\over 2}\eta+{\eta^2\over 6}\right)\eta^2-\textsf br_j^2\eta^2\right]\times \\
\label{eq:logdensitydecay2}\phi\left(r_j\right)\le\phi\left(r_j\left(1+\eta\right)\right)\le \phi\left(r_j\right)\times\qquad\qquad\\
\nn\exp W\left[-{2\over 5}\left(1-\nu\right)\eta^2-6\textsf a^2r_j^4\left(1+{3\eta\over 2}+{\eta^2\over 6}\right)\eta^2-\textsf br_j^2\eta^2\right]\quad
\ee

\smallskip
$(v)$ If $\textsf b\ge 0$, then the unction $f(r)$ decreases monotonically on each of the intervals $({3\over 4}r_1,r_1)$ and $(r_1, {5\over 4}r_1)$.  If $\textsf b<0$, then for $|\eta|\le {1\over 4}$
\be\label{eq:logdensitydecay3}\qquad
\exp \left[-W\textsf A_1\eta^2\right]\times 
\phi\left(r_j\right)\le\phi\left(r_j\left(1+\eta\right)\right)\le \phi\left(r_j\right)\times\exp\left[W\textsf A_1\eta^2\right]\quad
\ee
with absolute constant $\textsf A_1>0$.

\smallskip
$(vi)$ Assume $\textsf b\ge 0$. Then
for $|\eta|\le {1\over 8}$ holds
\begin{gather*}
{\phi\left(r_1\left(1+2\eta\right)\right)\over \phi\left(r_1\left(1+\eta\right)\right)}<
\exp W\left[-{3\over 5}\left(1-\nu\right)\eta^2-\textsf a^2r_1^4\eta^2-\textsf br_1^2\eta^2\right],
\end{gather*}

\smallskip
$(vii)$  Assume $\textsf b\ge 0$. There exists unique $0<\phi_{\textsf m}<\phi(r_1)$, such that
\begin{gather*}
\int_{\phi(r)<\phi_{\textsf m}}\phi\left(r\right)dr=\int_{\phi(r)>\phi_{\textsf m}}\phi\left(r\right)dr
\end{gather*}
The set $I_\textsf m=\{\phi(r)>\phi_{\textsf m}\}$ is an interval, $I_\textsf m:=(r_{1,<},r_{1,>})$.

\smallskip
$(viii)$  Assume $\textsf b\ge 0$. Then
\be\label{eq:arbound1}
 \max\left[\textsf a^2r_1^4,\textsf b r_1^2\right]<8A^2
\ee

\smallskip
$(ix)$ Use the notations in $(vii)$. Assume that
\begin{gather*}
\int_{r_1<r<r_{1,>}}\phi\left(r\right)dr\ge \int_{r_{1,<}<r<r_1}\phi\left(r\right)dr
\end{gather*}
Given $0<\tau<1$ there exists unique $r_1<r_{>|\tau}<r_{1,>}$, such that
\begin{gather*}
\int_{r_1<r<r_{>|\tau}}\phi\left(r\right)dr=\tau\int_{r_1<r<r_{1,>}}\phi\left(r\right)dr
\end{gather*}
Furthermore, 
\begin{gather*}
r_{>|{j+1\over 5}}>r_{>|{j\over 5}}\left(1+{c_j\over\sqrt W}\right),\quad j=0,1,2,3 
\end{gather*}
where $c_j>0$ are absolute constants.

A completely similar statement holds in case 
\begin{gather*}
\int_{r_{1,<}<r<r_1}\phi\left(r\right)dr\ge \int_{r_1<r<r_{1,>}}\phi\left(r\right)dr
\end{gather*}
with the interval $(r_{1,<},r_1)$ in the role of $(r_1,r_{1,>})$ and $-\eta$ in the role of $\eta$. We denote the respective points as $r_{<|{j+1\over 5}}<r_{<|{j\over 5}}$, $j=0,1,2,3$ and the main part of the statement is that 
\begin{gather*}
r_{>|{j+1\over 5}}<r_{>|{j\over 5}}\left(1-{c_j\over\sqrt W}\right),\quad j=0,1,2,3 
\end{gather*}
In particular, the statement implies that $r_1-r_{1,<}\ge r_1\left(1-{c\over\sqrt W}\right)$ with an absolute constant $c>0$.

\smallskip
$(x)$ Assume $\textsf b<0$. Assume also there is one local maximum $r_1<2A$. Assume that
\begin{gather*}
\int_{0<r<r_1}\phi\left(r\right)dr\le \int_{r_1<r<+\infty}\phi\left(r\right)dr
\end{gather*}
Given $0<\tau<1$ there exists unique $r_1<r_{>|\tau}$, such that
\begin{gather*}
\int_{r_1<r<r_{>|\tau}}\phi\left(r\right)dr=\tau\int_{r_1<r<+\infty}\phi\left(r\right)dr
\end{gather*}
Furthermore, 
\begin{gather*}
r_{>|{j+1\over 5}}>r_{>|{j\over 5}}\left(1+{c_j\over\sqrt W}\right),\quad j=0,1,2,3 
\end{gather*}
where $c_j>0$ are absolute constants.

A completely similar statement holds in case 
\begin{gather*}
\int_{0<r<r_1}\phi\left(r\right)dr\le \int_{r_1<r<+\infty}\phi\left(r\right)dr
\end{gather*}

\smallskip
$(xi)$ Assume $\textsf b<0$. Assume there are two local maxima $r_1<r_3$ as in part $(3)$.  Assume 
\begin{gather*}
\int_{0<r<r_2}\phi\left(r\right)dr\ge \int_{r_2<r<\infty}\phi\left(r\right)dr,
\end{gather*}
where $r_1<r_2<r_3$ is the only local minimum of $f(r)$, see part $(iii)$. Then all the statements in part $(x)$ hold. The same is true in the complementary case
\begin{gather*}
\int_{0<r<r_2}\phi\left(r\right)dr<\int_{r_2<r<\infty}\phi\left(r\right)dr,
\end{gather*}
with $r_3$ in the role of $r_1$.

\end{prop}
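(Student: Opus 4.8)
The plan is to read Proposition~\ref{prop:elementary2g} as the one–variable calculus of $f(r)=W^{-1}\log\mu+(1-\nu)\log r-\textsf{a}^2r^4-\textsf{b}r^2-\textsf{c}r$, with $\nu:=W^{-1}$, so that $\phi=e^{Wf}$ and
\[
f'(r)=\frac{1-\nu}{r}-4\textsf{a}^2r^3-2\textsf{b}r-\textsf{c},\qquad f''(r)=-\frac{1-\nu}{r^2}-12\textsf{a}^2r^2-2\textsf{b}.
\]
Two observations drive everything: on $(0,+\infty)$ the sign of $f'$ equals that of the genuine quartic $g(r):=rf'(r)=(1-\nu)-4\textsf{a}^2r^4-2\textsf{b}r^2-\textsf{c}r$, and $r^2f''(r)=-(1-\nu)-12\textsf{a}^2r^4-2\textsf{b}r^2$ is a downward–opening quadratic in $t=r^2$. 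From the latter, if $\textsf{b}\ge0$ then $f''<0$ throughout; and if we are in the branch \eqref{eq:bcrucialdaa}, $|\textsf{b}|<\textsf{C}^{-1}|\textsf{a}|$ with $\textsf{C}$ a large absolute constant, then the discriminant $4\textsf{b}^2-48\textsf{a}^2(1-\nu)$ is negative (as $1-\nu\ge\tfrac12$), so again $f''<0$; only in the branch \eqref{eq:dichotomy1}, where all of $\textsf{a}^2,|\textsf{b}|,|\textsf{c}|$ are already $\le\textsf{A}$, can $f''$ change sign, and then it does so at most twice. Thus for (i)–(iii): in the generic case $f''<0$, $f$ is strictly concave, $r_1$ is the unique and global maximum and $r_3$ does not exist; in the case \eqref{eq:dichotomy1} with $f''$ changing sign, one uses $g'''(r)=-96\textsf{a}^2r<0$, so $g$ is convex–then–concave with at most three positive zeros, defines $r_1$ as the first point where $f$ ceases to increase, rules out $f''(r_1)=0$ by a root–multiplicity check (a double zero of $g$ at $r_1$ would keep $g\ge0$ just to its right, contradicting the choice of $r_1$ or pushing the genuine first decrease past $2A$ against the confinement below), and identifies at most one further critical pair — a local minimum $r_2$ followed by a local maximum $r_3$ — with the single concave window of $f$ trapped in $[r_1,r_3]$. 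The lower bound in (i) is immediate from the critical identity $4\textsf{a}^2r_1^4+2\textsf{b}r_1^2+\textsf{c}r_1=1-\nu\ge\tfrac12$: by the triangle inequality $4\textsf{a}^2r_1^4+2|\textsf{b}|r_1^2+|\textsf{c}|r_1\ge\tfrac12$, so one term is $\ge\tfrac16$, whence $r_1\ge c_0\min(\textsf{a}^{-1/2},|\textsf{b}|^{-1/2},|\textsf{c}|^{-1})\ge cW^{-\textsf{d}}$ after feeding in (I); and $r_1<2A$ (likewise $r_3<2A$) follows from (II), since otherwise even the crude estimate $\phi(r_1(1+\eta))\ge\phi(r_1)e^{-(\mathrm{poly}\,W)\eta^2}$ implied by (I) via the identity in (iv) places mass $\gg e^{-c_0A^2W}$ in a window around $r_1\ge2A$.

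Part (iv) is pure algebra: plug $r=r_j(1+\eta)$ into $f$, expand $(1+\eta)^4-1$ and $(1+\eta)^2-1$, and use $1-\nu=4\textsf{a}^2r_j^4+2\textsf{b}r_j^2+\textsf{c}r_j$ to cancel the coefficient of $\eta$; then \eqref{eq:logdensitydecay2} comes from $-\tfrac23\eta^2\le\log(1+\eta)-\eta\le-\tfrac25\eta^2$ and $6+4\eta+\eta^2>0$ on $|\eta|\le\tfrac14$. The decisive quantitative input for (v), (vi), (viii) and for the quantile estimates is that \emph{the curvature at the mode is absolutely bounded}: $\textsf{a}^2r_1^4=O(1)$ always, and $|\textsf{b}|r_1^2=O(1)$ when $\textsf{b}\ge0$ (so that when $\textsf{b}<0$ one even gets $|\textsf{b}|r_1^2=(|\textsf{b}|/\textsf{a})(\textsf{a}r_1^2)<\textsf{C}^{-1}\cdot O(1)$, as small as one wishes). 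In the branch \eqref{eq:dichotomy1} this is trivial ($\textsf{a}^2r_1^4\le\textsf{A}(2A)^4$); in the other branches I would prove $\textsf{a}^2r_1^4=O(1)$ by contradiction: were it large, then (iv) at $1+\eta=(\textsf{A}/(\textsf{a}^2r_1^4))^{1/4}$ shows $\phi$ has dropped by $e^{-W\Omega(\textsf{a}^2r_1^4)}$ there, so almost all the mass sits in $\{r>r_1(\textsf{A}/(\textsf{a}^2r_1^4))^{1/4}\}\subseteq\{\textsf{a}^2r^4>\textsf{A}\}$, which in case \eqref{eq:dichotomy2} lies inside $\cB_{\texttt r}$, forcing $\pr\cB_{\texttt r}\approx1$ against (III). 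This is exactly the difficulty flagged in Remark~\ref{rem:sphericaldistribution}(1): the mode cannot be assumed to avoid $\cB_{\texttt r}$, and the clause \eqref{eq:bcrucialaa}–\eqref{eq:bcrucialdaa} is precisely what stops $|\textsf{b}|r_1^2$ from cancelling the strictly positive $\log$–curvature $1-\nu\ge\tfrac12$. With $\textsf{a}^2r_j^4,|\textsf{b}|r_j^2=O(1)$ for $r_j\in\{r_1,r_3\}$ in hand, (iv) gives $|f(r_j(1+\eta))-f(r_j)|\le\textsf{A}_1\eta^2$ on $|\eta|\le\tfrac14$, which is (v) for $\textsf{b}<0$ (for $\textsf{b}\ge0$ it is just concavity); (vi) is (iv) applied at $1+2\eta$ versus $1+\eta$; and (viii) follows from this bound together with $2|\textsf{b}|r_1^2\le(1-\nu)+|\textsf{c}|r_1$ and one more run of the $\cB_{\texttt r}$ argument for $|\textsf{c}|r_1$.

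For (vii) and (ix)–(xi): when $\textsf{b}\ge0$, strict concavity of $f$ makes each superlevel set $\{\phi>t\}$, $t<\phi(r_1)$, a bounded nonempty open interval around $r_1$, and $\phi_{\textsf m}$ is produced by the intermediate value theorem applied to the continuous decreasing map $t\mapsto\int_{\{\phi>t\}}\phi\,dr$; the doubling decay $\phi(r_1(1+2\eta))<e^{-\frac{3}{10}W\eta^2}\phi(r_1(1+\eta))$ of (vi), together with (II), then forces $I_{\textsf m}$ to be a $1/\sqrt W$–window around $r_1$. For the quintile steps I would work on one side, say $r>r_1$: $r_{>|\tau}$ is well defined and strictly increasing in $\tau$ by positivity of $\phi$, and $r_{>|(j+1)/5}\ge r_{>|j/5}(1+c_j/\sqrt W)$ follows by comparing two estimates — the lower bound $\phi(r_1(1+\eta))\ge\phi(r_1)e^{-\Theta W\eta^2}$ with $\Theta=O(1)$ (from the curvature bound) makes each fifth of the right–hand mass at least $\asymp\phi(r_1)r_1/\sqrt W$, while $\phi\le\phi(r_1)$ on $(r_1,r_{1,>})$ caps the mass of a window of multiplicative width $c/\sqrt W$ by $\phi(r_1)r_1c/\sqrt W$, so a fifth of the mass cannot fit there once $c$ is a small enough absolute constant; passing to $s=\sqrt W\,\eta$ makes the partial integrals comparable to multiples of $\int_0^s e^{-\Theta t^2}\,dt$, whose $\tfrac j5$–quantiles are $\Theta(1)$–separated in $s$, the contribution of $|\eta|>\tfrac14$ being $\le C_0e^{-c_0A^2W}$ by (II) and negligible. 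That is (ix); in the two–maxima situation (x)–(xi) — which only occurs inside branch \eqref{eq:dichotomy1}, where all parameters are $O(1)$ — the same concentration first confines almost all the mass to a $1/\sqrt W$–window around whichever of $r_1,r_3$ carries the larger half, reducing to the single–bump case, and summing the steps yields the asserted lower bound on $r_1-r_{1,<}$. The main obstacle throughout is precisely the uniformity just used: (I) controls $\textsf{a}^2,|\textsf{b}|,|\textsf{c}|$ only up to powers of $W$, so excluding a $W$–dependent curvature $\textsf{a}^2r_1^4$ at the mode genuinely requires the self–referential use of the tail bound (II)/(III), and keeping the quadratic coefficient nondegenerate requires the positivity/smallness dichotomy \eqref{eq:bcrucialaa}–\eqref{eq:bcrucialdaa}; once these two points are secured, the remainder is a long but entirely routine unwinding of one–variable calculus.
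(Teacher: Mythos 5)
You follow the same general plan as the paper (elementary calculus of $f$, the exact Taylor identity \eqref{eq:logdensitydecay} at critical points, localization of the mode to $(0,2A)$ via the tail condition (II), and the dichotomy (III)), and your observation that $r^2f''(r)$ is a downward quadratic in $t=r^2$ whose discriminant is negative whenever $\textsf b\ge0$ or $|\textsf b|<\textsf C^{-1}|\textsf a|$ — so that two local maxima can only occur under \eqref{eq:dichotomy1} — is a clean unification that the paper's proof of $(iii)$ leaves implicit. The genuine difference is in $(viii)$. The paper proves, inside the proof of $(vii)$, the extra fact $|I_{\textsf m}|<r_1/4$ (via $(v)$--$(vi)$ and a case analysis with a truncation at $r=A$), and then deduces \eqref{eq:arbound1} by picking $\check r\in I_{\textsf m}\setminus\cB_{\texttt r}$ (nonempty because $\pr(I_{\textsf m})=\tfrac12\gg\pr\cB_{\texttt r}$), applying \eqref{eq:dichotomy2} at $\check r$, and transferring to $r_1$ through $r_1\le\tfrac98\check r$. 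You instead establish $\textsf a^2r_1^4=O(1)$ directly by a concentration contradiction: if it were large, the exact formula $(iv)$ forces the density to crash to the left of $(\textsf A/\textsf a^2)^{1/4}$, so essentially all the mass lies in $\{\textsf a^2r^4>\textsf A\}\subseteq\cB_{\texttt r}$, contradicting (III); you then control $|\textsf c|r_1$ by the same device and feed both into the critical identity $4\textsf a^2r_1^4+2\textsf br_1^2+\textsf cr_1=1-\nu$ to bound $\textsf br_1^2$. Both routes resolve the delicate point flagged in Remark~\ref{rem:sphericaldistribution}$(1)$ that $r_1$ itself cannot be assumed off $\cB_{\texttt r}$; the paper's point-picking is very short once the $|I_{\textsf m}|$ bound is in hand, while yours bypasses that bound at the cost of a mass-comparison (normalization $\mu$ versus the peak width) that you only sketch. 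Similarly, your assertions in $(vii)$ that $(vi)$ alone forces $I_{\textsf m}$ into a $1/\sqrt W$-window, and your account of the quintile steps in $(ix)$--$(xi)$, are stated at the level of ideas; making them rigorous requires exactly the $\textsf M_<,\textsf M_c,\textsf M_>$ bookkeeping and the truncation at $r=A$ that occupy most of the paper's proof of $(vii)$ and $(ix)$. So: same skeleton, a genuinely different and workable route for $(viii)$, but the quantitative core of $(vii)$ and $(ix)$--$(xi)$ still needs to be carried out along lines close to the paper's.
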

\begin{proof} 
$(i)$ Write
\begin{gather*}
f\left(0\right)=-\infty=f\left(+\infty\right),\\
f'={1-W^{-1}\over r}-\left(4\textsf a^2 r^3+2\textsf b r+\textsf c\right)=-{4\textsf a^2 r^4+2\textsf b r^2+\textsf c r-\left(1-\nu\right)\over r},\\
\nu=W^{-1}\ll 1,\quad
f'\left(0\right)=\infty=-f'\left(+\infty\right)
\end{gather*}
Assume $f'\left(r\right)\ge 0$ for $0\le r\le 2A$. Then $f'\left(r\right)>0$ for $0\le r\le 2A$, and therefore $\phi\left(r\right)$ increases for $0\le r\le 2A$. In particular,
\begin{gather*}
\pr\left\{r>A\right\}=
\int^\infty_{A}\phi\left(r\right)dr\ge \int^{2A}_{A}\phi\left(r\right)dr\ge  \int^{A}_{0}\phi\left(r\right)d r=\pr\left\{r\le A\right\}=\\
1-\pr\left\{r>A\right\},
\end{gather*}
i.e. $\pr\left\{r>A\right\}\ge {1/2}$, contrary to one of the conditions assumed. That implies the statement. The estimate $r_1\ge {1\over 6\sqrt 2}\min\left(\textsf a^{-{1\over 2}},|\textsf b|^{-{1\over 2}},|\textsf c|^{-1}\right)$ follows from the equation
\begin{gather*}
4\textsf a^2 r^4+2\textsf b r^2+\textsf c r-\left(1-\nu\right)=0
\end{gather*}

\smallskip
$(ii)$ Use $\textsf b>0$, compute, get
\begin{gather*}
f''=-{\left(1-\nu\right)\over r^2}-12\textsf a^2 r^2-2\textsf b<0,\quad \text{ for $r>0$}
\end{gather*}
By Rolle theorem $f'$ can not have more than one root in $(0,+\infty)$, as claimed.

\smallskip
$(iii)$
Write the equation
\begin{gather*}
f''=-{\left(1-\nu\right)\over r^2}-12\textsf a^2 r^2-2\textsf b=0
\end{gather*}
The equation  has two positive roots at most, for it is bi-quadratic.
By Rolle theorem $f'$ can not have more than three positive roots counted with their multiplicities. Assume there exists $0<r_1<r_3$ with $f'\left(r_3\right)=0$, $f''\left(r_3\right)\le 0$. Denote $0<r_1\le r_2\le r_3$ the point at which $f$ assumes its minimum on $[r_1,r_3]$. Then $f'\left(r_2\right)=0$, $f''\left(r_2\right)\ge 0$. Part $(1)$ implies, in particular, $r_1<r_2$, for $f''\left(r_1\right)< 0$. Assume $f''\left( r_2\right)=0$. Then $f'''\left(r_2\right)=0$, for $r_2$ is a local minimum. Hence, $r_2$ is a root of $f'$ of order 3 at least. Since $f'\left(r_1\right)=0$ and  $r_1<r_2$, $f'$ has at least four roots when conted with their multiplicities. This is contrary to conclusion we made before. Thus, $f''\left(r_2\right)>0$. In particular $r_2<r_3$ and $f'$ has at least three roots $0<r_1<r_2<r_3$. Since $f$ can not have four roots this verifies $(iiia)$, $(iiib)$ along with the $r_3$--uniqueness statement in $(iii)$. Combine Rolle theorem with $(iiib)$, get $(iiic)$. 

\smallskip
$(iv)$ Equation $f'\left(r_j\right)=0$ reads
\begin{gather*}
4\textsf a^2r_j^4+2\textsf br_j^2+\textsf cr_1-1+\nu=0
\end{gather*}
Take $|\eta|<1$, use \eqref{eq:radiusdistriba}, combine with the equation in the display, write
\begin{gather*}
f\left(r_j\left(1+\eta\right)\right)- f\left(r_j\right)=\\
\left(1-\nu\right) \log \left(1+\eta\right)-\textsf a^2 r_1^4\left(1+\eta\right)^4-\textsf b r_1^2\left(1+\eta\right)^2-\textsf cr_j \left(1+\eta\right)+
\textsf a^2r_1^4+\textsf br_j^2+\textsf cr_j=\\
\left(1-\nu\right) \log \left(1+\eta\right)-\left(4\textsf a^2r_j^4+2\textsf br_j^2+\textsf cr_j\right)\eta-\left(6\textsf a^2r_j^4+\textsf b r_j^2\right)\eta^2
-4\textsf a^2r_j^4\eta^3-\textsf a^2r_j^4\eta^4=\\
\left(1-\nu\right) \log \left(1+\eta\right)-\left(1-\nu\right)\eta-\left(6\textsf a^2r_j^4+\textsf br_j^2\right)\eta^2-4\textsf a^2 r_j^4\eta^3-\textsf a^2r_j^4\eta^4,
\end{gather*}
as claimed. Combine \eqref{eq:logdensitydecay} with the standard logarithm power series expansion, get \eqref{eq:logdensitydecay2}.

\smallskip
$(v)$ The statement for $\textsf b\ge 0$ follows from part $(ii)$. Assume $\textsf b<0$. Then it is the first case in the the dichotomy conditions in $(III)$, i.e.
\be
\label{eq:dichotomy120}\max\left(\textsf a^2,\left|\textsf b\right|,\left|\textsf c\right|\right)\le \textsf A
\ee
Combine \eqref{eq:logdensitydecay2} with \eqref{eq:dichotomy120}, get the second statement in 
$(v)$.

\smallskip
$(vi)$ Use \eqref{eq:logdensitydecay2} with $|\eta|\le {1\over 8}$, combine with $\textsf b\ge 0$,
get
\begin{gather*}
{\phi\left(r_1\left(1+2\eta\right)\right)\over \phi\left(r_1\left(1+\eta\right)\right)}\le\exp W\left[-{8\over 5}\left(1-\nu\right)\eta^2-16\textsf a^2r_j^4\eta^2-\textsf br_j^2\eta^2\right]\times\\
\exp W\left[{2\over 3}\left(1-\nu\right)\eta^2+6\textsf a^2r_1^4\eta^2+\textsf br_1^2\eta^2\right]<\\
\exp W\left[-{3\over 5}\left(1-\nu\right)\eta^2-\textsf a^2r_1^4\eta^2-\textsf br_1^2\eta^2\right],
\end{gather*}
as claimed.

\smallskip
$(vii)$ In this part statement we assume $\textsf b\ge 0$. Therefore $r_1$ is the only critical point in the interval $(0,2A)$. The function  
\begin{gather*}
\Pi(t)=\int_{\phi(r)<t}\phi\left(r\right)dr,\quad 0<t<\phi(r_1)
\end{gather*}
increases monotonically and continuos. There exists unique $0<t:=\phi_{\textsf m}<\phi(r_1)$, such that $\Pi\left(\phi_{\textsf m}\right)={1\over 2}\Pi\left(\phi(r_1)\right)$. Since the function $\phi(r)$ has unique maximum at $r=r_1$,
the set $I_\textsf m=\{\phi(r)>\phi_{\textsf m}\}$ is an interval, $I_\textsf m=(r_{1,<},r_{1,>})$,  $r_{1,<}<r_1<r_{1,>}$. Assume $\left|I_\textsf m\right|\ge {r_1\over 4}$. Then $\max\left(r_1-r_{1,<},r_{1,>}-r_1\right)\ge {r_1\over 8}$. Assume first that $r_1-r_{1,<}\le r_{1,>}-r_1$. Denote $\eta={1\over 16}>0$, $r_{1,k\eta}=r_1(1+k\eta)$, $k=1,2$. Then $r_{1,2\eta}\le r_{1,>}$. Use part $(v)$, write
\begin{gather*}
\phi\left(r_{1,>}\right)\le \phi\left(r_{1,2\eta}\right)\le\phi\left(r_{1,\eta}\right)\times
\exp W\left[-{3\over 5}\left(1-\nu\right)\eta^2-\textsf a^2r_1^4\eta^2-\textsf br_1^2\eta^2\right]
\end{gather*}
Next, 
\begin{gather*}
\int_{r_1<r<r_{1,\eta}}\phi\left(r\right)dr<\int_{I_\textsf m}\phi\left(r\right)dr:=\textsf M_c=\int_{(0,+\infty)\setminus I_\textsf m}\phi\left(r\right)dr=\\
\int_{0<r<r_{1,<}}\phi\left(r\right)dr+\int_{r_{1,>}<r<+\infty}\phi\left(r\right)dr:=\textsf M_<+\textsf M_>
\end{gather*}
Assume first that $r_{1,>}<A$. Truncate the variable $r$ to the level $r=A$ in the integral $\textsf M_>$. For that use \eqref{eq:rsupport}, write
\begin{gather*}
1\gg \pr\left[\cB_\texttt r\right]\ge p_A:=\pr\left\{A<r<+\infty\right\}={\textsf M_{A}\over \textsf M_<+\textsf M_c+\textsf M_>}={\textsf M_{A}\over 2\textsf M_c},\\
\textsf M_{A}:=\int_{A<r<+\infty}\phi\left(r\right)dr,\\
\textsf M_c=\textsf M_<+\textsf M_>=\textsf M_<+\textsf M_{>,A}+\textsf M_{A}=\textsf M_<+\textsf M_{>,A}+2p_A\textsf M_c,\\
\textsf M_{>,A}:=\int_{r_{1,>}<r<A}\phi\left(r\right)dr,\\
\textsf M_c=\left(1-2p_A\right)^{-1}\left[\textsf M_<+\textsf M_{>,A}\right]<2\left[\textsf M_<+\textsf M_{>,A}\right],\\
\end{gather*}
Combine, get 
\begin{gather*}
\int_{r_1<r<r_{1,\eta}}\phi\left(r\right)dr<\left(1-2p_A\right)^{-1}\left[\textsf M_<+\textsf M_{>,A}\right]<2\left[\textsf M_<+\textsf M_{>,A}\right]
\end{gather*}
On the other hand, compare
\begin{gather*}
\phi\left(r\right)\le \phi\left(r_{1,>}\right)\le \phi\left(r_{1,2\eta}\right),\quad \text{for $r_{1,>}<r<+\infty$},\\
\phi\left(r\right)\le \phi\left(r_{1,<}\right)=\phi\left(r_{1,>}\right),\quad \text{for $0<r<r_{1,<}$},\\
\textsf M_<+\textsf M_{>,A}\le A\phi\left(r_{1,>}\right)\le A\phi\left(r_{1,\eta}\right)\times
\exp W\left[-{3\over 5}\left(1-\nu\right)\eta^2-\textsf a^2r_1^4\eta^2-\textsf br_1^2\eta^2\right]<\\
Ar_1^{-1}\exp W\left[-{3\over 5}\left(1-\nu\right)\eta^2-\textsf a^2r_1^4\eta^2-\textsf b r_1^2\eta^2\right]\int_{r_1<r<r_{1,\eta}}\phi\left(r\right)dr
\end{gather*}
Combine, use the last estimate in part $(i)$, get 
\begin{gather*}
\int_{r_1<r<r_{1,\eta}}\phi\left(r\right)dr<2Ar_1^{-1}\exp W\left[-{3\over 8}\left(1-\nu\right)\eta^2-4\textsf a^2r_1^4\eta^2-\textsf b r_1^2\eta^2\right]\int_{r_1<r<r_{1,\eta}}\phi\left(r\right)dr,\\
\exp W\left[{3\over 5}\left(1-\nu\right)\eta^2+\textsf a^2r_1^4\eta^2+\textsf br_1^2\eta^2\right]<2Ar_1^{-1}<2AW^{\textsf d},\\
 W\left[{3\over 5}\left(1-\nu\right)\eta^2+\textsf a^2r_1^4\eta^2+\textsf br_1^2\eta^2\right]<\log 2+\log A+\textsf d\log W,
\end{gather*}
where $\textsf d$ is an absolute constant. Each term in the last line is non-negative. In particular
\begin{gather*}
{3\over 8}\left(1-\nu\right)\eta^2 <W^{-1}\log 2+W^{-1}\log A+\textsf d+W^{-1}\log W
\end{gather*}
Put here $\eta={1\over 16}$, conclude that this is in contradiction with the $W\gg 1$, $\textsf d\sim 1$ setup. That verifies the statement for the case it was argued. To finish the proof consider the complementary cases. Assume $r_{1,>}\ge A$. In this case write
\begin{gather*}
1\gg \pr\left[\cB\right]\ge p_A:=\pr\left\{A<r<+\infty\right\}\ge \pr\left\{r_{1,>}<r<+\infty\right\}={\textsf M_{>}\over 2\textsf M_c},\\
\textsf M_c=\textsf M_<+\textsf M_><\textsf M_<+2p_A\textsf M_c,\\
\textsf M_c=\left(1-2p_A\right)^{-1}\textsf M_<<2\textsf M_<\\
\end{gather*}
From this point the argument goes completely similar to the case $r_{1,>}< A$. Finally, consider the last remaining case  $r_1-r_{1,<}> r_{1,>}-r_1$.
If $r_{1,<}\ge A$, then $r_{1,>}> A$ and the argument goes completely similar to the case $r_{1,>}< A$. If $r_{1,<}< A$, the argument goes similar to the case $r_{1,>}< A$, but it is shorter for there is no need to truncate the variable $r$. That finishes $(vii)$. 

\smallskip
$(viii)$ Use the notations from the part $(vii)$ proof, write
\begin{gather*}
\pr\left\{r_{1,<}<r<r_{1,>}\right\}={\textsf M_{c}\over \textsf M_<+\textsf M_c+\textsf M_>}={\textsf M_{c}\over 2\textsf M_c}={1\over 2}
\end{gather*}
Use \eqref{eq:dichotomy2}, conclude there exists $\check r\notin \cB_\texttt r$, $r_{1,<}<\check r<r_{1,>}$, which implies 
\be
\nn \textsf a^2\check r^4,\textsf b\check r^2\le 4A^2\qquad\qquad
\ee
Combine this with the assumption $\left|I_\textsf m\right|< {r_1\over 4}$, get
\begin{gather*}
r_1\le \check r+{r_1\over 9},\quad r_1\le {9\over 8}\check r,\\
\max\left[\textsf a^2r_1^4,|\textsf b|r_1^2\right]<2\max\left[\textsf a^2\check r^4,|\textsf b|\check r^2\right]\le 8A^2,
\end{gather*}
as claimed.

\smallskip
$(ix)$ The argument for the first statement in $(ix)$ is completely similar to the one for the first statement in $(vii)$. To prove the second statement use \eqref{eq:logdensitydecay}, combine with \eqref{eq:arbound1}, get 
\begin{gather*}
f\left(r_1\left(1+\eta\right)\right)-f\left(r_1\left(1+\eta+{\kappa\over \sqrt W}\right)\right)\le \\
\left(1-\nu\right) \log \left(1+\eta\right)-\left(1-\nu\right) \log \left(1+\eta+{\kappa\over \sqrt W}\right)+CA^2\times{\kappa\over \sqrt W}\le\\
2CA^2\times{\kappa\over \sqrt W},
\end{gather*}
where $C$ is an absolute constant. As in $(ix)$ statement, denote for convenience $r_{>;\eta'}=r_1\left(1+\eta'\right)$. Take arbitrary integer $k\ge 1$, set in the display $\kappa=\kappa_k=C^{-1}A^{-2}k^{-2}$, $\eta=\eta_j=j\kappa_k$, $j=0,...,k$, write
\begin{gather*}
\textsf M_{>;j}:=\int_{r_{>;\eta_{j}}<r<r_{>;\eta_{j+1}}}\phi\left(r\right)dr\le e^{{1\over k^2}}\int_{r_{>;\eta_{j+1}}<r<r_{>;\eta_{j+2}}}\phi\left(r\right)dr:=\textsf M_{>;j+1},\\
\textsf M_{>;0}\le e^{{j\over k^2}}\textsf M_{>;j},\quad j=0,...,k-1,\\
\int_{0<r<+\infty}\phi\left(r\right)dr\ge \int_{0<r<r_{>;\eta_{k}}}\phi\left(r\right)dr=\sum_{0\le j\le k-1}\textsf M_{>;j}\ge \\
\textsf M_{>;0}\times\sum_{0\le j\le k-1} e^{-{j\over k^2}}\ge\textsf M_{>;0}\times\sum_{0\le j\le k-1} 1-{j\over k^2}>(k-1)\times \textsf M_{>;0}
\end{gather*}
Recall that we assume in $(ix)$ that
\begin{gather*}
\textsf M_{>,c}:=\int_{r_1<r<r_{1,>}}\phi\left(r\right)dr\ge \int_{r_{1,<}<r<r_1}\phi\left(r\right)dr:=\textsf M_{<,c}
\end{gather*}
Use the proof of part $(viii)$, combine with the last display, get
\begin{gather*}
\textsf M_c+\textsf M_<+\textsf M_>=\int_{0<r<+\infty}\phi\left(r\right)dr,\\
\textsf M_<+\textsf M_>=\int_{0<r<r_{1,<}}\phi\left(r\right)dr+\int_{r_{1,>}<r<+\infty}\phi\left(r\right)dr,
\textsf M_c=\textsf M_<+\textsf M_>,\\
\textsf M_{>,c}\ge {1\over 4}\int_{0<r<+\infty}\phi\left(r\right)dr
\end{gather*}
Combine, get
\begin{gather*}
\textsf M_{>,c}\ge {1\over 4}\int_{0<r<+\infty}\phi\left(r\right)dr>{k-1\over 4}\times \textsf M_{>;0}
\end{gather*}
Take here $k=21$, write
\begin{gather*}
\int_{r_1<r<r_{>;\eta_{1}}}\phi\left(r\right)dr=\textsf M_{>;0}<{1\over 5}\textsf M_{>,c}={1\over 5}\int_{r_1<r<r_{1,>}}\phi\left(r\right)dr
\end{gather*}
That implies
\begin{gather*}
r_{>|{1\over 5}}>r_{>;\eta_{1}}=r_1\left(1+{c_1\over\sqrt W}\right), 
\end{gather*}
where $c_1:=\kappa_{21}$ is an absolute constant. Note that
\begin{gather*}
\int_{r_{>|{1\over 5}}<r<r_{1,>}}\phi\left(r\right)dr={4\over 5}\textsf M_{>,c}
\end{gather*}
That allows to use the above arguments with $r_{>|{1\over 5}}$ in the role of $r_1$. That leads to 
\begin{gather*}
\int_{r_{>|{1\over 5}}<r<r_{1,>}}\phi\left(r\right)dr\ge {1\over 5}\int_{0<r<+\infty}\phi\left(r\right)dr>{k-1\over 5}\times \int_{r_{>|{1\over 5}}<r<r_{>|{1\over 5}}\left(1+\kappa_k\right)}\phi\left(r\right)dr
\end{gather*}
One can see that $k=21$ is again a right choice. Indeed, write
\begin{gather*}
\int_{r_{>|{1\over 5}}<r<r_{>|{1\over 5}}\left(1+\kappa_k\right)}\phi\left(r\right)dr<{1\over 4}\int_{r_{>|{1\over 5}}<r<r_{1,>}}\phi\left(r\right)dr\le \\
{1\over 5}\int_{r_1<r<r_{1,>}}\phi\left(r\right)dr
\end{gather*}
That implies
\begin{gather*}
r_{>|{2\over 5}}>r_{>|{1\over 5}}\left(1+{c_1\over\sqrt W}\right), 
\end{gather*}
where $c_2$ is an absolute constant. The rest of the proof for $(ix)$ is similar.

\smallskip
$(x)$ The proof in case $\textsf b<0$ is similar to the one for $\textsf b$ and actually shorter. The reason for the shortcut is that in this case equation \eqref{eq:dichotomy120} holds. In particular,
\be
\label{eq:dichotomy121}\max\left(\textsf a^2r_1^4,\left|\textsf b\right|r_1^2\right)\le A_1^2
\ee
with another absolute constant $A_1$. This estiame is the only one where we used the function $\phi(r)$ decay rate on its two monotonicity intervals from part $(iv)$. With ${eq:dichotomy121}$ in hand the proof goes completely similar to the one in $(ix)$ and we skip it.

\smallskip
$(xi)$ The argument is completely similar to the one in $(x)$ and we skip it.
\end{proof}
Now we prove Proposition~\ref{prop:superexplogvar}.
\begin{proof}[Proof of Proposition ~\ref{prop:superexplogvar}] Assume that conditions $(I)-(III)$ hold. 
Use parts $(ix)$, $(x)$, $(xi)$ from Proposition~\ref{prop:elementary2g}, conclude that in any event there exist five points, which we denote as $r_{|j}<r_{<|j+1}$, $j=0,1,2,3$, such that the following equations hold
\begin{gather*}
\int_{r_{|0}<r<r_{|4}}\phi\left(r\right)dr\ge {1\over 8}\int_{0<r<+\infty}\phi\left(r\right)dr,\\
\int_{r_{|j}<r<r_{|j+1}}\phi\left(r\right)dr\ge {1\over 5}\int_{r_{|0}<r<r_{|4}}\phi\left(r\right)dr,\quad j=0,1,2,3\\
r_{|j+1}>r_{|j}\left(1+{c\over\sqrt W}\right), 
\end{gather*}
where $c>0$ is an absolute constant. Combine first two equations, get
\begin{gather*}
\pr\left\{r_{|j}<r<r_{|j+1}\right\}\ge{1\over 40},\quad j=0,1,2,3
\end{gather*}
Take arbitrary constant $\bar r\ge 0$. Find smallest $j$ such that $\bar r\le r_{|j}$, evaluate
\begin{gather*}
\expc\left [\log r-\log \bar r\right]^2\ge\\
{1\over 40}\begin{cases}
\left[\log r_{|2}-\log r_{|1}\right]^2\ge \left[\log \left(1+{c\over\sqrt W}\right)\right]^2\ge {c^2\over 2W},\quad\text{if $j=0$},\\
\left[\log r_{|3}-\log r_{|2}\right]^2\ge \left[\log \left(1+{c\over\sqrt W}\right)\right]^2\ge {c^2\over 2W},\quad\text{if $j=1$},\\
\left[\log r_{|4}-\log r_{|3}\right]^2\ge \left[\log \left(1+{c\over\sqrt W}\right)\right]^2\ge {c^2\over 2W},\quad\text{if $j=2$},\\
\left[\log r_{|1}-\log r_{|0}\right]^2\ge \left[\log \left(1+{c\over\sqrt W}\right)\right]^2\ge {c^2\over 2W},\quad\text{if $j=3$},\\
\end{cases},
\end{gather*}
That implies the first statement. 

To verify the second statement use Proposition~\ref{prop:elementary2g}, $(i)$, write
\be\label{eq:rlowerest}
r_1\ge c_0\min\left(\textsf a^{-{1\over 2}},|\textsf b|^{-{1\over 2}},|\textsf c|^{-1}\right)\ge cW^{-\textsf d}
\ee
with absolute constants $c,\textsf d$. Next we need to use the function $\phi(r)$ decay rate on its monotonicity interval $(0,r_1)$. For that reason we again use 
the dichotomy conditions in $(III)$ and consider separately the cases $\textsf b\ge 0$ and $\textsf b<0$. 

Assume first that $\textsf b\ge 0$. In this case we use \eqref{eq:logdensitydecay2} from Proposition~\ref{prop:elementary2g}, $(iv)$ and write
\begin{gather*}
\phi\left(r_{1,-\eta}\right)\le\phi\left(r_{1}\right)\times
\exp \left[-W{3\over 5}\left(1-\nu\right)\eta^2\right], \quad 0<\eta<{1\over 8},\\
\phi\left(r\right)\le \phi\left(r_{1,-\eta}\right) \quad 0<r<\eta<{1\over 8}
\end{gather*}
On the other hand in this case
\be
\nn \max\left(\textsf a^2r_1^4,\left|\textsf b\right|r_1^2\right)\le 8A
\ee
Use \eqref{eq:logdensitydecay2} left inequality, get 
\begin{gather*}
\phi\left(r_1\left(1+{\eta\over \sqrt W}\right)\right)\ge \\
\exp \left[-{2\over 3}\left(1-\nu\right)\eta^2-6\textsf a^2r_j^4\left(1+{3\eta\over 2}\eta+{\eta^2\over 6}\right)\eta^2-\textsf br_j^2\eta^2\right]\times 
\phi\left(r_1\right)\ge \\
e^{-A_1\eta^2}\times 
\phi\left(r_1\right), \quad -1<\eta<1
\end{gather*}
with absolute constant $A_1$. Combine, get
\begin{gather*}
\pr\left\{0<r<r_{1,-{1\over 8}}\right\}={\textsf M_{(0,r_{1,-{1\over 8}})}\over \textsf M_{(0,+\infty)}}\le {\textsf M_{(0,r_{1,-{1\over 8}})}\over \textsf M_{(r_{1,-{1\over \sqrt W}}, r_{1,{1\over \sqrt W}})}},\\
\textsf M_{(0,r_{1,-{1\over 8}})}\le r_1\times \phi\left(r_{1}\right)\times\exp \left[-W{3\over 5\times 64}\left(1-\nu\right)\right]\le \\
2A\times \phi\left(r_{1}\right)\times\exp \left[-W{3\over 5\times 64}\left(1-\nu\right)\right],\\
\textsf M_{(r_{1,-{1\over \sqrt W}}, r_{1,{1\over \sqrt W}})}\ge {2\over \sqrt W}\times e^{-A_1}\times 
\phi\left(r_1\right),\\
\pr\left\{0<r<r_{1,-{1\over 8}}\right\}\le A\sqrt W\times\exp \left[-W{3\over 5\times 64}\left(1-\nu\right)+A_1\right]\le e^{-cW},
\end{gather*}
where $c$ is absolute constant. Combine with \eqref{eq:rlowerest}, get the statement in case $\textsf b\ge 0$.

Assume $\textsf b< 0$. Then it is the first case in the the dichotomy conditions in $(III)$. In particular
\be
\label{eq:dichotomy122}\textsf a^2,\left|\textsf b\right|\le \textsf A
\ee
Write
\begin{gather*}
f''(r)=-{\left(1-\nu\right)\over r^2}-12\textsf a^2 r^2-2\textsf b<-{\left(1-\nu\right)\over r^2}+\textsf A<\\
-{\left(1-\nu\right)\over 2r^2},\quad \text{ for $0<r<c$},\\
f'(r)<-c_1,\quad \text{ for $0<r<c$},\\
f(r)<f(r_1)-c_1(r_1-r),\quad \text{ for $0<r<\min\left(r_1,c\right)$}
\end{gather*}
with absolute constants $c,c_1$. With \eqref{eq:dichotomy122} and the last inequality in hand the argument goes completely similar to the one in case $\textsf b< 0$ and we skip it.

That finishes the proof of Proposition ~\ref{prop:superexplogvar}.
\end{proof}

\section{Proof of Theorem B.}\label{sec:Theorem B}
The proof of Theorem B basically consists of application of Propositions~\ref{prop:condecoupling1},\ref{prop:condecoupling},\ref{prop:superexplogvar} to each site $n$ factor in 
the Green function edge--to--edge matrix vector action
\be\label{eq:Greenedgetoedge1}
G_{[1,N]}(1;N)g= (-1)^{N-1}U_1^{-1}\times T_1\times ...\times U^{-1}_{N-1}\times T_{N-1}\times U_N^{-1}g
\ee
It is convenient to split the proof of Theorem B into two proposition and a closing argument.The main point of such presentation is that each proposition statement introduces notations and desribes in a transparent way what is the target. The proofs are easy because we use detailed statements from of Propositions~\ref{prop:condecoupling1},\ref{prop:condecoupling},\ref{prop:superexplogvar}.
The first proposition below describes repeated application of Proposition~\ref{prop:condecoupling1} to the factors in \eqref{eq:Greenedgetoedge1}.
\begin{prop}\label{prop:sitescondecoupling1} $(0)$ Fix arbitrary $U_1,U_2, ..., U_N$. In \eqref{eq:Greenedgetoedge1} redenote for convenience $g:=g_{N}$. Apply Propositions~\ref{prop:condecoupling1} with $n=N$, denote $Q_{1,N}$ the orthogonal matrix $Q_1$ in the equation \eqref{eq:vectorchange0}, $Q_{2,N-1}$ the orthogonal matrix $Q_2$ defined in the part $(2)$ of the proposition, $X_{1,N-1|T_{N-1}}$ the vector in the equation \eqref{eq:siteindex}, $g_{N-1|T_{N-1}}$ the vector $h$, defined in the equation \eqref{eq:vectorcond1}. Apply Propositions~\ref{prop:condecoupling1} with $n=N-1$, $g_{N-1|T_{N-1}}$ in the role of $g$, denote $X_{1,N-2|T_{N-2},T_{N-1}}$ the vector from \eqref{eq:siteindex}, $g_{N-1|T_{N-2},T_{N-1}}$ the vector from \eqref{eq:vectorcond1}. Repeat the application, get the vectors $X_{1,n-1|T_{n-1},...,T_{N-1}}$, $g_{1,n-1|T_{n-1},...,T_{N-1}}$.

The following equation holds
\be\nn
U_1^{-1}\times T_1\times ...\times U^{-1}_{N-1}\times T_{N-1}\times U_N^{-1}g=\qquad\qquad\qquad\qquad\\
\label{eq:Greenedgetoedgesplit}\\
\nn\prod_{2\le n\le N} \left\|U_n^{-1}g_{1,n-1|T_{n-1},...,T_{N-1}}\right\|^{-1}\times\left\|X_{1,n-1|T_{n-1},...,T_{N-1}}\right\|^{-1}\times U_1^{-1}g_{1,1|T_{1},...,T_{N-1}}
\ee

\smallskip
$(1)$ The random variables $\left\|X_{1,n-1|T_{n-1},...,T_{N-1}}\right\|$, $n=2,...,N$ are independent when conditioned on the random matrices  $U_2, ..., U_N$ and on the random vectors 
$g_{1,n-1|T_{n-1},...,T_{N-1}}$, $n=2,...,N$.

\end{prop}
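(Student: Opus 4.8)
The plan for part $(0)$ is a telescoping of Proposition~\ref{prop:condecoupling1}. I would apply parts $(1)$--$(3)$ of that proposition at the sites $n=N,N-1,\dots,2$, feeding into the role of $g$ the relayed unit vector $g_{1,n|T_n,\dots,T_{N-1}}$ (which is the fixed $g$ when $n=N$). Chasing through the two orthogonal changes of variables $T_{n-1}=R_{n-1}Q_{1,n}$, $R_{n-1}=Q_{2,n-1}S_{n-1}$ used there gives the one-step identity
\[
T_{n-1}\,U_n^{-1}\,g_{1,n|T_n,\dots,T_{N-1}}
=\|U_n^{-1}g_{1,n|T_n,\dots,T_{N-1}}\|\;Q_{2,n-1}\,X_{1,n-1|T_{n-1},\dots,T_{N-1}},
\]
so that, with $g_{1,n-1|T_{n-1},\dots,T_{N-1}}$ the normalized vector from \eqref{eq:vectorcond1}, the right-hand side equals $\|U_n^{-1}g_{1,n|\dots}\|\,\|X_{1,n-1|\dots}\|\,g_{1,n-1|\dots}$. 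Substituting this relation recursively into $U_1^{-1}T_1\cdots U_{N-1}^{-1}T_{N-1}U_N^{-1}g$ and collecting the scalar factors in front produces \eqref{eq:Greenedgetoedgesplit}; the conditioning labels in the subscripts are read off by tracking which of $U_*,T_*$ each of $Q_{1,n}$, $Q_{2,n-1}$, $X_{1,n-1}$ and $g_{1,n-1}$ is built from. This part is pure bookkeeping.

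For part $(1)$ the plan is to reduce the claim to a product-form conditional density. First, by Proposition~\ref{prop:newvariables}$(iv)$ (equivalently $(iii)$), conditioned on $U_1,\dots,U_N$ the matrices $T_1,\dots,T_{N-1}$ are independent, with $T_n$ of conditional density $\propto\rho(T_n)\,\phi(U_{n+1}+T_n^tU_n^{-1}T_n)$. Next, I would perform the global change of variables $(T_1,\dots,T_{N-1})\mapsto(S_1,\dots,S_{N-1})$, $S_n:=Q_{2,n}^tT_nQ_{1,n+1}^t$, where $Q_{2,n}=Q_2(U_n)$ and $Q_{1,n+1}=Q_1(U_{n+1},g_{1,n+1|T_{n+1},\dots,T_{N-1}})$, carried out from $n=N-1$ downward so that $Q_{1,n+1}$ is a function of $S_{n+1}$ (of $U_N$ and the fixed $g$ when $n=N-1$). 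This map is a bijection with Jacobian $\equiv 1$: it is triangular, in the sense that $S_n$ depends on $T_n$ only through left/right multiplication by orthogonal matrices and otherwise only on $S_{n+1},\dots,S_{N-1}$, so the global Jacobian is a product of unit layer Jacobians.

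Using $\rho(T_n)=\rho(S_n)$ and the invariance of $\mathrm{Tr}(\cdot)^2$ under orthogonal conjugation, the $n$-th factor of the conditional density becomes
\[
\rho(S_n)\,\phi(Q_{1,n+1}U_{n+1}Q_{1,n+1}^t+S_n^tC_nS_n)=:\psi_n(S_n;\hat x_{n+1}),
\]
where $C_n=Q_{2,n}^tU_n^{-1}Q_{2,n}$ depends only on $U_n$ and $Q_{1,n+1}U_{n+1}Q_{1,n+1}^t$ depends only on $U_{n+1}$ and on $g_{1,n+1|\dots}=Q_{2,n+1}X_{1,n+1}/\|X_{1,n+1}\|$, i.e.\ on $S_{n+1}$ only through the direction $\hat x_{n+1}:=X_{1,n+1}/\|X_{1,n+1}\|$ of its first column (this last point is precisely the decoupling established in Proposition~\ref{prop:condecoupling1}$(1)$--$(3)$). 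Thus the conditional density of $(S_1,\dots,S_{N-1})$ given the $U$'s factors as $\propto\prod_n\psi_n(S_n;\hat x_{n+1})$, with neighbours coupled only through $\hat x_{n+1}$. Finally I would condition additionally on $g_{1,1|\dots},\dots,g_{1,N-1|\dots}$; since $g_{1,j|\dots}=Q_{2,j}\hat x_j$ with $Q_{2,j}$ orthogonal and determined by $U_j$, this is the same as conditioning on all directions $\hat x_1,\dots,\hat x_{N-1}$. Splitting $S_n$ into its first column $X_{1,n}=r_n\hat x_n$ and the remaining columns $W_n$, with $dS_n=r_n^{W-1}\,dr_n\,d\sigma(\hat x_n)\,dW_n$, once the $\hat x$'s are frozen the factor $\psi_n(S_n;\hat x_{n+1})$ depends only on $(r_n,W_n)$. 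Hence the conditional density of $((r_1,W_1),\dots,(r_{N-1},W_{N-1}))$ becomes a genuine product $\propto\prod_n r_n^{W-1}\psi_n(r_n\hat x_n,W_n)$, and so the pairs $(r_n,W_n)$ — in particular the norms $r_n=\|X_{1,n|\dots}\|$, i.e.\ the random variables $\|X_{1,n-1|T_{n-1},\dots,T_{N-1}}\|$, $n=2,\dots,N$ — are conditionally independent.

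I expect the main obstacle to be the change of variables $T_*\mapsto S_*$: checking that it is well defined with unit Jacobian even though each $Q_{1,n+1}$ depends on a \emph{different} block ($S_{n+1}$) of the new variables, and checking cleanly that after the change the coupling between $S_n$ and $S_{n+1}$ runs only through the direction of the first column of $S_{n+1}$, so that it disappears upon conditioning on the relayed vectors. Both points are repackagings of Proposition~\ref{prop:condecoupling1}$(1)$--$(3)$, so the difficulty is organizational rather than substantive; the one genuinely delicate step is keeping the $\sigma$-algebras precise enough that ``conditioning on the $g$'s'' is rigorously identified, given the $U$'s, with ``conditioning on the $\hat x$'s''.
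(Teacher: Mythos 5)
Your proof is correct and follows the same route as the paper's own: part $(0)$ is the same telescoping of the one-step relay identity that Proposition~\ref{prop:condecoupling1} provides, and part $(1)$ rests on the same factorization of the conditional density of $X_{1,n-1|T_{n-1},\dots,T_{N-1}}$ coming from Proposition~\ref{prop:condecoupling1}$(3)$; the paper's own proof of $(1)$ is considerably more telegraphic (it simply observes that the conditional density of $X_{1,n-1|\dots}$ depends only on $U_{n-1},U_n,g_{1,n|\dots}$ and then asserts ``that implies $(1)$''), so your more explicit global change of variables $T_*\mapsto S_*$, the triangularity/Jacobian check, and the identification of ``conditioning on the $g$'s'' with ``conditioning on the directions $\hat x_n$'' are a useful unpacking of the same idea rather than a different argument.

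One point worth flagging: your one-step identity gives
$T_{n-1}U_n^{-1}g_{1,n|\dots}=\left\|U_n^{-1}g_{1,n|\dots}\right\|\,\left\|X_{1,n-1|\dots}\right\|\,g_{1,n-1|\dots}$
with the norms to the power $+1$, which is what follows from \eqref{eq:vectorchange0} (since $Q_1U_n^{-1}g=\|U_n^{-1}g\|e_1$, so $R_{n-1}Q_1U_n^{-1}g=\|U_n^{-1}g\|R_{n-1}e_1$), and which is what is needed downstream to match \eqref{eq:Greenedgetoedgesplitf} where $\log\|G_{[1,N]}(1;N)g\|=L+\sum_{n}\log r_{|n-1}$ with the positive sign. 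The exponent $-1$ on those norms in the displayed \eqref{eq:Greenedgetoedgesplit} (and in the intermediate equations \eqref{eq:vectorchange} and \eqref{eq:vectorchangeS} in the proof of Proposition~\ref{prop:condecoupling1}) is a typographical slip in the paper; your derivation has the correct sign, so do not adjust it to match the misprinted display.
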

\begin{proof} Rewrite the equations \eqref{eq:schurfactorvectoraction},
\eqref{eq:vectorcond} \eqref{eq:vectorcond1} using the above notations :
\be\nn
\textsf v_{g}\left(T_{n-1}\big|\right):=T_{n-1}\times U_n^{-1}g_{1,n-1|T_{n-1},...,T_{N-1}}
\ee
\be\nn
\left\|\textsf v_n\left(T_{n-1}\big|\right)\right\|^{-1}\textsf v_n\left(T_{n-1}\big|\right)=g_{1,n-2|T_{n-2},...,T_{N-1}}=
\ee
\be
\quad\left\|X_{1,n-1|T_{n-1},...,T_{N-1}}\right\|^{-1}Q_{2,n-1}X_{1,n-1|T_{n-1},...,T_{N-1}}\nn
\ee
Put this in \eqref{eq:Greenedgetoedge1}, get the equation in part $(0)$. 

To verify part $(1)$, recall 
that he conditional distribution of the random vector $X_{1,n-1|T_{n-1},...,T_{N-1}}$ is as follows
\be\nn\nn d\pr\left(X_{1,n-1|T_{n-1},...,T_{N-1}}\big|\right)=\qquad\qquad\\
\label{eq:Xdistribution}\\
\nn\mu e^{-{W\over 2}\phi\left(X_{1,n-1|T_{n-1},...,T_{N-1}}\big|\right)}\times dX_{1,n-1|T_{n-1},...,T_{N-1}},
\ee
where $\mu$ is the normalizing factor and the function $\phi\left(X_{1,n-1|T_{n-1},...,T_{N-1}}\big|\right)$ depends on $U_{n-1},U_{n}$ and the vector $g_{1,n|T_{n},...,T_{N-1}}$ only, see the statement in part $(3)$ of Propositions~\ref{prop:condecoupling1} for more details. That implies $(1)$.
\end{proof}
Next we apply Propositions~\ref{prop:condecoupling} to each factor $\left\|X_{1,n-1|T_{n-1},...,T_{N-1}}\right\|^{-1}$ in \eqref{eq:Greenedgetoedgesplit}.
\begin{prop}\label{prop:condecoupling2} Use the notations in Proposition~\ref{prop:sitescondecoupling1}.

\smallskip
$(0)$ Use the spherical variables, write 
\begin{gather*}
X_{1,n-1|T_{n-1},...,T_{N-1}}=r_{|n-1}\textsf x_{|n-1},\\
r_{|n-1}=\left\|X_{1,n-1|T_{n-1},...,T_{N-1}}\right\|,\quad \textsf x_{|n-1}\in\mathbb S^{W-1}
\end{gather*}
Then 
\be
g_{1,n-2|T_{n-2},...,T_{N-1}}=Q_{2,n-1}\textsf x_{|n-1},
\label{eq:sphericaln}
\ee
\be
\log\left\|G_{[1,N]}(1;N)g\right\|=L+\sum_{2\le n\le N} \log r_{|n-1},
\label{eq:Greenedgetoedgesplitf}
\ee
where $L$ does not depend on the variables $r_{|n-1}$. The variables $r_{|n-1}$,  $n=2,...,N$ are independent when conditioned on the random matrices  $U_2, ..., U_N$ and on the random vectors 
$\textsf x_{|n-1}$, $n=2,...,N$.

\smallskip
$(1)$ The conditional distribution $d\pr\left(r_{|n-1}\big|\right)$ is as follows
\begin{gather*}
d\pr\left(r_{|n-1}\big|\right)=\mu\Big(U_{n-1},U_{n},\textsf x_{|n-1},\textsf x_{|n}\Big)\times\\
r_{|n-1}^{W-1}e^{-{W\over 2}\left(\textsf a_{|n-1}^2 r_{|n-1}^4+\textsf b_{|n-1} r_{|n-1}^2+\textsf c_{|n-1} r_{|n-1}\right)}dr_{|n-1},
\end{gather*}
where 
\begin{gather*}
\text{$\textsf a_{|n-1}^2,\textsf b_{|n-1},\textsf c_{|n-1}$ are constants depending $U_{n-1},U_{n},\textsf x_{|n-1},\textsf x_{|n-1}$},\\
\text{$\mu\Big(U_{n-1},U_{n},\textsf x_{|n-1},,\textsf x_{|n}\Big)$ is the normalizing factor, $n=2,...,N$}
\end{gather*}

\smallskip   
$(2)$  There exists a set $\widehat{\cB}\in \mathcal F\left(U_2, ..., U_N\right)$,  such that
\be\label{eq:inverscrude}
\pr_N\left[\widehat{\cB}\right]\le CNW^{D-1},
\ee
where $D\gg 1$ is an absolute constant, and  for each $\left(U_2, ..., U_N\right)\notin \widehat{\cB}$ there exist sets $\cB_{|n-1,A}\in \mathcal F\left(T_{n-1},\textsf x_{|n}\right)$, depending on  $\left(U_2, ..., U_N\right)$, with 
\be\label{eq:Tnlimitations1}
\pr_N\cB_{|,n-1,A}\le C_1e^{-c_1A^2W}
\ee
and  for each $\left(T_{n-1},\textsf x_{|n-1}\right)\notin \cB_{|,n-1,A}$ there exist sets $\cB_{|\texttt r,n-1,A}\in \mathcal F\left(r_{|n-1}\right)$, depending on  $\left(T_{n-1},\textsf x_{|n}\right)$, with 
\be\label{eq:Tnlimitations2}
\pr_N\cB_{|\texttt r,n-1,A}\le C_1e^{-c_1A^2W}
\ee
where $A,C_1,c_1$ are absolute constants, $n=2,...,N$, such that for any $n$ and
\begin{gather*}
\text{for any $\left(U_2, ..., U_N\right)\notin \widehat{\cB}$},\\
\text{and any $\left(T_{n-1},\textsf x_{|n}\right)\notin \cB_{|n-1,A}$, $\left(T_{n-2},\textsf x_{|n-1}\right)\notin \cB_{|n-2,A}$},\\
\text{and any $r_{|n-1}\notin \cB_{|\texttt r,n-1,A}$, $r_{|n-2}\notin \cB_{|\texttt r,n-2,A}$}
\end{gather*}
conditions $(I)-(II)$ in Proposition~\ref{prop:superexplogvar} hold for $d\pr\left(r_{|n-1}\big|\right)$,$d\pr\left(r_{|n-2}\big|\right)$. Moreover, 
either $d\pr\left(r_{|n-1}\big|\right)$ obeys Proposition~\ref{prop:superexplogvar} condition $(III)$ dichotomy first case, or $d\pr\left(r_{|n-2}\big|\right)$ obeys Proposition~\ref{prop:superexplogvar} condition $(III)$  dichotomy second case.
Finally, the variables $\mathcal I_{\IR^W\times\mathbb S^{W-1}\setminus \cB_{|n-1,A}}\left(T_{n-1}\right)\times r_{|n-1}$,  $n=2,...,N$ are independent, when conditioned on the random matrices  $U_2, ..., U_N$. Here $\mathcal I_{\cA}$ stands for the indicator of the set $\cA$.
\end{prop}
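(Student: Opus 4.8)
The plan is to read parts $(0)$ and $(1)$ off Propositions~\ref{prop:sitescondecoupling1},~\ref{prop:condecoupling1} and~\ref{prop:condecoupling} almost directly, and to obtain part $(2)$ by assembling over the $N$ sites the bad sets furnished by Proposition~\ref{prop:condecoupling}. For $(0)$, substitute $\left\|X_{1,n-1|T_{n-1},\dots,T_{N-1}}\right\|=r_{|n-1}$ into \eqref{eq:Greenedgetoedgesplit}; this is legitimate since each $Q_{2,n-1}$ is orthogonal, and the relay equation \eqref{eq:vectorcond1} written in spherical coordinates is exactly \eqref{eq:sphericaln}. Taking the Euclidean norm and then logarithms in \eqref{eq:Greenedgetoedgesplit} produces \eqref{eq:Greenedgetoedgesplitf}, where $L$ collects the radius-independent logarithms $\log\left\|U_n^{-1}g_{1,n-1|\dots}\right\|$; by \eqref{eq:sphericaln} each $g_{1,n-1|\dots}$ equals $Q_{2,\bullet}\textsf x_{|\bullet}$ with $Q_{2,\bullet}\in\mathcal F(U_\bullet)$, so $L$ depends only on $U_2,\dots,U_N$ and on $\textsf x_{|2},\dots,\textsf x_{|N}$ and, in particular, not on the radii. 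The conditional independence of $r_{|2},\dots,r_{|N}$ given $U_2,\dots,U_N$ and $\textsf x_{|2},\dots,\textsf x_{|N}$ is then Proposition~\ref{prop:sitescondecoupling1}$(1)$ rephrased, since, given the $U$'s, the directions $\textsf x_{|n}$ and the output vectors $g_{1,n-1|\dots}$ generate the same $\sigma$-algebra ($Q_{2,\bullet}$ being invertible and $\mathcal F(U_\bullet)$-measurable). Part $(1)$ is Proposition~\ref{prop:condecoupling}$(0)$ applied at site $n-1$ with $g_{1,n}$ in the role of $g$ and the index $(n-1)$ attached to $\textsf a,\textsf b,\textsf c$; inserting $g_{1,n}=Q_{2,\bullet}\textsf x_{|n}$ into the formulas there exhibits the stated dependence of $\textsf a_{|n-1},\textsf b_{|n-1},\textsf c_{|n-1}$ on $U_{n-1},U_n,\textsf x_{|n-1},\textsf x_{|n}$.

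For part $(2)$, fix $n$ and apply Proposition~\ref{prop:condecoupling} at sites $n-1$ and $n-2$ (with inputs $g_{1,n}$ and $g_{1,n-1}$). Its part $(1)$ gives a set $\widehat{\cB}_{U,n-2,n-1,D}\in\mathcal F(U_{n-2},U_{n-1})$ --- of the small probability stated there --- off which the crude bound of condition $(I)$ of Proposition~\ref{prop:superexplogvar} holds for $\textsf a_{|n-1},\textsf b_{|n-1},\textsf c_{|n-1}$; its parts $(2)$ and $(3)$ give a set $\widehat{\cB}_{U,n-1,n,A}\in\mathcal F(U_{n-1},U_n)$ of probability $\le Ce^{-cA^2W}$ and, off it, $\textsf x$-valued and $r$-valued bad sets of small conditional probability off which $r_{|n-1}\le A$ and the auxiliary inequalities hold, and hence condition $(II)$ for $d\pr(r_{|n-1}\mid)$; and its part $(5)$ gives the dichotomy between condition $(III)$ first case for $d\pr(r_{|n-1}\mid)$ and condition $(III)$ second case, including \eqref{eq:bcruciala} and \eqref{eq:bcrucialda}, for $d\pr(r_{|n-2}\mid)$. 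I would set $\widehat{\cB}=\bigcup_{2\le n\le N}\bigl(\widehat{\cB}_{U,n-2,n-1,D}\cup\widehat{\cB}_{U,n-1,n,A}\bigr)\in\mathcal F(U_2,\dots,U_N)$; a union bound over the $N$ sites (the exponentially small contributions from the $\widehat{\cB}_{U,n-1,n,A}$ being absorbed) gives \eqref{eq:inverscrude}. For $(U_2,\dots,U_N)\notin\widehat{\cB}$ take $\cB_{|n-1,A}\in\mathcal F(T_{n-1},\textsf x_{|n})$ to be the union, transported via $\textsf x_{|n-1}=Q_{2,n-1}^t g_{1,n-2|\dots}$, of $\cB_{S,n-1,A}$, $\cB_{\texttt x,n-1,n,A}$ and $\widehat{\cB}_{\texttt x,n-1,n,A}$, and $\cB_{|\texttt r,n-1,A}$ to be the union of the $r$-valued sets $\cB_{X,n-1,n,A}$ and $\cB_{\texttt r,\textsf x,n-1,n,A}$; then \eqref{eq:Tnlimitations1} and \eqref{eq:Tnlimitations2} are inherited from Proposition~\ref{prop:condecoupling}, and on the complements of all these sets conditions $(I)$, $(II)$ and the dichotomy $(III)$ hold for $d\pr(r_{|n-1}\mid)$ and $d\pr(r_{|n-2}\mid)$ exactly as just listed.

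The step I expect to be the real obstacle is the final independence assertion: the radii at consecutive sites are a priori coupled through the shared direction $\textsf x_{|n-1}$, so it cannot be read off part $(0)$ without care. The mechanism is that, conditionally on $U_2,\dots,U_N$, the single-site law of $X_{1,n-1}$ from Proposition~\ref{prop:condecoupling1}$(3)$ depends only on $U_{n-1},U_n$ and the input vector $g_{1,n}$, and $r_{|n-1}=\left\|T_{n-1}U_n^{-1}g_{1,n}\right\|/\left\|U_n^{-1}g_{1,n}\right\|$, so that --- as in part $(0)$, after conditioning on $U_2,\dots,U_N$ and on the directions, which fix the $g_{1,n}$ --- the truncated radius $\mathcal{I}_{\IR^W\times\mathbb S^{W-1}\setminus\cB_{|n-1,A}}(T_{n-1})\,r_{|n-1}$ becomes a functional of the fresh matrix $T_{n-1}$ alone, the set $\cB_{|n-1,A}$ being $\mathcal F(T_{n-1},\textsf x_{|n})$-measurable. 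The conditional independence of $T_2,\dots,T_{N-1}$ given $U_2,\dots,U_N$ (Proposition~\ref{prop:newvariables}$(iii)$) then yields the assertion; the care required lies entirely in tracking which $\sigma$-algebra each quantity is measurable against. Everything else --- the $N$-fold union bounds and the verbatim transfer of the per-pair estimates of Proposition~\ref{prop:condecoupling} --- is routine.
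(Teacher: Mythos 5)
Your proposal follows the paper's own route essentially verbatim: parts $(0)$ and $(1)$ are read off Propositions~\ref{prop:sitescondecoupling1} and~\ref{prop:condecoupling}$(0)$, part $(2)$ assembles the per-site bad sets from Proposition~\ref{prop:condecoupling} by orthogonal transport and a union bound over $n$, and the final independence assertion is reduced to the conditional independence of the $T_n$ given the $U_n$ from Proposition~\ref{prop:newvariables}$(iii)$. Your version is, if anything, a bit more explicit than the paper's about which bad sets are absorbed into $\widehat{\cB}$ and about the $\sigma$-algebra bookkeeping behind the last claim, but the decomposition, the key lemmas invoked, and the logic are the same.
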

\begin{proof} Part $(0)$ follows straightforward from Proposition~\ref{prop:sitescondecoupling1}. Part $(1)$ follows straightforward from part $(0)$ in Proposition~\ref{prop:condecoupling}. To verify $(2)$ recall that part $(1)$ in Proposition~\ref{prop:condecoupling} reads  in particular:

there exists a set $\widehat{\cB}_{U,n-2,n-1,D}\in \mathcal F\left(U_{n-2},U_{n-1}\right)$,  such that
\begin{gather*}
\pr_N\left[\widehat{\cB}_{U,n-1,n,D,A}\right]\le CW^{D-1},
\end{gather*}
and a set $\cB_{S,n-1,A}\in \mathcal F\left(S\right)$ with 
\begin{gather*}
\pr_N\cB_{S,n-1,A}\le C_1e^{-c_1A^2W}
\end{gather*}
such that for any $\left(U_{n-1},U_{n}\right)\notin \widehat{\cB}_{U,n-1,n,D,A}$, $S\notin \cB_{S,n-1,A}$ and any $\textsf x$ holds 
\begin{gather*}
\max\left(\textsf a^2\left(\textsf x\right),\left|\textsf b\left(\textsf x\right)\right|,\left|\textsf c\left(\textsf x\right)\right|\right)\le 2A^2W^{2D}
\end{gather*}
Recall also that from its definition $S=Q_{2,n-1}T_{n-1}Q_{1,n-1}$. Denote 
\begin{gather*}
\widehat{\cB}=\bigcup_{1\le n\le N}\widehat{\cB}_{U,n-1,n,D,A},\\
\cB_{|\texttt T,n-1,A}=\left\{T_{n-1}:Q_{2,n-1}T_{n-1}Q_{1,n-1}\in \cB_{S,n-1,A}\right\},\\
\cB_{|n-1,A}=\cB_{|\texttt T,n-1,A}\cup \left\{\textsf x_{|n-1}: \textsf x_{|n-1}\in\cB_{\texttt x,n-1,n,A}\right\},\\
\cB_{|n-1,A}=\cB_{|\texttt T,n-1,A}=\left\{r_{|n-1}: r_{|n-1}\in \cB_{\texttt r,\textsf x,n-1,n,A}\right\},
\end{gather*}
where $\cB_{\texttt x,n-1,n,A}\in \mathcal F\left(\textsf x\right)$, $\cB_{\texttt r,\textsf x,n-1,n,A}\in \mathcal F\left(r\right)$ are from the statement in part $(3)$ of Proposition~\ref{prop:condecoupling}. Equation \eqref{eq:inverscrude} holds by subadditivity. 
holds becasue of Gaussian random matrix distribution orthogonal invariance.  Equations \eqref{eq:Tnlimitations1}, \eqref{eq:Tnlimitations2} follow from part $(3)$ in Proposition~\ref{prop:condecoupling}. The statement addressing conditions $(I)-(II)$ in Proposition~\ref{prop:superexplogvar} follow from parts $(2)$, $(3)$ in Proposition~\ref{prop:condecoupling}. The statement addressing the dichotomy follows from the last statement in part $(5)$ of Proposition~\ref{prop:condecoupling}.

Finally, the variables $T_{n}$, $n=1,...,N-1$ are independent, when conditioned on $U_{n}$, $n=1,...$. That implies the last statemnt.
\end{proof}
The proof of Theorem B consists of a short argument based on the following corollary, which we use also to prove Theorem C. To state the corollary we need to introduce yet another collection of sets. Here is the discription. In the proof of the corollary we use both statements in Proposition~\ref{prop:superexplogvar}, applied to the super-exponential density from part $(1)$ in Proposition~\ref{prop:condecoupling2}, i.e. to
\begin{gather*}
\text{$a_{|n-1}^2,\textsf b_{|n-1},\textsf c_{|n-1}$ are constants depending $U_{n-1},U_{n},\textsf x_{|n-1},\textsf x_{|n-1}$}
\end{gather*}
in the role of $\textsf a,\textsf b,\textsf c$, provided conditions in Proposition~\ref{prop:condecoupling2} part $(2)$ hold. Given $\textsf a,\textsf b,\textsf c$, the second statement in Proposition~\ref{prop:superexplogvar}  defines a set $\cB_{\phi}$, such that
\be
\nn \pr \cB_{\phi}\le Ce^{-cW}\qquad\qquad\qquad\qquad\\
\label{eq:radiuslowerb1}\\
\nn r\ge c_1W^{-\textsf d},\quad \text{for any $r\in (0,+\infty)\setminus \cB_{\phi}$},
\ee
see \eqref{eq:radiuslowerb}. Given $U_{n-1},U_{n},\textsf x_{|n-1},\textsf x_{|n-1}$, which obey conditions in Proposition~\ref{prop:condecoupling2} part $(2)$, redenote $\cB_{\phi}$ as $\cB_{|\texttt r,n-1}$. These are new sets we use in  sets we use in the corollary below.

\begin{cor}\label{cor:logrindep}
Use the notations in Proposition~\ref{prop:condecoupling2} part $(2)$. 
Take
\begin{gather*}
\left(U_2, ..., U_N\right)\notin \widehat{\cB}
\end{gather*}
Denote
\be
\nn\eta_{k,|} =\mathcal I_{\IR^W\times\mathbb S^{W-1}\setminus \cB_{|N-2k,A}}\left(T_{n-2k},\textsf x_{|n-2k}\right)\times \qquad\qquad\\
\nn\mathcal I_{\IR^W\times\mathbb S^{W-1}\setminus \cB_{|n-2k+1,A}}\left(T_{n-2k+1},\textsf x_{|n-1}\right)\times\qquad\qquad\\
\label{eq:logrk}
\quad\mathcal I_{(0,+\infty)\setminus \cB_{|\texttt r,n-2k,A}}\left(r_{n-2k}\right)\times \mathcal I_{(0,+\infty)\setminus \cB_{|\texttt r,n-2k+1,A}}\left(r_{n-2k+1}\right)\times\\
\nn\mathcal I_{(0,+\infty)\setminus \cB_{|\texttt r,n-2k}}\left(r_{n-2k}\right)\times \mathcal I_{(0,+\infty)\setminus \cB_{|\texttt r,n-2k+1}}\left(r_{n-2k+1}\right)\times\\
\nn\left[\log r_{|N-2k}+\log r_{|N-2k+1}\right],\quad k=1,...\qquad\qquad
\ee
The variables $\eta_{k,|}$ are independent,
\be
\nn\Var \left[\eta_{k,|}\right]\ge {c\over W} \qquad\qquad\\
\label{eq:Varlogr}\\
\nn\left|\eta_{k,|}\right|\le C\log W,\qquad\qquad
\ee
where $c,C$ are absolute constants.
\end{cor}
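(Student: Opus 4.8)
The plan is to work throughout conditionally on $\left(U_2,\dots,U_N\right)\notin\widehat{\cB}$ and to use the $\sigma$--algebra $\mathcal K:=\cF\left(U_2,\dots,U_N,\textsf x_{|1},\dots,\textsf x_{|N-1}\right)$ generated by the $U$'s and all the spherical directions. Two structural facts do the work. First, by Proposition~\ref{prop:condecoupling2}$(0)$ the radii $r_{|2},\dots,r_{|N}$ are independent given $\mathcal K$, and by Proposition~\ref{prop:condecoupling2}$(1)$ each $r_{|n-1}$ then has the super--exponential law $d\pr(r_{|n-1}\,|\,)$ of Proposition~\ref{prop:superexplogvar}. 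Second, each $\eta_{k,|}$ in \eqref{eq:logrk} is a Borel function of six indicators and of $\log r_{|N-2k}+\log r_{|N-2k+1}$, all attached to the two consecutive sites $N-2k$ and $N-2k+1$ only, and the index--pairs $\{N-2k,N-2k+1\}$, $k\ge1$, are pairwise disjoint. Independence of the $\eta_{k,|}$ then follows at once from the last assertion of Proposition~\ref{prop:condecoupling2}$(2)$ --- the variables attached to distinct sites are independent conditionally on $U_2,\dots,U_N$ --- since $\eta_{k,|}$ lives on the pair $\{N-2k,N-2k+1\}$ and these pairs are disjoint.

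For the uniform bound I would note that on the intersection of the supports of the six indicators of \eqref{eq:logrk} one has, for each $j\in\{N-2k,N-2k+1\}$, $r_{|j}\le A$ --- the first of the inequalities valid off $\cB_{|\texttt r,j,A}$, obtained in Proposition~\ref{prop:condecoupling}$(2)$ from $\|S_{j}\|\le A$ --- and $r_{|j}\ge c_1W^{-\textsf d}$, which is exactly \eqref{eq:radiuslowerb1}, valid off $\cB_{|\texttt r,j}$. Hence $|\log r_{|j}|\le C\log W$ there, so $|\eta_{k,|}|\le 2C\log W$; and $\eta_{k,|}=0$ whenever one of the indicators vanishes.

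For the variance, fix $k$ and let $j=j(k)\in\{N-2k,N-2k+1\}$ be the site of the pair singled out by the dichotomy in the last part of Proposition~\ref{prop:condecoupling2}$(2)$, applied with $(n-1,n-2)=(N-2k+1,N-2k)$, so that $d\pr(r_{|j}\,|\,)$ satisfies conditions $(I)$--$(III)$; let $j'$ be the other site. By Proposition~\ref{prop:superexplogvar}, $\Var[\log r_{|j}\mid\mathcal K]\ge cW^{-1}$, and off a set of conditional probability $\le Ce^{-cW}$ one has $|\log r_{|j}|\le C\log W$. I would then condition, in addition to $\mathcal K$, on $r_{|j'}$ and on the two ``big'' indicators $\mathcal I_{\setminus\cB_{|N-2k,A}},\mathcal I_{\setminus\cB_{|N-2k+1,A}}$, calling the resulting $\sigma$--algebra $\mathcal L$. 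Since each big indicator equals $1$ with conditional probability $\ge1-Ce^{-cW}$, adjoining it perturbs the conditional law of $r_{|j}$ by at most $O(e^{-cW})$ in total variation (here one uses that $r_{|j}\perp r_{|j'}$ given $\mathcal K$); on the $\mathcal L$--event $E$ where every ingredient of $\eta_{k,|}$ other than $r_{|j}$ equals $1$ resp.\ lies in its good range --- an event of conditional probability $\ge1-Ce^{-cW}$ --- one has $\eta_{k,|}=\mathcal I_{(0,\infty)\setminus(\cB_{|\texttt r,j,A}\cup\cB_{|\texttt r,j})}(r_{|j})\,(\log r_{|j}+\kappa)$ with $\kappa=\log r_{|j'}$ an $\mathcal L$--constant of size $|\kappa|\le C\log W$, while outside $E$ simply $\eta_{k,|}=0$. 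On $E$,
\[
\mathrm{sd}\bigl[\eta_{k,|}\mid\mathcal L\bigr]\ \ge\ \mathrm{sd}\bigl[\log r_{|j}\mid\mathcal L\bigr]-Ce^{-c'W}\ \ge\ \sqrt{cW^{-1}}-2Ce^{-c'W}\ \ge\ \tfrac12\sqrt{cW^{-1}},
\]
the $L^2$ error $Ce^{-c'W}$ being controlled because the excluded sets $\cB_{|\texttt r,j,A},\cB_{|\texttt r,j}$ have conditional probability $\le Ce^{-cW}$ and $\log r_{|j}$ keeps $L^2$--norm only $O(\log W)$ even when restricted to them --- the super--exponential density behaving like $r^{W-1}$ near the origin. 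Hence $\Var[\eta_{k,|}\mid\mathcal L]\ge cW^{-1}/4$ on $E$, and since the exceptional $\mathcal K$--sets (the $\widehat{\cB}_{\texttt x,\cdot}$, summable as $N\le W^{D_0}$) have total probability $\le Ce^{-cW}$, the law of total variance gives $\Var[\eta_{k,|}]\ge\expc\bigl[\Var[\eta_{k,|}\mid\mathcal L]\bigr]\ge cW^{-1}/5$.

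The step I expect to be the main obstacle is the control of the truncation inside the variance bound: one must be certain that restricting $r_{|j}$ to the complement of the negligible sets $\cB_{|\texttt r,j,A},\cB_{|\texttt r,j}$ and conditioning on the big indicators being $1$ does not destroy the $\sim W^{-1/2}$ fluctuation behind $\Var[\log r_{|j}]\gtrsim W^{-1}$. This rests on the quantitative concentration of Proposition~\ref{prop:elementary2g} --- the fluctuation sits on five intervals of length $\sim W^{-1/2}$, each of probability $\gtrsim1$ --- together with the $r^{W-1}$ behaviour of the density near $0$, which keeps $\log r_{|j}$ bounded in $L^2$ by $O(\log W)$ even conditionally on the excluded sets, so that the excluded mass is superpolynomially smaller than the scale of the fluctuation; a secondary bookkeeping point is that conditions $(I)$--$(III)$ of Proposition~\ref{prop:superexplogvar} transfer to the relevant conditional densities, which is what Proposition~\ref{prop:condecoupling2}$(2)$ provides.
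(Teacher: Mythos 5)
Your proof is correct and takes essentially the route the paper intends — the paper's own proof of this corollary is the one-line ``Combine Proposition~\ref{prop:condecoupling2} with Proposition~\ref{prop:superexplogvar}, get both estimates,'' and you have simply unfolded that combination: conditional independence of the radii (Propositions~\ref{prop:sitescondecoupling1}$(1)$ and \ref{prop:condecoupling2}$(0)$, $(2)$) gives independence of the $\eta_{k,|}$; the two-sided radius bounds $r\le A$ (off $\cB_{|\texttt r,\cdot,A}$, from Proposition~\ref{prop:condecoupling}$(2)$) and $r\ge c_1W^{-\textsf d}$ (off $\cB_{|\texttt r,\cdot}$, i.e.\ \eqref{eq:radiuslowerb1}) give $|\eta_{k,|}|\le C\log W$; and the dichotomy in Proposition~\ref{prop:condecoupling2}$(2)$ designates at each pair of consecutive sites the one whose conditional radius law satisfies $(I)$--$(III)$, so that Proposition~\ref{prop:superexplogvar} yields $\Var\ge cW^{-1}$, after which the law of total variance and the exponentially small measure of the excluded sets finish the job. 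The one place you are slightly more explicit than the paper — quantifying that conditioning on the big indicators and truncating to the good radius sets degrades $\mathrm{sd}[\log r_{|j}]$ by only $O(e^{-c'W})$, using the finite $L^2$/$L^4$ norm of $\log r$ near $0$ from the $r^{W-1}$ factor — is a genuine sharpening of the paper's elliptic wording but not a different argument, and the concern you flag at the end is exactly what the combination of Proposition~\ref{prop:elementary2g} and \eqref{eq:radiuslowerb1} is there to address.
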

\begin{proof}
Combine Proposition~\ref{prop:condecoupling2} with Proposition~\ref{prop:superexplogvar}, get both estimates.
\end{proof}
\begin{proof}[Proof of Theorem B] Use \eqref{eq:Greenedgetoedgesplitf} from Proposition~\ref{prop:condecoupling2}, write
\be
\nn \textsf L_{N,|}:=\log\left\|G_{[1,N]}(1;N)g\right\|=L+\sum_{2\le n\le N-1} \log r_{|n-1}=\\
\label{eq:Greenedgetoedgesplitfco nd}\\
\nn \hat L+\sum_{2\le 2k\le N-1} \eta_{k,|}\qquad\qquad\qquad\qquad
\ee
Here $\textsf L_{N,|}$ stands for the random variable 
$$\log\left\|G_{[1,N]}(1;N)g\right\|,\quad\text{conditioned on $\left(U_2, ..., U_N,\textsf x_{1,|},...,\textsf x_{N-1,|}\right)$},$$ 
and also on $r_{1,|}$ if $N$ is even,  $\hat L$ does not depend on the variables $\eta_{|k}$, see Corollary~\ref{cor:logrindep}. Use Corollary~\ref{cor:logrindep}, get
\be
\Var\left[\textsf L_{N,|}\right]=\sum_{2\le 2k\le N} \Var\left[\eta_{k,|}\right]\ge c_1N W^{-1}, 
\label{eq:GreenedgetoedgesplitfcondVar}
\ee
where $c_1$ is an absolute constant. It follows from Proposition~\ref{prop:condecoupling2} that the variables $\textsf L_{N,|}$ is conditioned on, run in the set with $\pr_N\ge 1-N\times W^{-3D}$ where $D\gg 1$ is an absolute constant. Take $ N\sim W^{cD}$, with an absolute constant $0<c\ll 1$. Then the probability estimate results 
$\pr_N\ge 1-W^{-2D}$. It follows also from Proposition~\ref{prop:condecoupling2} that each variable $r_{n,|}$ in the setting of Corollary~\ref{cor:logrindep}, runs in the set with $\pr_N\ge 1-Ce^{-cW}$ with absolute constants $c,C$. Combine this with \eqref{eq:GreenedgetoedgesplitfcondVar} and \eqref{eq:Greenedgetoedgesplitfco nd}, conclude
\be
\Var_\cG\left[\log\left\|G_{[1,N]}(1;N)g\right\|\right]\ge c_1N W^{-1}, 
\label{eq:GreenedgetoedgeVar}
\ee
where $\pr_N\cG\ge 1-W^{-D}$ and $\Var_\cG$ stands for the conditional variance against the set $\cG$. Note that it is exactly the scales setting in theorem B. That implies the statement in Theorem B.
\end{proof}

\section{Proofs of Theorem C.}\label{sec:thmC}
We derive Theorem C from the following lemma. The proof is straightforward, but discuss it for completeness.
\begin{lemma}\label{lem:greenorlowmoments} Take a random variable $\eta$. Assume
\be\label{eq:Bernsteicond}
\left|\expc\left[\eta-\mathfrak a\right]^{n}\right| \le n!
\sigma^2 H^{n-2},\quad \text{for any $n=2,...$},
\ee
where we denote for convenience $\expc[\eta]=\mathfrak a$, $\Var \xi=\sigma^2$ and $H>0$ is a constant. Then for $|t|\le c H^{-1}$, with absolute constant $0<c\ll 1$ holds
\begin{gather*}
\Big[\expc \big[e^{t\eta}\big]\Big]^2\le\expc \big[e^{2t\eta}\big]\times \Big(1-{t^2\sigma^2\over 4}\Big)
\end{gather*}
\end{lemma}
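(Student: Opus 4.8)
The plan is to expand both sides as power series in $t$ and compare coefficients. Write $M(t) = \expc[e^{t(\eta-\mathfrak a)}]$ for the centered moment generating function, so that $\expc[e^{t\eta}] = e^{t\mathfrak a}M(t)$ and the claimed inequality, after dividing by $e^{2t\mathfrak a}$, becomes $M(t)^2 \le M(2t)\bigl(1 - \tfrac{t^2\sigma^2}{4}\bigr)$. The first step is to use the Bernstein condition \eqref{eq:Bernsteicond} to get a clean bound on $M$: writing $M(t) = 1 + \sum_{n\ge 2}\frac{t^n}{n!}\expc[\eta-\mathfrak a]^n$, the hypothesis gives $\bigl|\frac{t^n}{n!}\expc[\eta-\mathfrak a]^n\bigr| \le \sigma^2 t^2 (|t|H)^{n-2}$, so for $|t|H \le \tfrac12$ the tail beyond the quadratic term is controlled by a geometric series: $\bigl|M(t) - 1 - \tfrac{t^2\sigma^2}{2}\bigr| \le \sigma^2 t^2 \sum_{n\ge 3}(|t|H)^{n-2} = \sigma^2 t^2 \cdot \frac{|t|H}{1-|t|H} \le 2\sigma^2 t^2 |t| H$. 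Thus $M(t) = 1 + \tfrac{t^2\sigma^2}{2} + R(t)$ with $|R(t)| \le 2\sigma^2 H |t|^3$.

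Next I would simply plug this expansion into both sides. For the left side, $M(t)^2 = 1 + t^2\sigma^2 + \bigl(\tfrac{t^2\sigma^2}{2}\bigr)^2 + 2R(t) + 2R(t)\tfrac{t^2\sigma^2}{2} + R(t)^2 = 1 + t^2\sigma^2 + O(\sigma^2 H |t|^3)$, where I absorb the $\tfrac14 t^4\sigma^4$ term too, using $\sigma \le $ (something): actually one must be slightly careful because $\sigma$ itself need not be small, but note the Bernstein condition with $n=2$ forces nothing, whereas for $n=3$ it gives $|\expc[\eta-\mathfrak a]^3|\le 6\sigma^2 H$, and more usefully one can bound $t^4\sigma^4$ against $t^2\sigma^2 \cdot t^2\sigma^2$ and use $t^2\sigma^2 \le$ const once we also observe $\sigma^2 \le \expc[(\eta-\mathfrak a)^2]$ is already $\sigma^2$; the honest route is to note $\sigma^2 t^2 = \sigma^2 t^2$ and $H^2 t^2 \le \tfrac14$, but $\sigma^2$ versus $H^2$ — here one uses that $\sigma^2 \le 2\sigma^2 = 2\sigma^2 H^0$... cleaner: the case $n=2$ of \eqref{eq:Bernsteicond} is vacuous, but applying it isn't needed; instead bound $\tfrac14 t^4\sigma^4 \le \tfrac14 (t^2\sigma^2)(t^2 H^2)\cdot(\sigma^2/H^2)$ — this is circular, so the right move is simply to keep $t^4\sigma^4$ as a genuine lower-order term by shrinking $c$: for $|t| \le cH^{-1}$ with $c$ small, $t^2\sigma^2$ need not be small, so I will \emph{not} drop it; rather I carry $M(t)^2 = 1 + t^2\sigma^2 + E_1$ and $M(2t)(1-\tfrac{t^2\sigma^2}{4}) = (1 + 2t^2\sigma^2 + E_2)(1 - \tfrac{t^2\sigma^2}{4}) = 1 + \tfrac74 t^2\sigma^2 - \tfrac12 t^4\sigma^4 + E_2(1-\tfrac{t^2\sigma^2}{4})$, where $E_2 = R(2t)$, $|E_2|\le 16\sigma^2 H|t|^3$, and $E_1 = \tfrac14 t^4\sigma^4 + 2R(t)(1+\tfrac{t^2\sigma^2}{2}) + R(t)^2$. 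The difference (RHS $-$ LHS) equals $\tfrac34 t^2\sigma^2 - \tfrac12 t^4\sigma^4 - \tfrac14 t^4\sigma^4 + (\text{error terms bounded by } C\sigma^2 H|t|^3 + C\sigma^4 t^4 + \dots)$, and the main positive term $\tfrac34 t^2\sigma^2$ dominates provided $\sigma^2 t^2 \le$ small (from $\sigma^2 \le$ ? ) — so the genuinely needed extra input is an a priori bound $\sigma^2 \le C H^2$, which does follow: taking $n=2$ is vacuous, but $n=3$ gives nothing on $\sigma$ alone; however one does \emph{not} need it if $|t|$ is small enough that $t^2\sigma^2$ is merely bounded, because then $t^4\sigma^4 \le (t^2\sigma^2)^2$ and the $\tfrac34$ coefficient still wins against $\tfrac34$ — no. The clean fix: observe $\tfrac34 t^2\sigma^2 - \tfrac34 t^4\sigma^4 = \tfrac34 t^2\sigma^2(1 - t^2\sigma^2)$, and we may shrink $c$ further so that additionally $t^2\sigma^2 \le \tfrac12$ whenever $|t|\le cH^{-1}$ — but that requires $\sigma/H$ bounded. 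I believe the intended reading is that $\sigma \le H$ is implicit (it is the standard normalization in Bernstein's inequality and consistent with the later application where $H = C\log W$ and $\sigma^2 \sim$ bounded), so I will state that as the hypothesis $\sigma\le H$ or simply absorb it. The main obstacle, then, is exactly this bookkeeping: making sure the positive $t^2\sigma^2$ term is not swamped, which forces either an a priori comparison of $\sigma$ and $H$ or a restriction $|t|\le c\min(H^{-1},\sigma^{-1})$; I would resolve it by the latter, noting $\sigma^{-1}\gtrsim H^{-1}$ is automatic in all intended applications, and then the final inequality follows by collecting terms with $c$ chosen small enough that $C(H|t| + \sigma^2 t^2) \le \tfrac14$.

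Concretely the steps in order: (1) expand $M(t)$, bound the super-quadratic tail by a geometric series to get $M(t) = 1 + \tfrac{t^2\sigma^2}{2} + R(t)$, $|R(t)|\le 2\sigma^2 H|t|^3$ for $|t|H\le \tfrac12$; (2) form $M(2t)(1-\tfrac{t^2\sigma^2}{4}) - M(t)^2$ and expand; (3) identify the leading term $\tfrac34 t^2\sigma^2$ and show all remaining terms are bounded in absolute value by $\tfrac12 t^2\sigma^2$ once $|t|\le cH^{-1}$ (using $\sigma \lesssim H$), hence the difference is $\ge \tfrac14 t^2\sigma^2 \ge 0$, which is the claim with room to spare. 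The only subtlety beyond routine algebra is step (3)'s accounting, which is why I flag it as the crux.
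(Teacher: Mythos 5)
Your approach is essentially the same as the paper's: expand the centered moment generating function $M(t)=\expc[e^{t(\eta-\mathfrak a)}]$ via the Bernstein hypothesis, observe the super-quadratic tail is a geometric series, and compare $M(t)^2$ to $M(2t)(1-t^2\sigma^2/4)$ after factoring out $e^{2t\mathfrak a}$. The paper bounds $M(t)\le 1+(1+c)t^2\sigma^2/2$ from above, $M(2t)\ge 1+2(1-c)t^2\sigma^2$ from below, and then controls the ratio; you form the difference directly. Either way works.

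The subtlety you flagged, and agonized over at length, is genuine and you resolved it correctly: the proof needs $t^2\sigma^2$ small, which does not follow from $|t|\le cH^{-1}$ alone. The paper's own proof says explicitly ``provided $|t|\le c\min(H^{-1},\sigma^{-1})$'', even though the statement of the lemma says only $|t|\le cH^{-1}$. This is a mismatch in the paper, not a defect in your reasoning; and, as you surmised, in the downstream application (Corollary~\ref{cor:classicalexpest} feeding Theorem C) one has $|\eta_{k,|}|\le C\log W$, hence $\sigma_k\le C\log W=H$, so the two hypotheses coincide there. So the honest way to state the lemma is with the hypothesis $|t|\le c\min(H^{-1},\sigma^{-1})$, or equivalently to add the standing assumption $\sigma\lesssim H$; you identified exactly this.

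One further remark in your favor: the paper's final chain passes through the intermediate quantity $\dfrac{\big(1+(1+c)t^2\sigma^2/2\big)^2}{1+(1+c)t^2\sigma^2}$, which equals $1+\dfrac{u^2}{1+2u}$ with $u=(1+c)t^2\sigma^2/2$, hence is $\ge 1$ and cannot be $\le 1-t^2\sigma^2/4$; that intermediate step is a slip. The first fraction $\dfrac{\big(1+(1+c)s/2\big)^2}{1+2(1-c)s}\le 1-s/4$ for $s=t^2\sigma^2$ small is what one should verify directly, and your difference-of-expansions route does exactly that without the detour. So your proposal is not only correct but cleaner at that point. The only thing I would ask you to tighten is the presentation: the long hedging paragraph should be replaced by the one-line observation that the hypothesis ought to read $|t|\le c\min(H^{-1},\sigma^{-1})$, after which the bookkeeping in your step~(3) is routine.
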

\begin{proof} 
Write the series expansion, combine with equation \eqref{eq:Bernsteicond}, get
\begin{gather*}
\expc \big[e^{t(\eta-\mathfrak a)}\big]=\sum_{p=0}^\infty{t^p\over p!}\expc\big[(\eta-\mathfrak a)^{p}\big]
\le 1+{t^2\sigma^2\over 2}+\sum_{p=3}^\infty t^p\sigma^2H^{p-2}\\
 \le 1+{t^2\sigma^2\over 2}+2Ht^{3}\sigma^2\le 1+(1+c){t^2\sigma^2\over 2}
\end{gather*}
provided $|t|\le c\min\big(H^{-1},\sigma^{-1}\big)$. 
Similarly,
\begin{gather*}
\expc \big[e^{2t(\eta-\mathfrak a)}\big]\ge 1+2(1-c)t^2\sigma^2 
\end{gather*}
Combine, get
\begin{gather*}
\Big[\expc \big[e^{t(\eta-\mathfrak a)}\big]\Big]^2\le\expc \big[e^{2t(\eta-\mathfrak a)}\big]\times{(1+(1+c){t^2\sigma^2\over 2})^2\over (1+2(1-c)t^2\sigma^2)}\le \\
\expc \big[e^{2t(\eta-\mathfrak a)}\big]\times{(1+(1+c){t^2\sigma^2\over 2})^2\over (1+(1+c)t^2\sigma^2)}\le \expc \big[e^{2t(\eta-\mathfrak a)}\big]\times\big(1-{t^2\sigma^2\over 4}\big),
\end{gather*}
Combine with
\begin{gather*}
\Big[\expc \big[e^{t(\eta-\mathfrak a)}\big]\Big]^2=\Big[\expc \big[e^{t\eta}\big]\Big]^2\times e^{-2t\mathfrak a}\\
 \expc \big[e^{2t(\eta-\mathfrak a)}\big]=\expc \big[e^{2t\eta}\big]\times e^{-2t\mathfrak a},
\end{gather*}
get the statement.
\end{proof}
\begin{cor}\label{cor:classicalexpest} Take independent random variables $\eta_k$, $k=1,...,N$. Denote $\sigma^2_k=Var \eta_k$. Assume that each $\eta_k$ obeys condition \eqref{eq:Bernsteicond} in Lemma~\ref{lem:greenorlowmoments} with the same $H$ for all $k$ and with $\sigma_k$ in the role of $\sigma$. Denote 
\begin{gather*}
\textsf L=\sum_{1\le k\le N}\eta_k
\end{gather*}
Then for $|t|\le c H^{-1}$
\be\label{eq:Bernsteinest}
\Big[\expc \big[e^{t\textsf L}\big]\Big]^2\le\expc \big[e^{2t\textsf L}\big]\times\prod_{1 \leq k \leq N}\Big(1-{t^2\sigma_k^2\over 4}\Big)
\ee

\end{cor}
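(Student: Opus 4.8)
The plan is to tensorize Lemma~\ref{lem:greenorlowmoments} over the independent coordinates $\eta_k$. First I would record the two factorizations that follow from independence:
\[
\expc\big[e^{t\textsf L}\big]=\prod_{1\le k\le N}\expc\big[e^{t\eta_k}\big],\qquad
\expc\big[e^{2t\textsf L}\big]=\prod_{1\le k\le N}\expc\big[e^{2t\eta_k}\big].
\]
Each factor here is a finite positive number for $|t|\le cH^{-1}$: positivity because $e^{t\eta_k}>0$, and finiteness because \eqref{eq:Bernsteicond} makes the exponential moment series used in the proof of Lemma~\ref{lem:greenorlowmoments} absolutely convergent.

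Next I would apply Lemma~\ref{lem:greenorlowmoments} to each $\eta_k$ separately. Since $\eta_k$ obeys \eqref{eq:Bernsteicond} with the common constant $H$ and with $\sigma_k$ in the role of $\sigma$, the lemma gives, for $|t|\le cH^{-1}$,
\[
\Big[\expc\big[e^{t\eta_k}\big]\Big]^2\le\expc\big[e^{2t\eta_k}\big]\Big(1-\frac{t^2\sigma_k^2}{4}\Big),\qquad k=1,\dots,N .
\]
Every quantity appearing in each of these inequalities is nonnegative --- in particular $1-t^2\sigma_k^2/4\ge 0$ is forced, since the left-hand side is a square and $\expc[e^{2t\eta_k}]>0$ --- so I may multiply the $N$ inequalities together. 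Combining the resulting product with the two factorizations above yields precisely \eqref{eq:Bernsteinest}.

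Honestly there is no hard step: all the analytic work sits in Lemma~\ref{lem:greenorlowmoments}, and the corollary is just its multiplicativity over independent summands. The one point worth a remark is that Lemma~\ref{lem:greenorlowmoments} must be valid on the whole range $|t|\le cH^{-1}$ uniformly in $k$, while its proof passes through the auxiliary restriction $|t|\le c\min(H^{-1},\sigma_k^{-1})$; this gap is only apparent, because \eqref{eq:Bernsteicond} already forces $\sigma_k\le\sqrt{24}\,H$ (by Cauchy--Schwarz $\sigma_k^4=(\expc[(\eta_k-\mathfrak a_k)^2])^2\le\expc[(\eta_k-\mathfrak a_k)^4]\le 4!\,\sigma_k^2H^2$), so $\min(H^{-1},\sigma_k^{-1})\ge(\sqrt{24}\,H)^{-1}$ and, after absorbing this absolute factor into the constant $c$, the lemma holds on the stated range.
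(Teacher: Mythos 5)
Your proof is correct and follows exactly the paper's (one-line) argument: factor the moment generating functions by independence, apply Lemma~\ref{lem:greenorlowmoments} termwise, and multiply. Your closing remark is a genuinely useful clarification the paper glosses over: the proof of Lemma~\ref{lem:greenorlowmoments} restricts to $|t|\le c\min(H^{-1},\sigma^{-1})$ while the statement (and hence the corollary, where uniformity over $k$ matters) asserts $|t|\le cH^{-1}$, and your observation that \eqref{eq:Bernsteicond} with $n=4$ forces $\sigma_k\le\sqrt{24}\,H$ is exactly what reconciles the two.
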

\begin{proof}
Use Lemma~\ref{lem:greenorlowmoments}, combine with independence, get \eqref{eq:Bernsteinest}.
\end{proof}
Finally, to prove Theorem C we need the following well-known form of the Wegner estimate
\be\label{eq:Wegner}
\expc \left[\left\|G_{[1,N]}\left(1;N\right)\right\|^t\right]\le C_1W^{C_2}, \quad\text{ for any $0<t<{1\over 2}$}
\ee
with absolute constants $C_1,C_2$. For reference see for instance ~\cite{Sch-09-Eigenvector}, Lemma W and equation $(1.7)$.
\begin{proof}[Proof of Theorem C] Fix
\begin{gather*}
\left(U_2, ..., U_N\right)\notin \widehat{\cB}
\end{gather*}
To simplify the notations us the $\pr$ and $\expc$ for the \emph{conditional probability and conditional expectation, at $\left(U_2, ..., U_N\right)$}.
Use \eqref{eq:Greenedgetoedgesplitfco nd} from the proof of Theorem B, write
\be\label{eq:Bernexpest}
\expc \left[e^{t\textsf L_{N,|}}\big|\right]=e^{t\hat L}\expc \left[e^{t\textsf L}\big|\right],\quad \textsf L=\sum_{2\le 2k\le N} \eta_{k,|}
\ee
Here $\expc \left[\cdot|\right]$ stands for the conditional expectation against the variables 
$$\left(U_2, ..., U_N,\textsf x_{1,|},...,\textsf x_{N-1,|}\right)$$
and maybe also on $r_{1,|}$ if $N$ is even.

Equation \eqref{eq:Varlogr} in the statement of Corollary~\ref{cor:logrindep} says
\be
\nn\Var \left[\eta_{k,|}\right]\ge {c\over W} \qquad\qquad\\
\label{eq:Bernsteinlogr}\\
\nn\left|\eta_{k,|}\right|\le C\log W,\qquad\qquad
\ee
That implies, in particular, condition \eqref{eq:Bernsteicond} in Lemma~\ref{lem:greenorlowmoments} with the same $H=C\log W$ for all $k$. Corollary~\ref{cor:classicalexpest} applies:
\be\label{eq:Bernsteinesft}
\Big[\expc \big[e^{t\textsf L}\big|\big]\Big]^2\le\expc \big[e^{2t\textsf L}\big|\big]\times\prod_{2 \leq 2k \leq N-1}\Big(1-{t^2\sigma_{|,k}^2\over 4}\Big),
\ee
for any $|t|\le c H^{-1}$, where $\sigma_{|,k}^2=\Var \left[\eta_{k,|}\right]$ stands for the conditional variance, $0<c\ll 1$ is an absolute constant. Combine with \eqref{eq:Bernsteinlogr}, get
\be\label{eq:BernsteinesfL}
\Big[\expc \big[e^{t\textsf L}\big|\big]\Big]^2\le\expc \big[e^{2t\textsf L}\big|\big]\times e^{-{c_1N\over W\left(\log W\right)^2}},
\ee
where $c_1>0$ is an absolute constant. Combine \eqref{eq:Bernexpest} with \eqref{eq:BernsteinesfL} and with Cauchy-Schwarz, get 
\begin{gather*}
\left[\expc \left[\left\|G_{[1,N]}\left(1;N\right)g\right\|^t\right]\right]^2=\left[\expc \left[e^{t\hat L}\expc \big[e^{t\textsf L}\big|\big]\right]\right]^2\le\\
\left[\expc \left[e^{t\hat L}\left[\expc \big[e^{2t\textsf L}\big|\big]\right]^{{1\over 2}}\times e^{-{c_1N\over 2W\left(\log W\right)^2}}\right]\right]^2=
e^{-{c_1N\over W\left(\log W\right)^2}}\left[\expc \left[e^{t\hat L}\left[\expc \big[e^{2t\textsf L}\big|\big]\right]^{{1\over 2}}\right]\right]^2\le\\
e^{-{c_1N\over W\left(\log W\right)^2}}\expc \left[e^{2t\hat L}\expc \big[e^{2t\textsf L}\big|\big]\right]=
e^{-{c_1N\over W\left(\log W\right)^2}}\expc \left[e^{2t\hat L}e^{2t\textsf L}\right]=\\
e^{-{c_1N\over W\left(\log W\right)^2}}\expc \left[\left\|G_{[1,N]}\left(1;N\right)g\right\|^{2t}\right]
\end{gather*}
Combine with \eqref{eq:Wegner}, get
\be\label{eq:Wegnerg}
\expc \left[\left\|G_{[1,N]}\left(1;N\right)g\right\|^t\right]\le C_1W^{C_2}\times e^{-{c_1N\over 2W\left(\log W\right)^2}}\le e^{-{cN\over W\left(\log W\right)^2}}
\ee
Take here $t={C\over \log W}$, combine with Chebishev inequality, get 
\begin{gather*}
\pr\left\{\left\|G_{[1,N]}\left(1;N\right)g\right\|>M\right\}=\pr\left\{\left\|G_{[1,N]}\left(1;N\right)g\right\|^t>M^t\right\}\le \\
{1\over M^t}\times\expc \left[\left\|G_{[1,N]}\left(1;N\right)g\right\|^t\right]\le {1\over M^t}\times e^{-{cN\over W\left(\log W\right)^2}},\quad \text {for any $M>0$},\\
\pr\left\{\left\|G_{[1,N]}\left(1;N\right)g\right\|>e^{-{cN\over 2W\left(\log W\right)^3}}\right\}\le e^{-{cN\over 2W\left(\log W\right)^2}}
\end{gather*}
Recall that here $\pr$ stands not for the original probability $\pr_N$, but for $\pr_N$ conditioned on the set $\cG_N$ which is the complement of $\widehat{\cB}$, where 
\be\label{eq:inverscrudef}
\pr_N\left[\widehat{\cB}\right]\le CNW^{D-1},
\ee
Recall that in Theorem C we assume that $W^2<N\le W^{D_0}$ where $D_0\gg 1$ is a constant. Take $D_0=c_0D$ with $0<c\ll 1$ being an absolute constant. Combine, run $g=e_j$ where $e_j$ is the standard basis in $R^W$, add up, get the statement in Theorem C.
\end{proof}

\appendix
\section{Matrix Variables Orthogonal Change.}\label{appedndixA}
\begin{defi}\label{defi:Qid}
$(1)$ Denote $\textsf M_{\textsf s}\left(W\right)$ the linear space of all real symmetric matrices. Denote $\textsf I(V)=\textsf V$ the identification 
\begin{gather*}
\textsf M_{\textsf s}\left(W\right)\ni V=(v_{p, q})_{1 \leq p,q \leq W}\to \textsf V:=(v_{p, q})_{1 \leq p\le q \leq W}:=\\
(v_{p, p})_{1 \leq p\leq W}\oplus(v_{p, q})_{1 \leq p<q \leq W}\in \IR^W\bigoplus\IR^{{W(W-1)\over 2}}=\IR^D,\quad D={W(W-1)\over 2}
\end{gather*}

\smallskip
$(2)$ Take an orthogonal matrix $Q$. Consider the maps
\begin{gather*}
\Phi_Q:\textsf M_{\textsf s}\left(W\right)\ni V\to QVQ^t\in \textsf M_{\textsf s},\\
\tilde \Phi_Q:\IR^D\ni \textsf V\to \textsf I\circ\Phi_Q\circ\textsf I^{-1}(\textsf V)\in \IR^D
\end{gather*}
\end{defi} 
\begin{lemma}\label{lem:elementary3} Using the notations in Definition~\ref{defi:Qid} the Jacobian $J_{\tilde\Phi_Q}$ of the map $\tilde\Phi_Q$ is $=1$.
\end{lemma}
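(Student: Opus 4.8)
The plan is to use that $\Phi_Q$, and therefore $\tilde\Phi_Q=\textsf I\circ\Phi_Q\circ\textsf I^{-1}$, is a \emph{linear} endomorphism, so that its Jacobian is the \emph{constant} $|\det \tilde\Phi_Q|$, and then to show this determinant has absolute value $1$. First I would record the elementary reduction: since $\textsf I\colon \textsf M_{\textsf s}\left(W\right)\to\IR^D$ is a linear isomorphism, $\tilde\Phi_Q$ is linear and $\det\tilde\Phi_Q=\det\Phi_Q$, because the determinant of a linear endomorphism is unchanged under conjugation by a linear isomorphism — equivalently, it does not depend on the basis of $\textsf M_{\textsf s}\left(W\right)$ chosen to represent $\Phi_Q$ as a matrix. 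In particular the "non-isometric" feature of the coordinates $(v_{p,q})_{p\le q}$, in which the diagonal and off-diagonal entries carry different weights relative to the trace form, is irrelevant: it affects neither the linearity of $\tilde\Phi_Q$ nor its determinant.

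The main step is then $|\det\Phi_Q|=1$, which I would get from orthogonality. Equip $\textsf M_{\textsf s}\left(W\right)$ with the inner product $\langle A,B\rangle:=\mathrm{Tr}(AB)$, which is positive definite on symmetric matrices. Using $Q^tQ=QQ^t=I$ and cyclicity of the trace,
\[
\langle \Phi_Q A,\Phi_Q B\rangle=\mathrm{Tr}\!\left(QAQ^t\,QBQ^t\right)=\mathrm{Tr}\!\left(QABQ^t\right)=\mathrm{Tr}(AB)=\langle A,B\rangle ,
\]
so $\Phi_Q$ is orthogonal for $\langle\cdot,\cdot\rangle$, hence $\det\Phi_Q\in\{+1,-1\}$. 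Combining with the previous paragraph, $J_{\tilde\Phi_Q}=|\det\tilde\Phi_Q|=|\det\Phi_Q|=1$, as claimed (here, as usual in the change-of-variables rule invoked throughout the paper, $J_{\tilde\Phi_Q}$ denotes the absolute value of the Jacobian determinant).

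As an alternative/cross-check I would note the classical identity $\det\bigl(X\mapsto AXA^t\bigr)\big|_{\textsf M_{\textsf s}(W)}=(\det A)^{W+1}$: the map $A\mapsto \det\bigl(X\mapsto AXA^t\bigr)$ is a character of $\mathrm{GL}(W)$ (since $X\mapsto ABX(AB)^t$ is the composition of $X\mapsto BXB^t$ and $X\mapsto AXA^t$), hence equals $(\det A)^m$ for some integer $m$, and evaluating on $A=tI$ (where the map is $t^2\cdot\mathrm{id}$ on a space of dimension $W(W+1)/2$) forces $m=W+1$; one can also read off $m=W+1$ directly from a diagonal $A=\mathrm{diag}(d_1,\dots,d_W)$, for which the map is diagonal in the coordinate basis with eigenvalue $d_id_j$ on the $(i,j)$ slot, so its determinant is $\prod_{i\le j}d_id_j=\prod_k d_k^{W+1}$. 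Taking $A=Q$ with $\det Q=\pm1$ recovers $\det\Phi_Q=(\det Q)^{W+1}\in\{\pm1\}$.

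I do not expect a genuine obstacle: the argument is short. The only point worth stating carefully is the one in the first paragraph, namely that passing through the particular — and non-isometric — identification $\textsf I$ introduces no normalization factor in the Jacobian, since a priori one might fear a weight coming from the diagonal entries of symmetric matrices; this is settled by basis-independence of the determinant.
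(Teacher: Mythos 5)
Your proof is correct, and the key observation — that $\Phi_Q$ is an isometry of the trace form on $\textsf M_{\textsf s}(W)$ — is the same as in the paper. Where you differ is in how you pass from that isometry to the unit Jacobian in the coordinates $(v_{p,q})_{p\le q}$. The paper works with the volume form: it transfers the trace form to $\IR^D$ via $\textsf I$, notes $\tilde\Phi_Q$ preserves the corresponding volume $\mathcal V_{HS}$, explicitly computes the ratio $\mathcal V_{HS}=2^{W(W-1)/4}\mathcal V$ against Lebesgue measure, and then observes the constant cancels. You short-circuit this by appealing to conjugation-invariance of the determinant of a linear map: since $\tilde\Phi_Q=\textsf I\circ\Phi_Q\circ\textsf I^{-1}$ with $\textsf I$ a linear isomorphism, $\det\tilde\Phi_Q=\det\Phi_Q$ regardless of which (possibly non-isometric) coordinates one uses, so the weight-$2$ discrepancy between diagonal and off-diagonal slots never enters. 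This is a genuine simplification — it removes a computation — and your remark that the non-isometric nature of $\textsf I$ is exactly the point one might worry about is the right thing to say explicitly. Your alternative via the character $A\mapsto\det\bigl(X\mapsto AXA^t\bigr)=(\det A)^{W+1}$ is a nice independent check and even gives the sign. Both arguments are sound.
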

\begin{proof} Consider the Frobenius inner product in $\textsf M_{\textsf s}\left(W\right)$ 
\begin{gather*}
\langle V,U\rangle_{F}=Tr\left(U^tV\right)
\end{gather*}
Use the identification $\textsf I(V)=\textsf V$, define the inner product
\begin{gather*}
\langle \textsf V,\textsf U\rangle_{F}:=\langle \textsf I^{-1}\textsf V,\textsf I^{-1}\textsf U\rangle_{HS}
\end{gather*}
The map $\Phi_Q(V)=Q^tVQ$, $V\in\textsf M_{\textsf s}\left(W\right)$  is linear and preserves the Frobenius inner product in $\textsf M_{\textsf s}\left(W\right)$.  Conclude $\tilde\Phi_Q(V)$ is a linear isometric map in the Euclidean space $\IR^D,\langle ,\rangle_{F}$. Enumerate the standard basis in $\IR^W$, resp.  $\IR^{{W(W-1)\over 2}}$, as $e_{p,p}$, $1\le p\le W$, resp. $e_{p,q}$, $1\le p<q\le W$. Use this enumeration for the standard basis in $\IR^D=\IR^W\bigoplus\IR^{{W(W-1)\over 2}}$, denote $v_{p,p}$, $1\le p\le W$, resp. $v_{p,q}$, $1\le p<q\le W$ the respective coordinates. Use the definitionof the trace, conclude that
\begin{gather*}
\langle \textsf V,\textsf V\rangle_{F}=\sum_{1\le p\le W}v^2_{p,p}+2\sum_{1\le p<q\le W}v^2_{p,q}
\end{gather*}
The map $\tilde\Phi_Q(V)$ preserves the norm $\left\|\textsf V\right\|_{F}:=\sqrt{\langle \textsf V,\textsf V\rangle_{F}}$. 
In particular, the map preserves the volume $\mathcal {V}_{HS}$ in the Euclidean space $\IR^D,\langle ,\rangle_{HS}$. Compare the inner product $\langle ,\rangle_{HS}$ against the standard inner product $\langle ,\rangle$ in $\IR^D$ conclude $\mathcal {V}_{HS}=C(W)\mathcal {V}$, where $\mathcal {V}$ stands for the standard volume in $\IR^D$, $C(W)=2^{{W(W-1)\over 4}}$. 
Conclude the map $\tilde\Phi_Q(V)$ preserves the standard volume in $\IR^D$, $\tilde J_{\Phi_Q}=1$.

\end{proof}

\end{document}